\newcommand\ie{\textit{i.e.,}}
\newcommand\eg{\textit{e.g.,}}
\newcommand{\norm}[1]{\left\lVert#1\right\rVert}
\newcommand{\red}{\textcolor{red}}
\newcommand{\beq}{\begin{equation}}
\newcommand{\eeq}{\end{equation}}
\newcommand{\beqnn}{\begin{equation*}}
\newcommand{\eeqnn}{\end{equation*}}
\newcommand{\beqy}{\begin{eqnarray}}
\newcommand{\eeqy}{\end{eqnarray}}
\newcommand{\beqynn}{\begin{eqnarray*}}
\newcommand{\eeqynn}{\end{eqnarray*}}
\newcommand{\bit}{\begin{itemize}}
\newcommand{\eit}{\end{itemize}}
\newcommand{\ben}{\begin{enumerate}}
\newcommand{\een}{\end{enumerate}}
\newcommand{\bex}{\begin{example}}
\newcommand{\eex}{\end{example}}
\newcommand{\trace}{\mathrm{trace}}
\newcommand{\balg}[1]{\begin{algorithm} \caption{#1}}
\newcommand{\ealg}{\end{algorithm}}
\newcommand{\balgc}{\begin{algorithmic}[1]}
\newcommand{\ealgc}{\end{algorithmic}}
\newcommand{\bary}{\begin{array}}
\newcommand{\eary}{\end{array}}
\newcommand{\bmx}{\begin{bmatrix}}
\newcommand{\emx}{\end{bmatrix}}
\newcommand{\bsmx}{\left[\begin{smallmatrix}}
\newcommand{\esmx}{\end{smallmatrix}\right]}
\newcommand{\bmxc}[1]{\left[\begin{array}{@{}#1@{}}}
\newcommand{\emxc}{\end{array}\right]}
\newcommand{\bcn}{\begin{center}}
\newcommand{\ecn}{\end{center}}
\newcommand{\Rbb}{{\mathbb{R}}}
\providecommand{\norm}[1]{\lVert#1\rVert}
\newenvironment{theorem}[2][Theorem]{\begin{trivlist}
		\item[\hskip \labelsep {\bfseries #1}\hskip \labelsep {\bfseries #2.}]}{\end{trivlist}}
\newenvironment{lemma}[2][Lemma]{\begin{trivlist}
		\item[\hskip \labelsep {\bfseries #1}\hskip \labelsep {\bfseries #2.}]}{\end{trivlist}}
\newenvironment{proposition}[2][Proposition]{\begin{trivlist}
		\item[\hskip \labelsep {\bfseries #1}\hskip \labelsep {\bfseries #2.}]}{\end{trivlist}}
\def\eqref#1{equation~\ref{#1}}
\def\1{\bm{1}}
\DeclareMathAlphabet{\mathsfit}{\encodingdefault}{\sfdefault}{m}{sl}
\SetMathAlphabet{\mathsfit}{bold}{\encodingdefault}{\sfdefault}{bx}{n}
\DeclareMathOperator*{\argmax}{arg\,max}
\newtheorem{definition}{Definition}
\title{When Do Graph Neural Networks Help with Node Classification? Investigating the Impact of Homophily Principle on Node Distinguishability}
\author{
Sitao Luan$^{1,2}$, Chenqing Hua$^{1,2}$, Minkai Xu$^4$, Qincheng Lu$^{1}$, Jiaqi Zhu$^{1}$,\\ \textbf{Xiao-Wen Chang$^{1,\dagger}$, Jie Fu$^{2,5,\dagger}$, Jure Leskovec$^{4,\dagger}$, Doina Precup$^{1,2,3,\dagger}$} \\
\{sitao.luan@mail, chenqing.hua@mail, qincheng.lu@mail, jiaqi.zhu@mail, chang@cs, \\dprecup@cs\}.mcgill.ca, \{minkai, jure\}@cs.stanford.edu, jiefu@ust.hk\\
$^1$McGill University; $^2$ Mila - Quebec Artificial Intelligence Institute; $^3$Google DeepMind; \\
$^4$Stanford University; $^5$HKUST; $^\dagger$ Corresponding Authors\\
}
\begin{document}

\maketitle
\vspace{-0.5cm}
\begin{abstract}
\vspace{-0.3cm}
Homophily principle, \ie{} nodes with the same labels are more likely to be connected, has been believed to be the main reason for the performance superiority of Graph Neural Networks (GNNs) over Neural Networks on node classification tasks. Recent research suggests that, even in the absence of homophily, the advantage of GNNs still exists as long as nodes from the same class share similar neighborhood patterns \cite{ma2021homophily}. However, this argument only considers intra-class Node Distinguishability (ND) but neglects inter-class ND, which provides incomplete understanding of homophily on GNNs. In this paper, we first demonstrate such deficiency with examples and argue that an ideal situation for ND is to have smaller intra-class ND than inter-class ND. To formulate this idea and study ND deeply, we propose Contextual Stochastic Block Model for Homophily (CSBM-H) and define two metrics, Probabilistic Bayes Error (PBE) and negative generalized Jeffreys divergence, to quantify ND. With the metrics, we visualize and analyze how graph filters, node degree distributions and class variances influence ND, and investigate the combined effect of intra- and inter-class ND. Besides, we discovered the mid-homophily pitfall, which occurs widely in graph datasets. Furthermore, we verified that, in real-work tasks, the superiority of GNNs is indeed closely related to both intra- and inter-class ND regardless of homophily levels. Grounded in this observation, we propose a new hypothesis-testing based performance metric beyond homophily, which is non-linear, feature-based and can provide statistical threshold value for GNNs' the superiority. Experiments indicate that it is significantly more effective than the existing homophily metrics on revealing the advantage and disadvantage of graph-aware modes on both synthetic and benchmark real-world datasets.
\end{abstract}
\vspace{-0.5cm}
\section{Introduction}
\vspace{-0.3cm}
\label{sec:introduction}
Graph Neural Networks (GNNs) have gained popularity in recent years as a powerful tool for graph-based machine learning tasks. By combining graph signal processing and convolutional neural networks, various GNN architectures have been proposed  \cite{kipf2016classification,hamilton2017inductive,velivckovic2017attention,luan2019break,hua2022high}, and have been shown to outperform traditional neural networks (NNs) in tasks such as node classification (\textbf{NC}), graph classification, link prediction and graph generation. 
The success of GNNs is believed to be rooted in the homophily principle (assumption) \cite{mcpherson2001birds}, which states that connected nodes tend to have similar attributes \cite{hamilton2020graph}, providing extra useful information to the aggregated features over the original node features. Such relational inductive bias is thought to be a major contributor to the superiority of GNNs over NNs on various tasks \cite{battaglia2018relational}. On the other hand, the lack of homophily, \ie{} heterophily, is considered as the main cause of the inferiority of GNNs on heterophilic graphs, because nodes from different classes are connected and mixed, which can lead to indistinguishable node embeddings, making the classification task more difficult for GNNs \cite{zhu2020beyond, zhu2020graph, luan2022complete}. Numerous models have been proposed to address the heterophily challenge lately \cite{pei2020geom,zhu2020beyond, zhu2020graph,luan2022complete,bo2021beyond,lim2021new,chien2021adaptive,yan2021two,he2021bernnet,luan2021heterophily,li2022finding,wang2022acmp,luan2022revisiting}.

Recently, both empirical and theoretical studies indicate that the relationship between homophily principle and GNN performance is much more complicated than "homophily wins, heterophily loses" and the existing homophily metrics cannot accurately indicate the superiority of GNNs  \cite{ma2021homophily,luan2022revisiting, luan2022complete}. For example, the authors in \cite{ma2021homophily} stated that, as long as nodes within the same class share similar neighborhood patterns, their embeddings will be similar after aggregation. They provided experimental evidence and theoretical analysis, and concluded that homophily may not be necessary for GNNs to distinguish nodes. The paper \cite{luan2022revisiting} studied homophily/heterophily from post-aggregation node similarity perspective and found that heterophily is not always harmful, which is consistent with \cite{ma2021homophily}. Besides, the authors in \cite{luan2022complete} have proposed to use high-pass filter to address some heterophily cases, which is adopted in \cite{chien2021adaptive,bo2021beyond} as well. They have also proposed aggregation homophily, which is a linear feature-independent performance metric and is verified to be better at revealing the performance advantages and disadvantages of GNNs than the existing homophily metrics  \cite{pei2020geom,zhu2020beyond,lim2021new}. Moreover, \cite{chen2023exploiting} has investigated heterophily from a neighbor identifiable perspective and stated that heterophily can be helpful for NC when the neighbor distributions of intra-class nodes are identifiable. 

Inspite that the current literature on studying homophily principle provide the profound insights, they are still deficient: 1. \cite{ma2021homophily,chen2023exploiting} only consider intra-class node distinguishability (\textbf{ND}), but ignore inter-class ND; 2. \cite{luan2022revisiting} does not show when and how high-pass filter can help with heterophily problem; 3. There is a lack of a non-linear, feature-based performance metric which can leverage richer information to provide an \textbf{accurate statistical threshold value} to indicate whether GNNs are really needed on certain task or not.

To address those issues, in this paper: 1. We show that, to comprehensively study the impact of homophily on ND, one needs to consider both intra- and inter-class ND and an ideal case is to have smaller intra-class ND than inter-class ND; 2. To formulate this idea, we propose Contextual Stochastic Block Model for Homophily (CSBM-H) as the graph generative model. It incorporates an explicit parameter to manage homophily levels, alongside class variance parameters to control intra-class ND, and node degree parameters which are important to study homophily \cite{ma2021homophily, yan2021two}; 3. To quantify ND of CSBM-H, we propose and compute two ND metrics, Probabilistic Bayes Error (\textbf{PBE}) and Negative Generalized Jeffreys Divergence ($D_\text{NGJ}$), for the optimal Bayes classifier of CSBM-H. Based on the metrics, we can analytically study how intra- and inter-class ND impact ND together. We visualize the relationship between PBE, $D_\text{NGJ}$ and homophily levels and discuss how different graph filters (full-, low- and high-pass filters), class variances and node degree distributions will influence ND in details; 4. In practice, we verify through hypothesis testing that the performance superiority of GNNs is indeed related to whether intra-class ND is smaller than inter-class ND, regardless of homophily levels. Based on this conclusion and the p-values of hypothesis testing, we propose Classifier-based Performance Metric (CPM), a new non-linear feature-based metric that can provide statistical threshold values. Experiments show that CPM is significantly more effective than the existing homophily metrics on predicting the superiority of graph-aware models over graph-agnostic.
\vspace{-0.3cm}
\section{Preliminaries}
\vspace{-0.3cm}
\label{sec:prelimiary_notation}
We use \textbf{bold} font for vectors (\eg{}$\bm{v}$) and define a connected graph $\mathcal{G}=(\mathcal{V},\mathcal{E})$, where $\mathcal{V}$ is the set of nodes with a total of $N$ elements, $\mathcal{E}$ is the set of edges without self-loops. $A$ is the symmetric adjacency matrix with $A_{i,j}=1$ if there is an edge between nodes $i$ and $j$, otherwise $A_{i,j}=0$. 
$D$ is the diagonal degree matrix of the graph, with $D_{i,i} = d_i = \sum_j A_{i,j}$. The neighborhood set $\mathcal{N}_i$ of node $i$ is defined as $\mathcal{N}_i=\{j: e_{ij} \in \mathcal{E}\}$. A graph signal is a vector in $\mathbb{R}^N$, whose $i$-th entry is a feature of node $i$.  Additionally, we use ${X} \in \mathbb{R}^{N\times F_h}$ to denote the feature matrix, whose columns are graph signals and $i$-th row ${X_{i,:}} = \bm{x}_i^\top$ is the feature vector of node $i$ (\ie{} the full-pass (FP) filtered signal). The label encoding matrix is $Z\in \mathbb{R}^{N\times C}$, where $C$ is the number of classes, and its $i$-th row $Z_{i,:}$ is the one-hot encoding of the label of node $i$. We denote $z_i = \argmax_{j} Z_{i,j} \in \{1,2,\dots C\}$. The indicator function $\bm{1}_B$ equals 1 when event $B$ happens and 0 otherwise. 

For nodes $i,j \in \mathcal{V}$, if $z_i=z_j$, then they are considered as \textit{intra-class nodes}; if $z_i \neq z_j$, then they are considered to be \textit{inter-class nodes}. Similarly, an edge $e_{i,j}\in \mathcal{E}$  is considered to be an \textit{intra-class edge} if $z_i = z_j$, and an \textit{inter-class edge} if $z_i \neq z_j$.
\vspace{-0.2cm}
\subsection{Graph-aware Models and Graph-agnostic Models} 
\vspace{-0.2cm}
A network that includes the feature aggregation step according to graph structure is called graph-aware (\textbf{G-aware}) model, \eg{} GCN \cite{kipf2016classification}, SGC \cite{wu2019simplifying}; A network that does not use graph structure information is called graph-agnostic (\textbf{G-agnostic}) model, such as Multi-Layer Perceptron with 2 layers (MLP-2) and MLP-1. A G-aware model is often coupled with a G-agnostic model because when we remove the aggregation step in G-aware model, it becomes exactly the same as its coupled G-agnostic model, \eg{} GCN is coupled with MLP-2 and SGC-1 is coupled with MLP-1 as below, 
\begin{equation}
\begin{aligned}
    \label{eq:gcn_original}
   &\textbf{GCN: } Y = \text{Softmax} (\hat{A}_\text{sym} \; \text{ReLU} (\hat{A}_\text{sym} {X} W_0 ) \; W_1 ),\;\; \textbf{MLP-2: } Y = \text{Softmax} (  \text{ReLU} (  {X} W_0 ) \; W_1 )\\
   &\textbf{SGC-1: } Y = \text{Softmax} (\hat{A}_\text{sym}  {X} W_0 ), 
   \;\; \textbf{MLP-1: } Y = \text{Softmax} ( {X} W_0 )
   \end{aligned}
\end{equation}
where $\hat{A}_\text{sym} = \tilde{D}^{-1/2} \tilde{A} \tilde{D}^{-1/2}$, $\tilde{A} \equiv A+I$ and $\tilde{D} \equiv D+I$; $W_0 \in \Rbb^{F_0 \times F_1}$ and $W_1 \in \Rbb^{F_1\times O}$ are learnable parameter matrices. For simplicity, we denote $y_i = \argmax_j Y_{i,j} \in \{1,2,\dots C\}$. 
The random walk renormalized matrix $\hat{A}_{\text{rw}} = \tilde{D}^{-1} \tilde{A}$ can also be applied to GCN, which is essentially a mean aggregator commonly used in some spatial-based GNNs \cite{hamilton2017inductive}. 
To bridge spectral and spatial methods, we use $\hat{A}_{\text{rw}}$ in the theoretical analysis, but 
\textbf{self-loops are not added to the adjacency matrix} to maintain consistency with previous literature \cite{ma2021homophily,luan2022revisiting}.

To address the heterophily challenge, high-pass (HP) filter \cite{ekambaram2014graph}, such as $I-\hat{A}_{\text{rw}}$, is often used to replace low-pass (LP) filter \cite{maehara2019revisiting} $\hat{A}_{\text{rw}}$ in GCN \cite{bo2021beyond, chien2021adaptive,luan2022revisiting}. In this paper, we use $\hat{A}_{\text{rw}}$ and $I-\hat{A}_{\text{rw}}$ as the LP and HP operators, respectively. The LP and HP filtered feature matrices are represented as $H=\hat{A}_{\text{rw}} X$ and $H^\text{HP} = (I-\hat{A}_{\text{rw}})X$. For simplicity, we denote $\bm{h}_i = (H_{i,:})^\top, \bm{h}_i^\text{HP} = (H_{i,:}^\text{HP})^\top$.

\textbf{To measure how likely the G-aware model can outperform its coupled G-agnostic model before training them} (\ie{} if the aggregation step according to graph structure is helpful for node classification or not), a lot of homophily metrics have been proposed and we will introduce the most commonly used ones in the following subsection.
\vspace{-0.2cm}
\subsection{Homophily Metrics}
\vspace{-0.2cm}
\label{sec:homophily_metrics}
The homophily metric is a way to describe the relationship between node labels and graph structure. We introduce five commonly used homophily metrics: edge homophily \cite{abu2019mixhop,zhu2020beyond}, node homophily \cite{pei2020geom}, class homophily \cite{lim2021new}, generalized edge homophily \cite{jin2022raw} and aggregation homophily \cite{luan2022revisiting}, adjusted homophily \cite{platonov2022characterizing} and label informativeness \cite{platonov2022characterizing} as follows:
 \vspace{-0.2cm}
\begin{equation}
\begin{aligned}
\label{eq:definition_homophily_metrics}
& \resizebox{0.94\hsize}{!}{$\text{H}_\text{edge}(\mathcal{G}) = \frac{\big|\{e_{uv} \mid e_{uv}\in \mathcal{E}, Z_{u,:}=Z_{v,:}\}\big|}{|\mathcal{E}|},\;  \text{H}_\text{node}(\mathcal{G}) = \frac{1}{|\mathcal{V}|} \sum_{v \in \mathcal{V}}  \text{H}_\text{node}^v= \frac{1}{|\mathcal{V}|} \sum_{v \in \mathcal{V}} 
    \frac{\big|\{u \mid u \in \mathcal{N}_v, Z_{u,:}=Z_{v,:}\}\big|}{d_v},$} \\
&\resizebox{0.94\hsize}{!}{$\text{H}_\text{class}(\mathcal{G}) \!=\! \frac{1}{C\!-\!1} \sum_{k=1}^{C}\bigg[h_{k}
    \!-\! \frac{\big|\{v \!\mid\! Z_{v,k} \!=\! 1 \}\big|}{N}\bigg]_{+},\; \text{where } h_{k}\! =\! \frac{\sum_{v \in \mathcal{V}, Z_{v,k} = 1 } \big|\{u \!\mid\!  u \in \mathcal{N}_v, Z_{u,:}\!=\!Z_{v,:}\}\big| }{\sum_{v \in \{v|Z_{v,k}=1\}} d_{v}},$}  \\
&\resizebox{0.94\hsize}{!}{$\text{H}_\text{GE} (\mathcal{G}) = \frac{\sum\limits_{(i,j) \in \mathcal{E}} \text{cos}(\bm{x}_{i}, \bm{x}_{j})}{|\mathcal{E}|}, \; \text{H}_{\text{agg}}(\mathcal{G}) =  \frac{1}{\left| \mathcal{V} \right|} \times \left| \left\{v   \,\big| \, \mathrm{Mean}_u  \big( \{S(\hat{A},Z)_{v,u}^{Z_{u,:}=Z_{v,:}} \}\big) \geq  \mathrm{Mean}_u\big(\{S(\hat{A},Z)_{v,u}^{Z_{u,:} \neq Z_{v,:}}  \} \big) \right\} \right| ,$}\\
& \text{H}_\text{adj}=\frac{\text{H}_\text{edge} - \sum_{c=1}^C \bar{p}_c^2}{1-\sum_{k=1}^C \bar{p}_c^2},\; \text{LI} = -\frac{\sum_{c_1, c_2} p_{c_1, c_2 } \log \frac{p_{c_1, c_2}}{\bar{p}_{c_1} \bar{p}_{c_2}}}{\sum_c \bar{p}_c \log \bar{p}_c} = 2-\frac{\sum_{c_1, c_2} p_{c_1, c_2} \log p_{c_1, c_2}}{\sum_c \bar{p}_c \log \bar{p}_c}
\end{aligned}
\end{equation}
where $\text{H}_\text{node}^v$ is the local homophily value for node $v$; $[a]_{+}=\max (0, a)$; $h_{k}$ is the class-wise homophily metric \cite{lim2021new}; $\mathrm{Mean}_u\left(\{\cdot\}\right)$ takes the average over $u$ of a given multiset of values or variables and $S(\hat{A},Z)=\hat{A}Z(\hat{A}Z)^\top$ is the post-aggregation node similarity matrix; $D_c=\sum_{v: z_v=c} d_v, \bar{p}_c = \frac{D_c}{2|\mathcal{E}|}$, $p_{c_1, c_2} = \sum_{(u, v) \in \mathcal{E}} \frac{\bm{1} \left\{z_u=c_1, z_v=c_2 \right\}}{2|\mathcal{E}|}, c, c_1, c_2 \in \{1,\dots,C\}$. 

These metrics all fall within the range of $[0,1]$, with a value closer to $1$ indicating strong homophily and implying that G-aware models are more likely to outperform its coupled G-agnostic model, and vice versa. However, the current homophily metrics are almost all linear, feature-independent metrics which cannot provide a threshold value \cite{luan2022revisiting} for the superiority of G-aware model and fail to give an accurate measurement of node distinguishability (ND) . In the following section, we focus on quantifying the ND of graph models with with homophily levels and analyzing their relations. 

\begin{wrapfigure}{R}{0.34\textwidth}
  \begin{center}
    \includegraphics[width=1.15\textwidth]{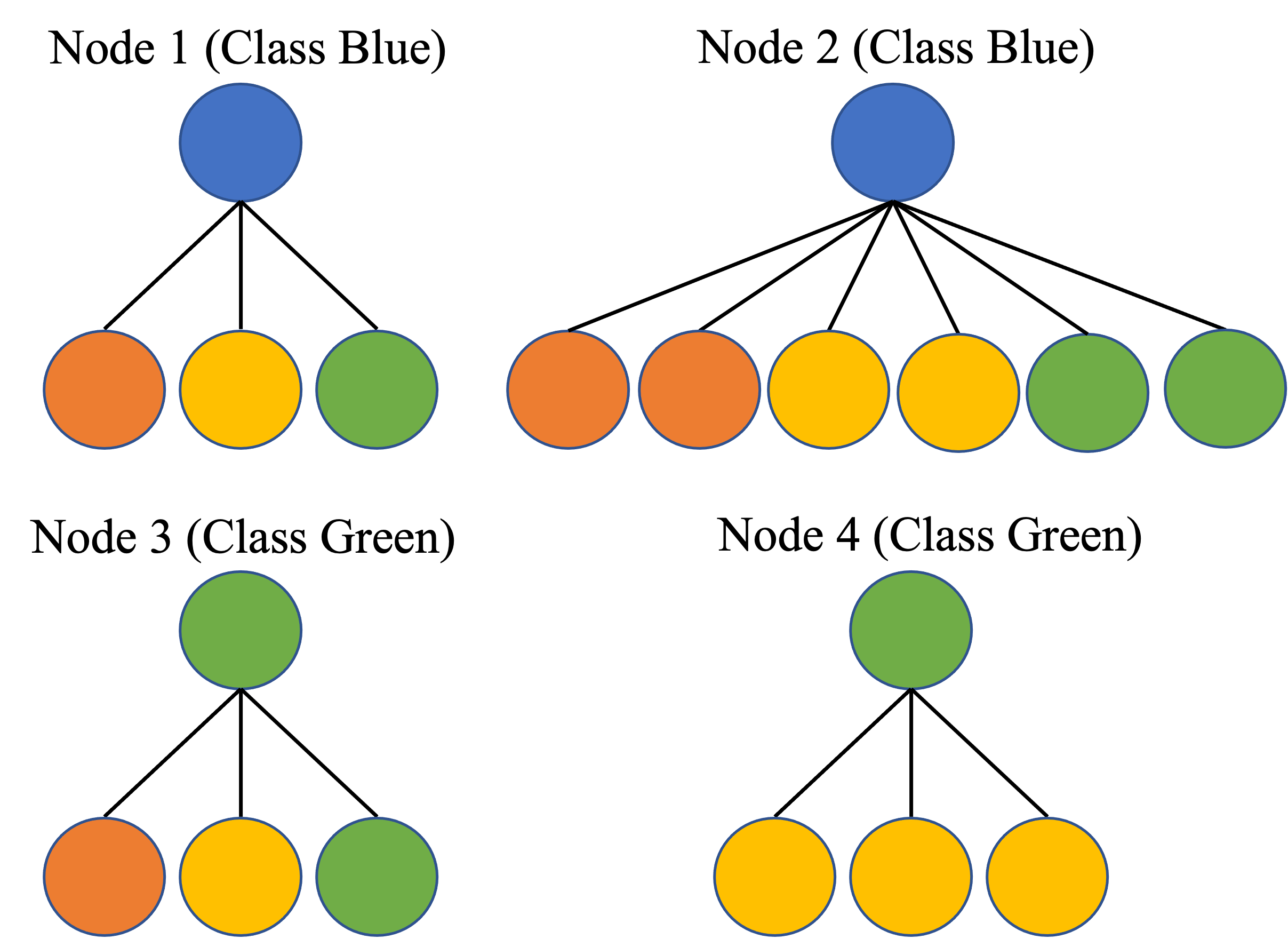}
  \end{center}
  \caption{Example of intra- and inter-class node distinguishability.
  }
  \label{fig:example_intra_inter_class_embeddings}
\end{wrapfigure}

\vspace{-0.3cm}
\section{Analysis of Homophily on Node Distinguishability (ND)}
\vspace{-0.2cm}
\label{sec:analysis_node_distinguishability}

\subsection{Motivation}
\vspace{-0.2cm}
\label{sec:motivation}

\paragraph{The Problem in Current Literature} Recent research has shown that heterophily does not always negatively impact the embeddings of intra-class nodes, as long as their neighborhood patterns "corrupt in the same way" \cite{ma2021homophily,chen2023exploiting}. For example, in Figure~\ref{fig:example_intra_inter_class_embeddings}, nodes \{1,2\} are from class blue and both have the same heterophilic neighborhood patterns. As a result, their aggregated features will still be similar and they can be classified into the same class.

However, this is only partially true for ND if we forget to discuss inter-class ND, \eg{}  node 3 in Figure~\ref{fig:example_intra_inter_class_embeddings} is from class green and has the same neighborhood pattern ($1/3$ orange, $1/3$ yellow and $1/3$ green) as nodes \{1,2\}, which means the inter-class ND will be lost after aggregation. This highlights the necessity for careful consideration of both intra- and inter-class ND when evaluating the impact of homophily on the performance of GNNs and an ideal case 
for NC would be node \{1,2,4\}, where we have smaller intra-class "distance" than inter-class "distance". We will formulate the above idea in this section and verify if it really relates to the performance of GNNs in section \ref{sec:empirical_study}. In the following subsection, we will propose a toy graph model, on which we can study the relationship between homophily and ND directly and intuitively, and analyze intra- and inter-class ND analytically.

\vspace{-0.3cm}
\subsection{CSBM-H and Optimal Bayes Classifier}
\vspace{-0.2cm}
\label{sec:csbmh_analysis}
In order to have more control over the assumptions on the node embeddings, we consider the Contextual Stochastic Block Model (CSBM) \cite{deshpande2018contextual}. It is a generative model that is commonly used to create graphs and node features, and it has been widely adopted to study the behavior of GNNs \cite{tsitsulin2022synthetic, baranwal2021graph, wei2022understanding}. To investigate the impact of homophily on ND, the authors in \cite{ma2021homophily} simplify CSBM to the two-normal setting, where the node features $X$ are assumed to be sampled from two normal distributions and intra- and inter-class edges are generated according to two separate parameters. This simplification does not lose much information about CSBM, but 1. it does not include an explicit homophily parameter to study homophily directly and intuitively; 2. it does not include class variance parameters to study intra-class ND; 3. the authors do not rigorously quantify ND.

In this section, we introduce the Contextual Stochastic Block Model for Homophily/Heterophily (CSBM-H), which is a variation of CSBM that incorporates an explicit homophily parameter $h$ for the two-normal setting and also has class variance parameters $\sigma_0^2,\sigma_1^2$ to describe the intra-class ND. We then derive the optimal Bayes classifier ($\text{CL}_\text{Bayes}$) and negative generalized Jeffreys divergence for CSBM-H, based on which we can quantify and investigate ND for CSBM-H.

\noindent \textbf{CSBM-H($\bm{\mu}_0,\bm{\mu}_1,\sigma_0^2 I,\sigma_1^2 I,{d}_0,{d}_1,h$)}\footnote{This implies that we generate undirected graphs. See Appendix \ref{appendix:directed_or_undirected} for the discussion of directed vs. undirected graphs. See \ref{appendix:extend_csbmh_general_settings} for the discussion on how to extend CSBM-H to more general settings.} \quad The generated graph consists of two disjoint sets of nodes, $i \in \mathcal{C}_0$ and $j \in \mathcal{C}_1$, corresponding to the two classes. The features of each node are generated independently, with $\bm{x}_i$ generated from $N(\bm{\mu}_0, \sigma_0^2 I)$ and $\bm{x}_j$ generated from $N(\bm{\mu}_1, \sigma_1^2 I)$, where $\bm{\mu}_0, \bm{\mu}_1 \in \mathbb{R}^{F_h}$ and ${F_h}$ is the dimension of the embeddings. The degree
of nodes in $\mathcal{C}_0$ and $\mathcal{C}_1$ are ${d}_0,{d}_1\in \mathbb{N}$ respectively. For $i \in \mathcal{C}_0$, its neighbors are generated by independently sampling from $h\cdot {d}_0$ intra-class nodes and $(1-{h})\cdot {d}_0$ inter-class nodes \footnote{To avoid unnecessary confusion: we relax $hd_0$ and $(1-h)d_0$ to be continuous values so that the visualization in the following sections are more readable and intuitive, especially to show the intersections of the curves.}. The neighbors of $j \in \mathcal{C}_1$ are generated in the same way. As a result, the FP (full-pass), LP and HP filtered features are generated as follows,
\vspace{-0.1cm}
\begin{equation}
\label{eq:two_normal_settings}
    \begin{split}
    i \in \mathcal{C}_0: \, & \bm{x}_i \sim N(\bm{\mu}_0,\sigma_0^2 I);\;
    \bm{h}_i \sim N(\tilde{\bm{\mu}}_0, \tilde{\sigma}_0^2  I), \ \bm{h}_i^{\text{HP}} \sim N\left(\tilde{\bm{\mu}}_0^{\text{HP}}, (\tilde{\sigma}_0^{\text{HP}})^2  I \right),    \\
     j \in \mathcal{C}_1: \,  & \bm{x}_j \sim N(\bm{\mu}_1,\sigma_1^2 I);\;
    \bm{h}_j \sim N(\tilde{\bm{\mu}}_1, \tilde{\sigma}_1^2  I),\ \bm{h}_j^{\text{HP}} \sim N\left(\tilde{\bm{\mu}}_1^{\text{HP}}, (\tilde{\sigma}_1^{\text{HP}})^2 I \right), 
    \end{split}
\end{equation}
where $\tilde{\bm{\mu}}_0 = {h}(\bm{\mu}_0 - \bm{\mu}_1)+\bm{\mu}_1, \; \tilde{\bm{\mu}}_1 = {h}(\bm{\mu}_1 - \bm{\mu}_0)+\bm{\mu}_0, \; \tilde{\bm{\mu}}_0^{\text{HP}} = (1-{h})(\bm{\mu}_0 - \bm{\mu}_1), \; \tilde{\bm{\mu}}_1^{\text{HP}} = (1-{h})(\bm{\mu}_1 - \bm{\mu}_0), \; \tilde{\sigma}_0^2 = \frac{({h}(\sigma_0^2-\sigma_1^2)+\sigma_1^2)}{{d}_0}, \; \tilde{\sigma}_1^2 = \frac{({h}(\sigma_1^2-\sigma_0^2)+\sigma_0^2)}{{d}_1}, \; (\tilde{\sigma}_0^{\text{HP}})^2 = \sigma_0^2+ \frac{({h}(\sigma_0^2-\sigma_1^2)+\sigma_1^2)}{{d}_0}, \; (\tilde{\sigma}_1^{\text{HP}})^2 = \sigma_1^2 + \frac{({h}(\sigma_1^2-\sigma_0^2)+\sigma_0^2)}{{d}_1}$. If $\sigma_0^2<\sigma_1^2$, we refer to $\mathcal{C}_0$ as the low variation class and $\mathcal{C}_1$ as the high variation class. The variance of each class can reflect the intra-class ND. We abuse the notation $\bm{x}_i\in {\cal C}_0$ for $i\in {\cal C}_0$  and $\bm{x}_j\in {\cal C}_1$ for $j\in {\cal C}_1$.

To quantify the ND of CSBM-H, we first compute the optimal Bayes classifier in the following theorem. The theorem is about the original features, but the results are applicable to the filtered features when the parameters are replaced according to \eqref{eq:two_normal_settings}.

\vspace{-0.3cm}
\begin{theorem} 1 Suppose $\sigma_0^2 \neq \sigma_1^2$ and $\sigma_0^2, \sigma_1^2 > 0$, the prior distribution for $\bm{x}$ is $\mathbb{P}(\bm{x}\in {\cal C}_0) = \mathbb{P}(\bm{x}\in {\cal C}_1) = 1/2$, then the optimal Bayes Classifier ($\text{CL}_{\text{Bayes}}$) for CSBM-H ($\bm{\mu}_0,\bm{\mu}_1,\sigma_0^2 I,\sigma_1^2 I,{d}_0,{d}_1,{h}$) is 
\vspace{-0.4cm}
$$\text{CL}_\text{Bayes}(\bm{x}) = \left\{
\begin{aligned}
1,\; \eta(\bm{x}) \geq 0.5\\
0,\; \eta(\bm{x}) < 0.5
\end{aligned}
\right. ,\; \eta(\bm{x}) = \mathbb{P}(z=1|\bm{x}) = \frac{1}{1+\exp{\left(Q(\bm{x})\right)}}, $$
where $Q(\bm{x}) = a \bm{x}^\top \bm{x} + \bm{b}^\top \bm{x} +c,\ 
a = \frac{1}{2}\left(\frac{1}{\sigma_1^2} - \frac{1}{\sigma_0^2}\right), 
\bm{b}=\frac{\bm{\mu}_0}{\sigma_0^2}-\frac{\bm{\mu}_1}{\sigma_1^2}, 
c=\frac{\bm{\mu}_1^\top\bm{\mu}_1}{2\sigma_1^2} - \frac{\bm{\mu}_0^\top\bm{\mu}_0}{2\sigma_0^2} + \ln{\left(\frac{\sigma_1^{F_h} }{\sigma_0^{F_h} } \right)}$ \footnote{The Bayes classifier for multiple categories ($>2$) can be computed by stacking multiple expectation terms using similar methods as in \cite{devroye2013probabilistic, farago1993strong}. We do not discuss the more complicated settings in this paper.}. See the proof in Appendix \ref{appendix:proof_of_theorem1}.
\end{theorem}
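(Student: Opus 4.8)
The plan is to invoke the standard characterization of the Bayes-optimal classifier and then reduce it to the stated quadratic form by an explicit density computation. Recall that the classifier minimizing the probability of misclassification under the $0$-$1$ loss is the maximum a posteriori (MAP) rule, which assigns $\bm{x}$ to the class maximizing the posterior $\mathbb{P}(z=k\mid\bm{x})$. Hence it suffices to compute $\eta(\bm{x}) = \mathbb{P}(z=1\mid\bm{x})$ and to verify that the rule ``predict $1$ iff $\eta(\bm{x})\ge 0.5$'' coincides with the claimed sigmoidal expression.

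First I would apply Bayes' rule. With equal priors $\mathbb{P}(\bm{x}\in\mathcal{C}_0)=\mathbb{P}(\bm{x}\in\mathcal{C}_1)=1/2$, the prior factors cancel and
$$\eta(\bm{x}) = \frac{p_1(\bm{x})}{p_0(\bm{x})+p_1(\bm{x})} = \frac{1}{1+p_0(\bm{x})/p_1(\bm{x})},$$
where $p_k$ denotes the density of $N(\bm{\mu}_k,\sigma_k^2 I)$ on $\mathbb{R}^{F_h}$. Setting $Q(\bm{x}) := \ln\bigl(p_0(\bm{x})/p_1(\bm{x})\bigr)$ immediately yields $\eta(\bm{x}) = 1/(1+\exp(Q(\bm{x})))$, matching the claimed form; it then only remains to identify $Q$ explicitly.

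The main computational step is expanding the log-likelihood ratio of the two isotropic Gaussians. Substituting $p_k(\bm{x}) \propto \sigma_k^{-F_h}\exp(-\norm{\bm{x}-\bm{\mu}_k}^2/(2\sigma_k^2))$ and taking logarithms gives
$$Q(\bm{x}) = \ln\frac{\sigma_1^{F_h}}{\sigma_0^{F_h}} - \frac{\norm{\bm{x}-\bm{\mu}_0}^2}{2\sigma_0^2} + \frac{\norm{\bm{x}-\bm{\mu}_1}^2}{2\sigma_1^2}.$$
Expanding each $\norm{\bm{x}-\bm{\mu}_k}^2 = \bm{x}^\top\bm{x} - 2\bm{\mu}_k^\top\bm{x} + \bm{\mu}_k^\top\bm{\mu}_k$ and collecting the quadratic, linear, and constant terms in $\bm{x}$ reproduces exactly the coefficients $a$, $\bm{b}$, $c$ stated in the theorem. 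Finally, since $\eta(\bm{x})\ge 0.5 \iff \exp(Q(\bm{x}))\le 1 \iff Q(\bm{x})\le 0$, the MAP threshold on $\eta$ is equivalent to a sign condition on $Q$, which completes the identification of $\text{CL}_\text{Bayes}$.

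I expect no serious obstacle: the argument is a direct application of Bayes' rule followed by routine quadratic-form algebra. The hypotheses enter only mildly. The condition $\sigma_0^2\neq\sigma_1^2$ guarantees $a\neq 0$, so the decision boundary $\{\bm{x}: Q(\bm{x})=0\}$ is a genuine quadric (a sphere) rather than the hyperplane of the equal-variance LDA case, and the positivity $\sigma_0^2,\sigma_1^2>0$ is what makes the densities and the logarithm well defined. The extension to the LP and HP filtered features is then immediate by substituting the transformed means and variances from \eqref{eq:two_normal_settings}, as noted just before the theorem.
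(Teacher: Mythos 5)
Your proposal is correct and follows essentially the same route as the paper's proof: apply Bayes' rule with the equal priors so that $\eta(\bm{x}) = 1/(1+p_0(\bm{x})/p_1(\bm{x}))$, substitute the isotropic Gaussian densities (whose normalizing constants produce the $\ln(\sigma_1^{F_h}/\sigma_0^{F_h})$ term), and expand the log-likelihood ratio into the quadratic form $a\bm{x}^\top\bm{x}+\bm{b}^\top\bm{x}+c$. The only cosmetic difference is that you name the log-ratio $Q$ up front and note the threshold equivalence $\eta(\bm{x})\ge 0.5 \iff Q(\bm{x})\le 0$, whereas the paper carries the full expression inside the exponent throughout; the algebra and conclusions are identical.
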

\vspace{-0.3cm}
\noindent \textbf{Advantages of $\text{CL}_\text{Bayes}$ Over the Fixed Linear Classifier in \cite{ma2021homophily}} \quad The decision boundary in \cite{ma2021homophily} is defined as $P=\{ \bm{x}| \bm{w}^\top \bm{x} - \bm{w}^\top (\bm{\mu}_0 + \bm{\mu}_1)/2\}$ where $\bm{w}= (\bm{\mu}_0 - \bm{\mu}_1)/||\bm{\mu}_0 - \bm{\mu}_1||_2$ is a fixed parameter. This classifier only depends on $\bm{\mu}_0, \bm{\mu}_1$ and is independent of $h$. However, as $h$ changes, the "separability" of the two normal distributions should be different. The fixed classifier cannot capture this difference, and thus is not qualified to measure ND for different $h$. Besides, we cannot investigate how variances $\sigma_0^2, \sigma_1^2$ and node degrees $d_0, d_1$ affect ND with the fixed classifier in \cite{ma2021homophily}. 

In the following subsection, we will define two methods to quantify ND of CSBM-H, one is based on $\text{CL}_{\text{Bayes}}$, which is a precise measure but hard to be explainable; another is based on KL-divergence, which can give us more intuitive understanding of how intra- and inter-class ND will impact ND  at different homophily levels. These two measurements can be used together to analyze ND.
\vspace{-0.2cm}
\subsection{Measure Node Distinguishability of CSBM-H}
\vspace{-0.2cm}
The Bayes error rate (BE) of the data distribution is the probability of a node being mis-classified when the true class probabilities are known given the predictors \cite{hastie2009elements}. It can be used to measure the distinguishability of node embeddings and the BE for $\text{CL}_{\text{Bayes}}$ is defined as follows,
\begin{definition}[Bayes Error Rate]
The Bayes error rate \cite{hastie2009elements} for $\textup{CL}_{\textup{Bayes}}$ is defined as
\begin{align*}
    \textup{BE} & = \mathbb{E}_{\bm{x}}[\mathbb{P}(z|\textup{CL}_{\textup{Bayes}}(\bm{x})\neq z)] =  \mathbb{E}_{\bm{x}}[1-\mathbb{P}(\textup{CL}_{\textup{Bayes}}(\bm{x}) = z | \bm{x})] 
\end{align*}
\end{definition}
Specifically, the BE for CSBM-H can be written as 
\begin{equation}
\label{eq:bayes_error_csbmh}
\begin{aligned}
 \resizebox{0.94\hsize}{!}{$\text{BE} = \mathbb{P}\left(\bm{x}\in \mathcal{C}_0\right) \left(1- \mathbb{P}(\text{CL}_\text{Bayes}(\bm{x})=0|\bm{x}\in \mathcal{C}_0) \right) + \mathbb{P}(\bm{x}\in \mathcal{C}_1) \left(1- \mathbb{P} (\text{CL}_\text{Bayes}(\bm{x})=1|\bm{x}\in \mathcal{C}_1)\right).$}
\end{aligned}
\end{equation}
To estimate the above value, we compute Probabilistic Bayes Error (PBE) for CSBM-H as follows. 

\noindent\textbf{Probabilistic Bayes Error (PBE)} \quad
The random variable in each dimension of $\bm{x}$ is independently normally distributed. As a result, 
$Q(\bm{x})$ defined in Theorem 1 follows a generalized $\chi^2$ distribution \cite{davies1973numerical,davies1980algorithm}
(See the calculation in Appendix \ref{appendix:noncentral_chisquare}). 
Specifically,
\begin{align*}
\text{For } \bm{x}_i \in \mathcal{C}_0,\;  Q(\bm{x}_i) \sim \tilde{\chi}^2(w_0,F_h,\lambda_0) +\xi;\;  \bm{x}_j \in \mathcal{C}_1,\;  Q(\bm{x}_j) \sim \tilde{\chi}^2(w_1,F_h,\lambda_1) +\xi
\end{align*}
\vspace{-0.1cm}
where $w_0 = a\sigma_0^2, w_1 = a\sigma_1^2$ , the degree of freedom is $F_h$, $ \lambda_0 = (\frac{\bm{\mu}_0}{\sigma_0} + \frac{\bm{b}}{2a\sigma_0})^\top (\frac{\bm{\mu}_0}{\sigma_0} + \frac{ \bm{b}}{2a\sigma_0}), \; \lambda_1 = (\frac{\bm{\mu}_1}{\sigma_1} + \frac{\bm{b}}{2a\sigma_1})^\top (\frac{\bm{\mu}_1}{\sigma_1}+\frac{\bm{b}}{2a\sigma_1}) \; \text{ and }\xi = c-\frac{\bm{b}^\top\bm{b}}{4a}\;$. Then, by using the Cumulative Distribution Function (CDF) of $\tilde{\chi}^2$, we can calculate the predicted probabilities directly as,
\vspace{-0.1cm}
\begin{align*}
\resizebox{1\hsize}{!}{$\mathbb{P}(\text{CL}_\text{Bayes}(\bm{x})=0|\bm{x}\in \mathcal{C}_0)  = 1-\text{CDF}_{\tilde{\chi}^2(w_0,F_h,\lambda_0)}(-\xi),\ 
\mathbb{P}(\text{CL}_\text{Bayes}(\bm{x})=1|\bm{x}\in \mathcal{C}_1) = \text{CDF}_{\tilde{\chi}^2(w_1,F_h,\lambda_1)}(-\xi) $} . 
\end{align*}
\vspace{-0.1cm}
Suppose we have a balanced prior distribution $\mathbb{P}(\bm{x} \in \mathcal{C}_0)=\mathbb{P}(\bm{x} \in \mathcal{C}_1)=1/2$. Then, PBE is,
\vspace{-0.1cm}
$$
    \frac{ \text{CDF}_{\tilde{\chi}^2(w_0,F_h,\lambda_0)}(-\xi) + \left(1- \text{CDF}_{\tilde{\chi}^2(w_1,F_h,\lambda_1)}(-\xi) \right)}{2}
$$
To investigate the impact of homophily on the ND for LP and HP filtered embeddings, we just need to replace $\left( \bm{\mu}_0, \sigma_0^2,\bm{\mu}_1, \sigma_1^2 \right)$ by $\left( \tilde{\bm{\mu}}_0, \tilde{\sigma}_0^2 ,\tilde{\bm{\mu}}_1, \tilde{\sigma}_1^2 \right)$ and $\left(\tilde{\bm{\mu}}_0^{\text{HP}}, (\tilde{\sigma}_0^{\text{HP}})^2,\tilde{\bm{\mu}}_1^{\text{HP}}, (\tilde{\sigma}_1^{\text{HP}})^2 \right)$ as \eqref{eq:two_normal_settings}. 

PBE can be numerically calculated and visualized to show the relationship between $h$ and ND precisely. However, we do not have an analytic expression for PBE, which makes it less explainable and intuitive. To address this issue, we define another metric for ND in the following paragraphs.

\noindent \textbf{Generalized Jeffreys Divergence} \quad The KL-divergence is a statistical measure of how a probability distribution $P$ is different from another distribution $Q$ \cite{csiszar1975divergence}. It offers us a tool to define an explainable ND measure, generalized Jeffreys divergence.

\begin{definition}[Generalized Jeffreys Divergence] For a random variable $\bm{x}$
which has either the distribution $P(\bm{x})$ or the distribution $Q(\bm{x})$, the generalized Jeffreys divergence \footnote{Jeffreys divergence \cite{jeffreys1998theory} is originally defined as $D_{\text{KL}}(P||Q)+D_{\text{KL}}(Q||P)$} is defined as
\vspace{-0.2cm}
\begin{align*}
    D_\text{GJ}(P,Q) & =  \mathbb{P}(\bm{x} \sim P) \mathbb{E}_{\bm{x}\sim P}\left[\ln{\frac{P(\bm{x})}{Q(\bm{x})}} \right] + \mathbb{P}(\bm{x} \sim Q) \mathbb{E}_{\bm{x}\sim Q}\left[\ln{\frac{Q(\bm{x})}{P(\bm{x})}} \right] 
\end{align*}

\end{definition}
With $\mathbb{P}(\bm{x} \sim P) = \mathbb{P}(\bm{x} \sim Q) = 1/2$ \footnote{We provide an open-ended discussion of imbalanced prior distributions in Appendix \ref{appendix:imbalanced_prior_distribution}.}, the negative generalized Jeffreys divergence for the two-normal setting in CSBM-H can be computed by (See Appendix \ref{appendix:NGJD} for the calculation)

\begin{equation}
\label{eq:ngjd}
\hspace*{-0.2cm}
    D_\text{NGJ}(\text{CSBM-H}) = \underbrace{- d_X^2(\frac{1}{4\sigma_1^2}+\frac{1}{4\sigma_0^2})}_{\text{Negative Normalized Distance}}  \underbrace{-\frac{F_h}{4}(\rho^2 + \frac{1}{\rho^2}-2)}_{\text{Negative Variance Ratio}} 
\end{equation}
where $d_X^2 = (\bm{\mu}_0- \bm{\mu}_1)^\top (\bm{\mu}_0 - \bm{\mu}_1)$ is the squared Euclidean distance between centers; $\rho=\frac{\sigma_0}{\sigma_1}$ and since we assume $\sigma_0^2 < \sigma_1^2$,  we have $0<\rho<1$. For $\bm{h}$ and $\bm{h}^\text{HP}$, we have $d_H^2 = (2h-1)^2d_X^2, d_\text{HP}^2 = 4(1-h)^2 d_X^2$. The smaller $D_\text{NGJ}$ the CSBM-H has, the more distinguishable the embeddings are.

From \eqref{eq:ngjd}, we can see that $D_\text{NGJ}$ implies that ND relies on two terms, Expected Negative Normalized Distance (ENND) and the Negative Variance Ratio (NVR): 1. ENND depends on how large is the inter-class ND $d_X^2$ compared with the normalization term $\frac{1}{4\sigma_1^2}+\frac{1}{4\sigma_0^2}$, which is determined by intra-class ND (variances $\sigma_0,\sigma_1$); NVR depends on how different the two intra-class NDs are, \ie{} when the intra-class ND of high-variation class is significantly larger than that of low-variation class ($\rho$ is close to 0), NVR is small which means the nodes are more distinguishable and vice versa. 

Now, we can investigate the impact of homophily on ND through the lens of PBE and $D_\text{NGJ}$ \footnote{See two more metrics, negative squared Wasserstein distance and Hellinger distance, in Appendix \ref{appendix:two_more_nd_metrics}.}. Specifically, we set the standard CSBM-H as $\bm{\mu}_0 = [-1,0],\bm{\mu}_1 =[0,1], \sigma_0^2 = 1,\sigma_1^2 = 2, d_0 = 5,d_1 = 5$. And as shown in Figure \ref{fig:standard_csbmh_example}, its PBE and $D_\text{NGJ}$ curves for LP filtered feature $\bm{h}$ are bell-shaped \footnote{This is consistent with the empirical results found in \cite{luan2022revisiting} that the relationship between the prediction accuracy of GNN and homophily value is a U-shaped curve.}. This indicates that, contrary to the prevalent belief that heterophily has the most negative impact on ND, a medium level of homophily actually has a more detrimental effect on ND than extremely low levels of homophily. We refer to this phenomenon as the \textbf{mid-homophily pitfall}. 

The PBE and $D_\text{NGJ}$ curves for $\bm{h}^\text{HP}$ are monotonically increasing, which means that the high-pass filter works better in heterophily areas than in homophily areas. Moreover, it is observed that $\bm{x}$, $\bm{h}$, and $\bm{h}^\text{HP}$ will get the lowest PBE and $D_\text{NGJ}$ in different homophily intervals, which we refer to as the "FP regime \textit{(black)}", "LP regime \textit{(green)}", and "HP regime \textit{(red)}" respectively. This indicates that LP filter works better at very low and very high homophily intervals (two ends), HP filter works better at low to medium homophily interval \footnote{This verifies the conjecture made in \cite{luan2022revisiting} saying that high-pass filter cannot address all kinds of heterophily and only works well for certain heterophily cases.}, the original (\ie{} full-pass or FP filtered) features works betters at medium to high homophily area.

Researchers have always been interested in exploring how node degree relates to the effect of homophily \cite{ma2021homophily,yan2021two}. In the upcoming subsection, besides node degree, we will also take a deeper look at the impact of class variances via the homophily-ND curves and the FP, LP and HP regimes.

\begin{figure}[htbp!]
    \centering
     {  \vspace{-0.3cm}
     \resizebox{1\hsize}{!}{
     \subfloat[PBE]{
     \captionsetup{justification = centering}
     \includegraphics[width=0.22\textwidth]{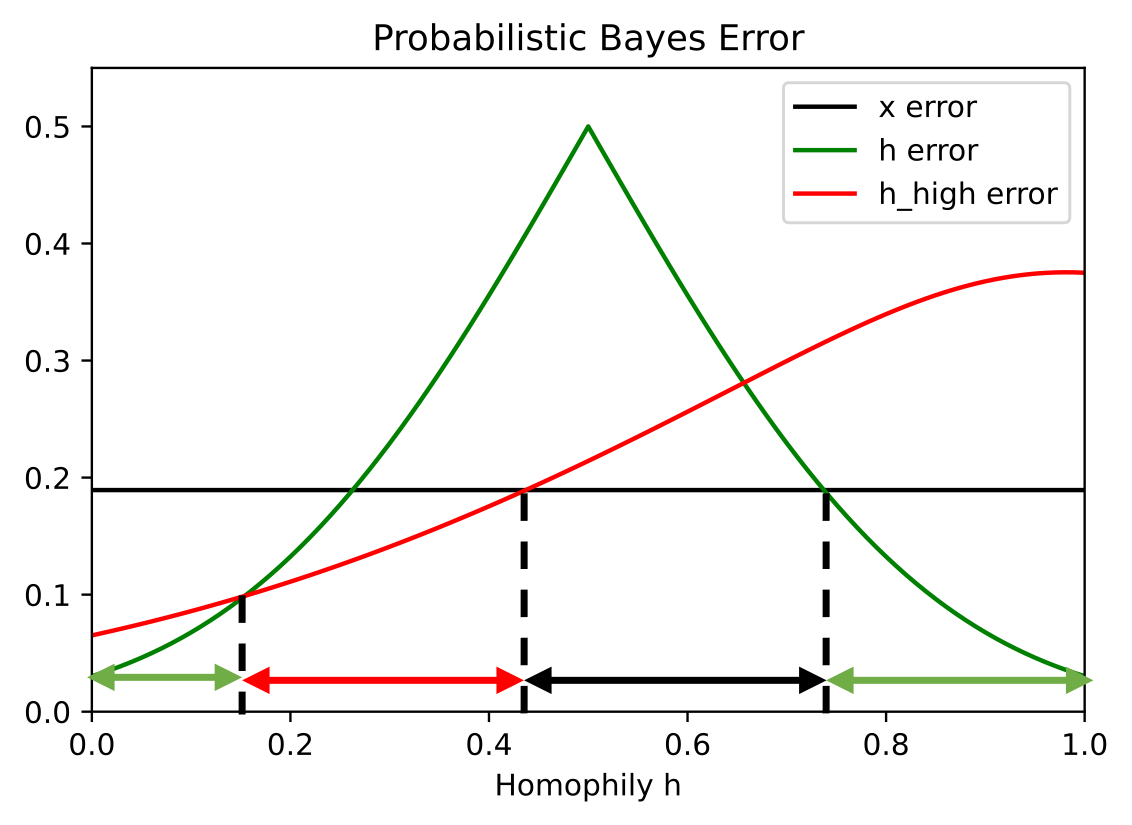}
     }
     \subfloat[$D_\text{NGJ}$]{
     \captionsetup{justification = centering}
     \includegraphics[width=0.25\textwidth]{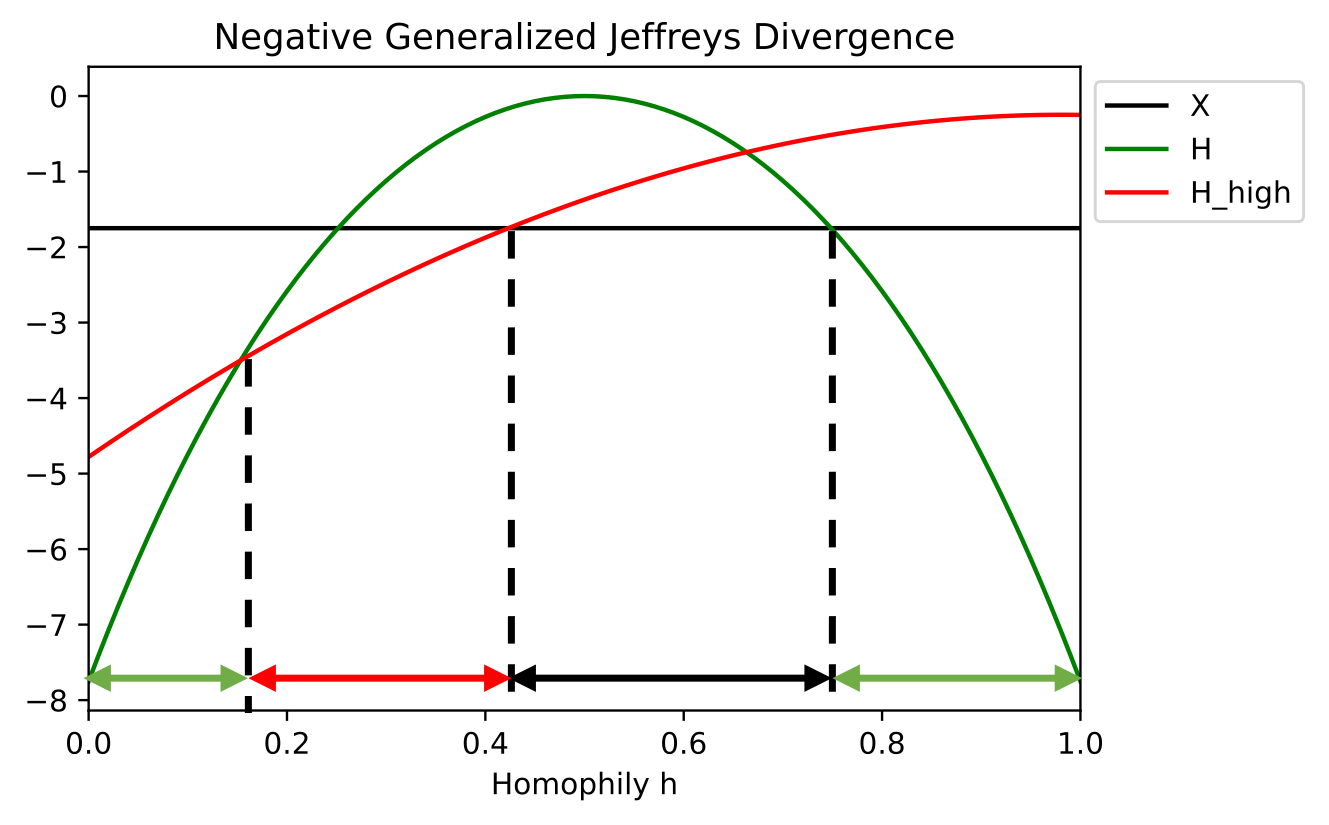}
     }\hspace*{-0.3cm}
     \subfloat[ENND]{
     \captionsetup{justification = centering}
     \includegraphics[width=0.25\textwidth]{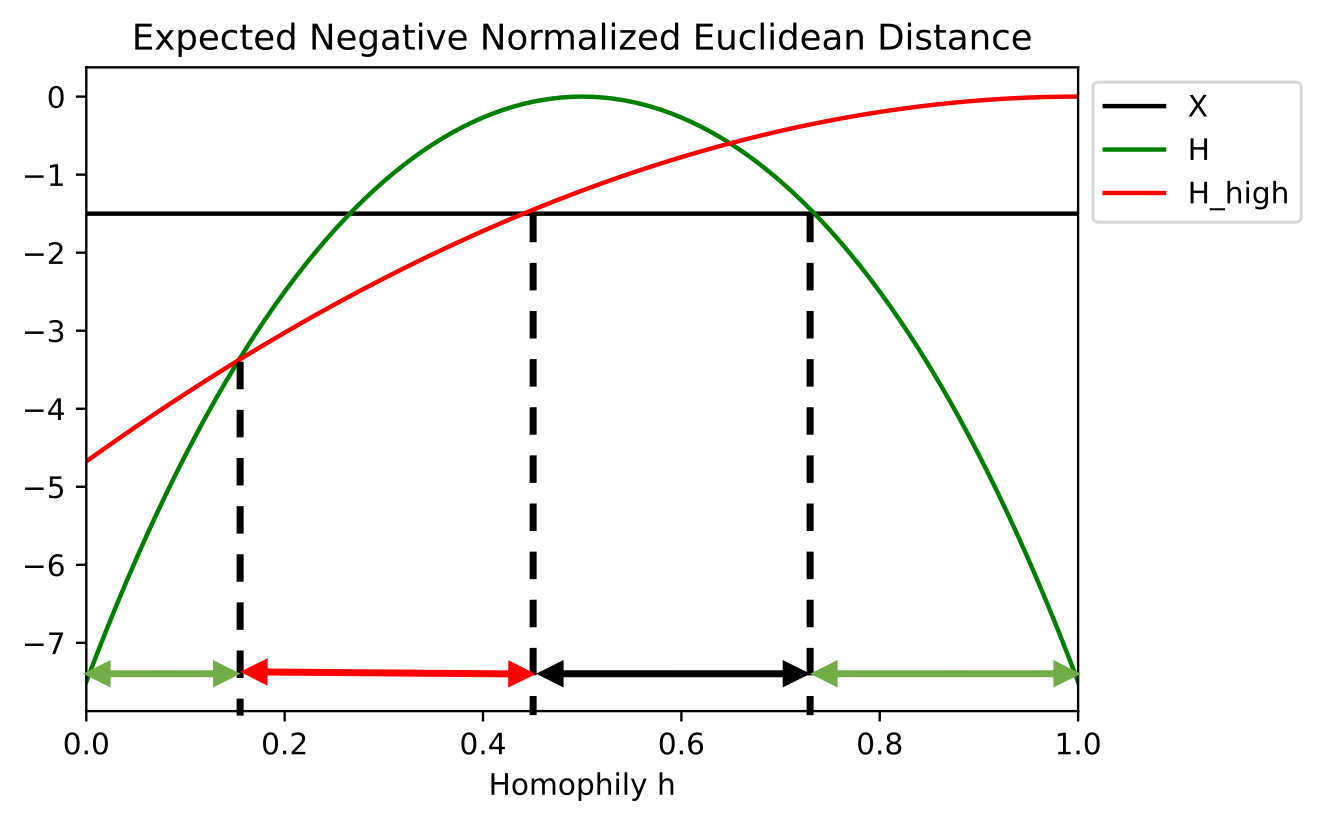}
     } \hspace*{-0.3cm}
     \subfloat[NVR]{
     \captionsetup{justification = centering}
     \includegraphics[width=0.22\textwidth]{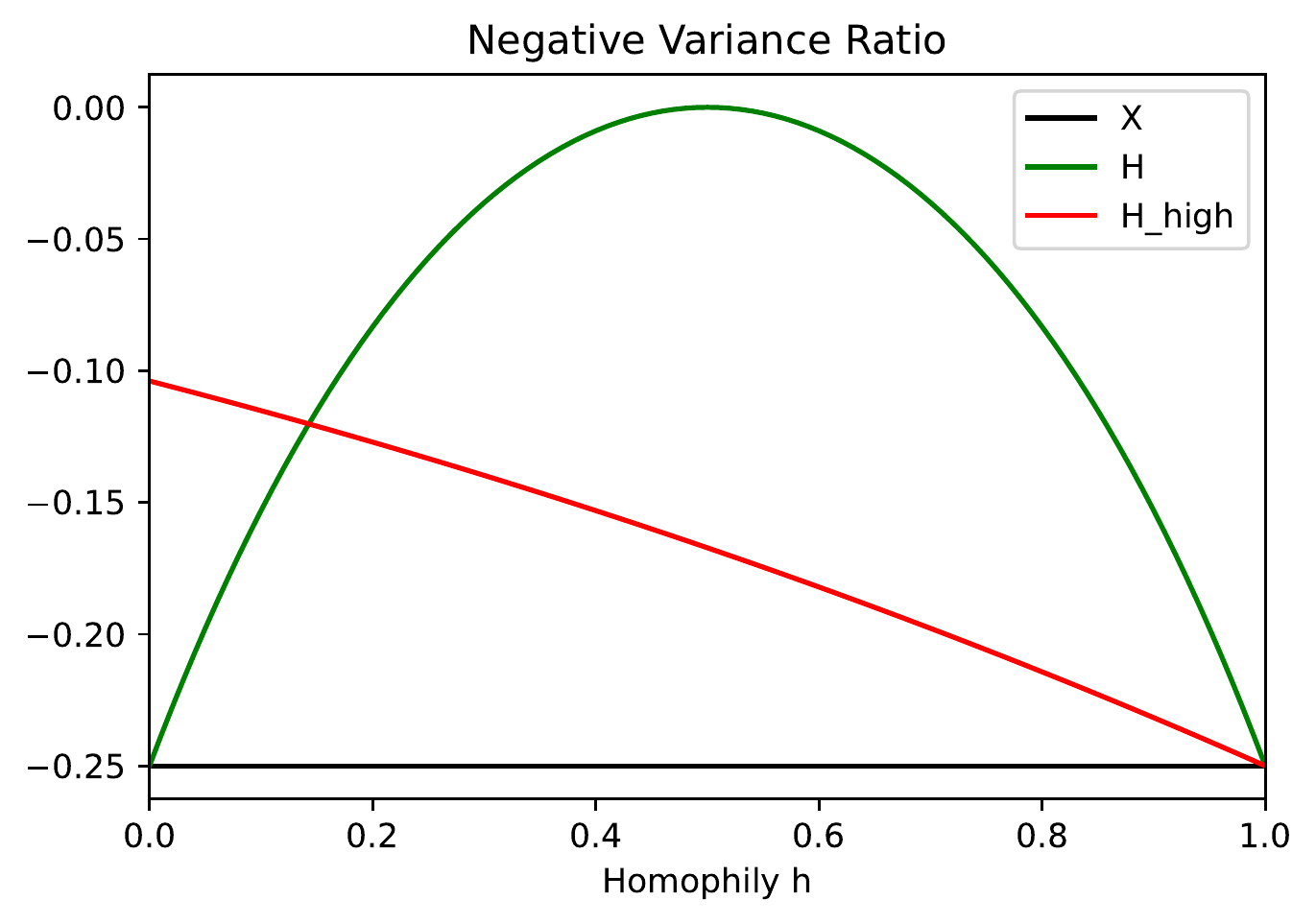}
     } 
     }
     }
     \vspace{-0.2cm}
       \caption{
  Visualization of CSBM-H $\left(\bm{\mu}_0 = [-1,0],\bm{\mu}_1 =[0,1], \sigma_0^2 = 1,\sigma_1^2 = 2, \right.$ $\left. d_0 = 5,d_1 = 5 \right)$
  }
     \label{fig:standard_csbmh_example}
\end{figure}

\vspace{-0.3cm}
\subsection{Ablation Study on CSBM-H}
\vspace{-0.2cm}
\label{sec:obervations_on_csbmh}
\begin{figure}[htbp!]
    \centering
     {\hspace*{-0.3cm}
     \resizebox{1\hsize}{!}{
     \subfloat[PBE]{
     \captionsetup{justification = centering}
     \includegraphics[width=0.22\textwidth]{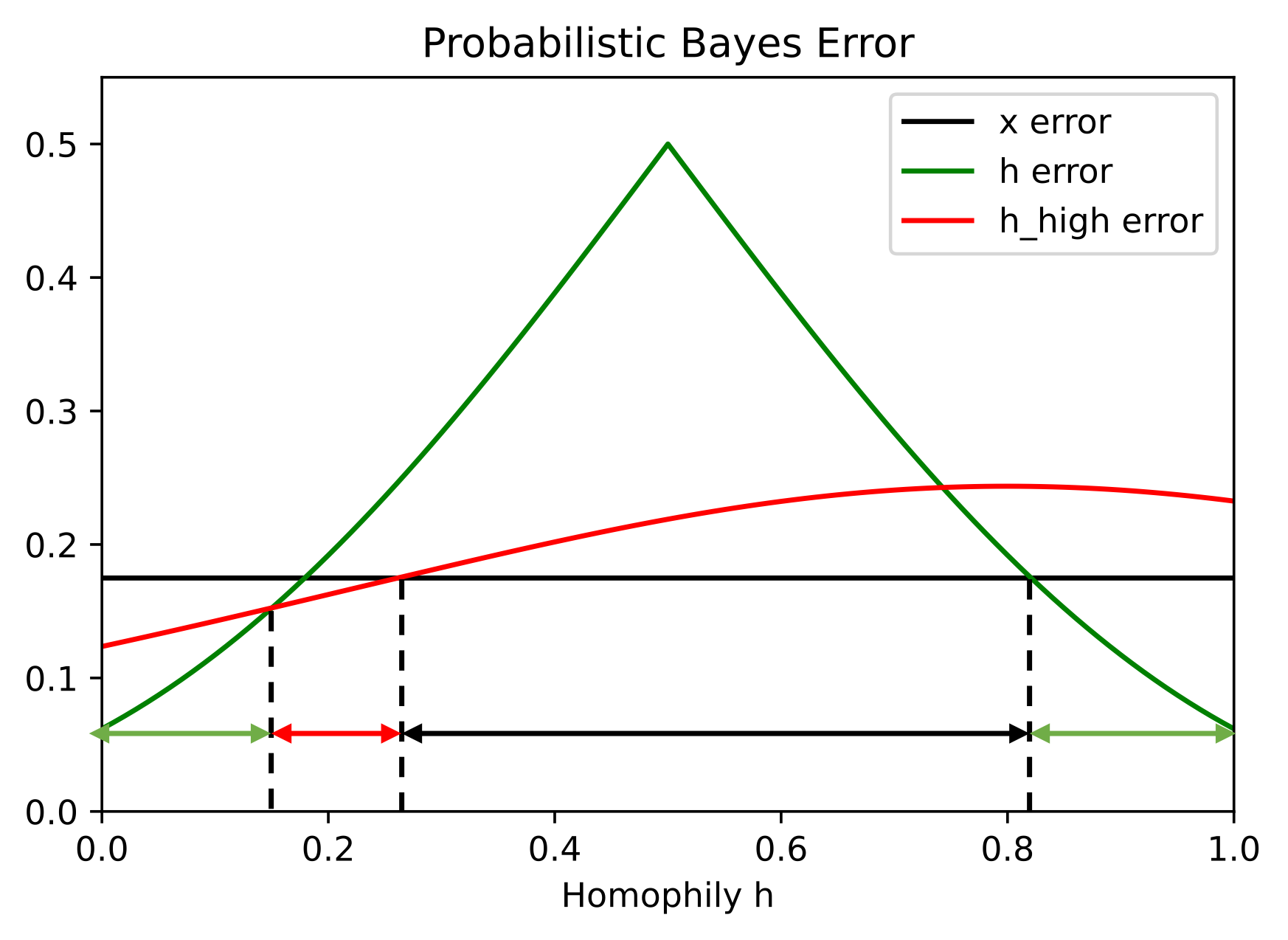}
     } \hspace*{-0.2cm}
     \subfloat[$D_\text{NGJ}$]{
     \captionsetup{justification = centering}
     \includegraphics[width=0.25\textwidth]{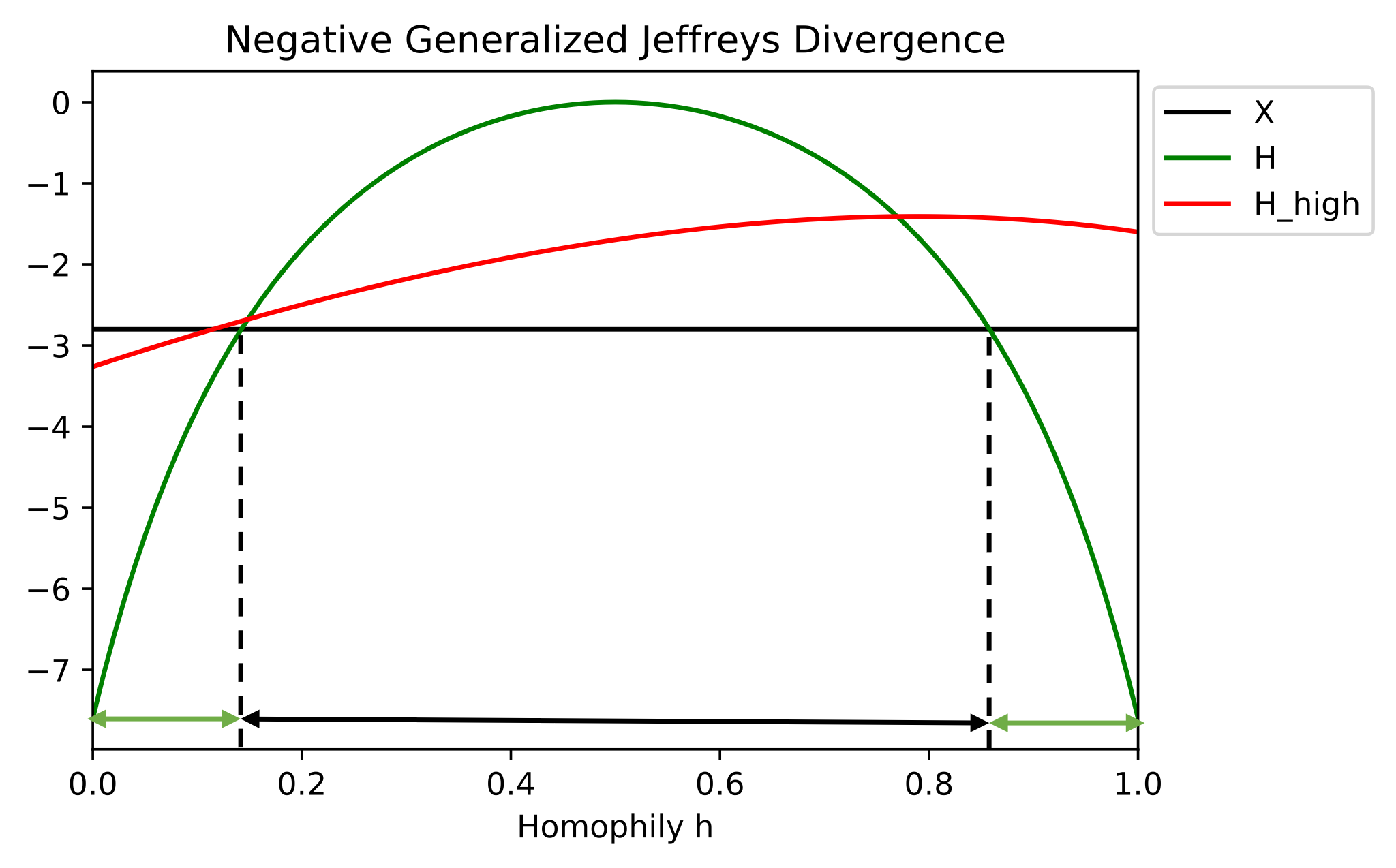}
     }  \hspace*{-0.2cm}
     \subfloat[ENND]{
     \captionsetup{justification = centering}
     \includegraphics[width=0.25\textwidth]{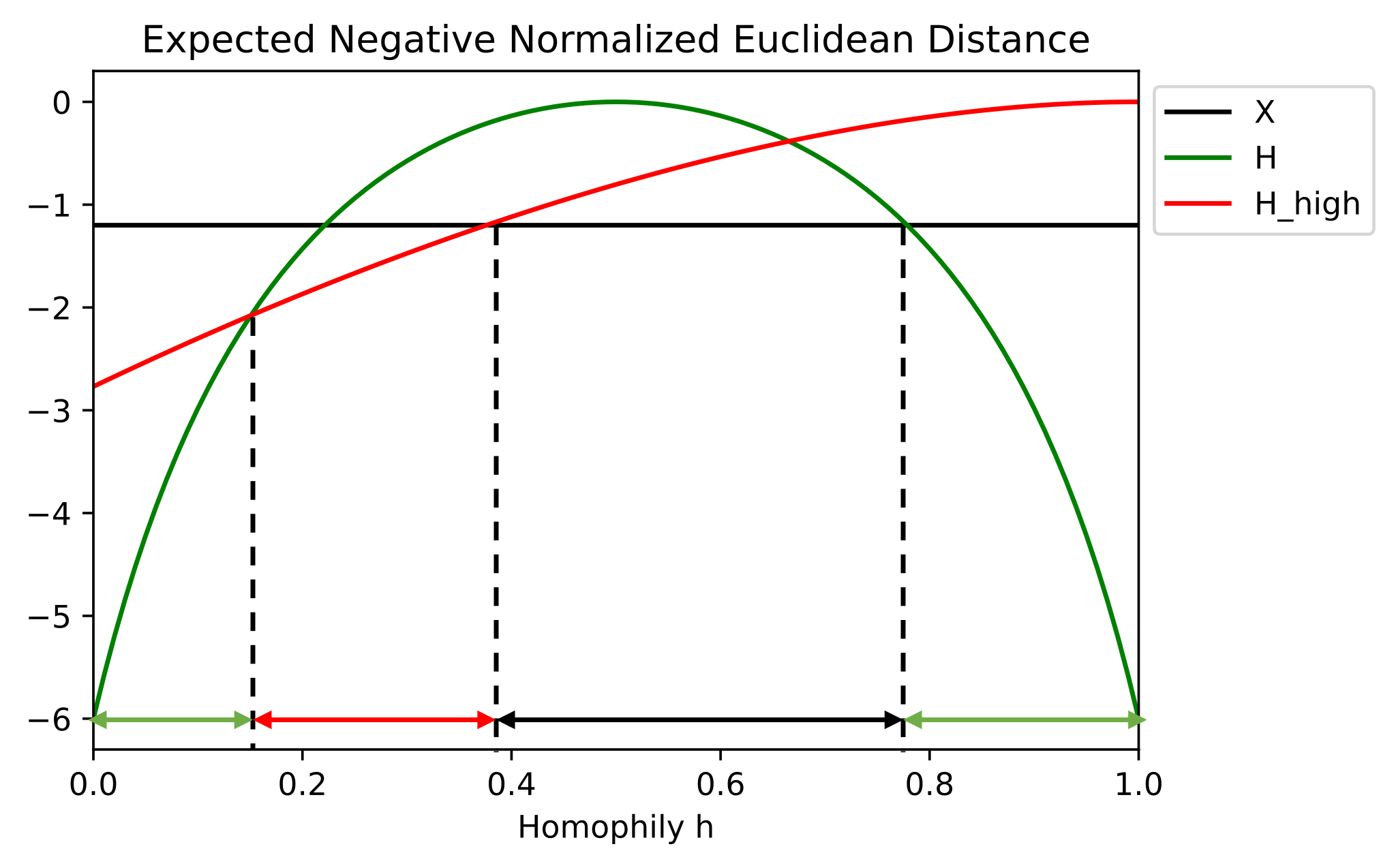}
     } \hspace*{-0.2cm}
     \subfloat[NVR]{
     \captionsetup{justification = centering}
     \includegraphics[width=0.25\textwidth]{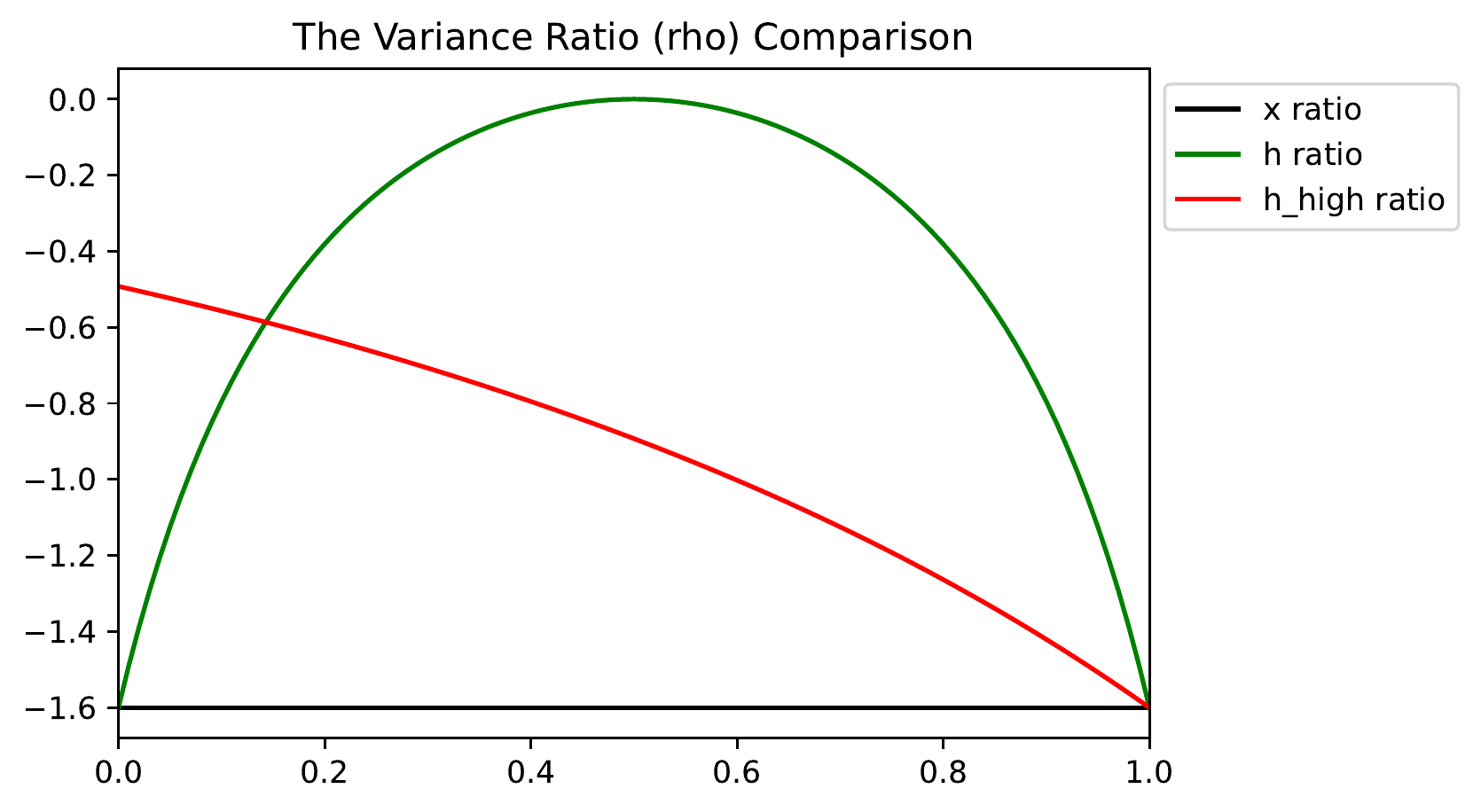}
     } 
     }
     }
     \vspace{-0.2cm}
     \caption{Comparison of CSBM-H with $\sigma^2_0=1,\sigma^2_1=5$.}
     \label{fig:csbmh_sigma0=1_sigma1=5}
\end{figure}
\begin{figure}[htbp!]
    \centering
    \hspace*{-0.3cm}
     \resizebox{1\hsize}{!}{
     \subfloat[PBE]{
     \captionsetup{justification = centering}
     \includegraphics[width=0.27\textwidth]{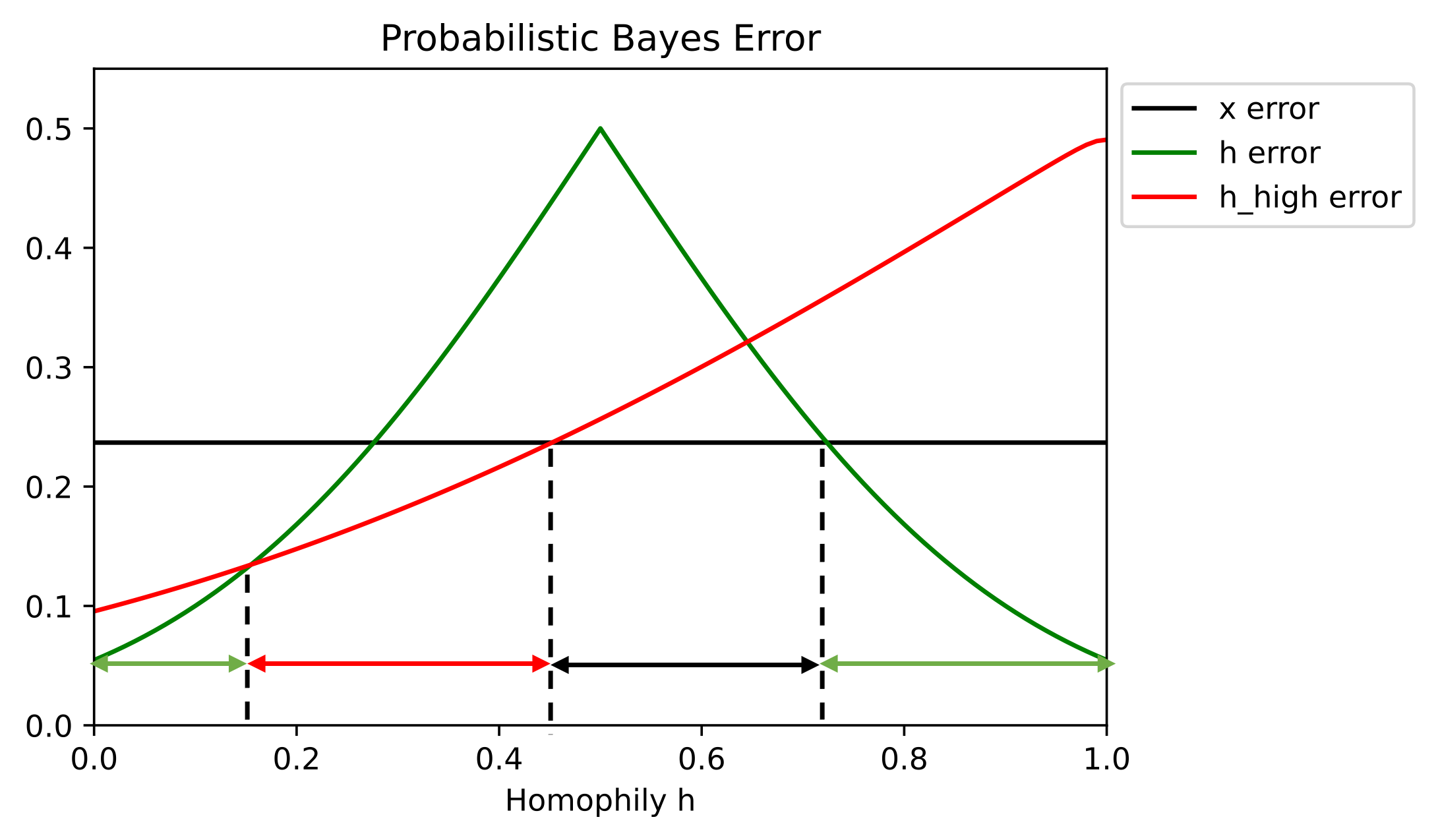}
     } \hspace*{-0.3cm}
     \subfloat[$D_\text{NGJ}$]{
     \captionsetup{justification = centering}
     \includegraphics[width=0.25\textwidth]{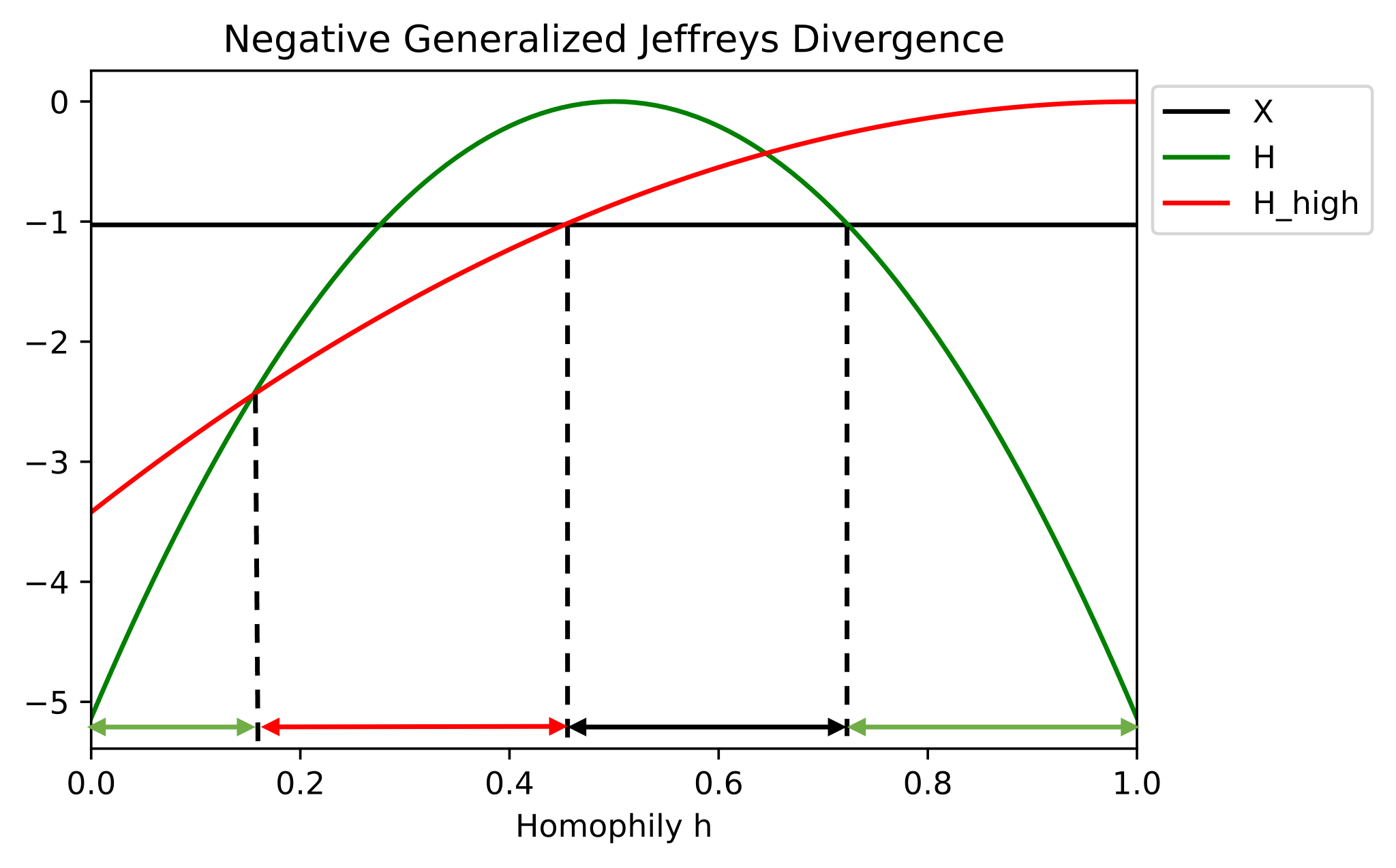}
     }\hspace*{-0.3cm}
     \subfloat[ENND]{
     \captionsetup{justification = centering}
     \includegraphics[width=0.25\textwidth]{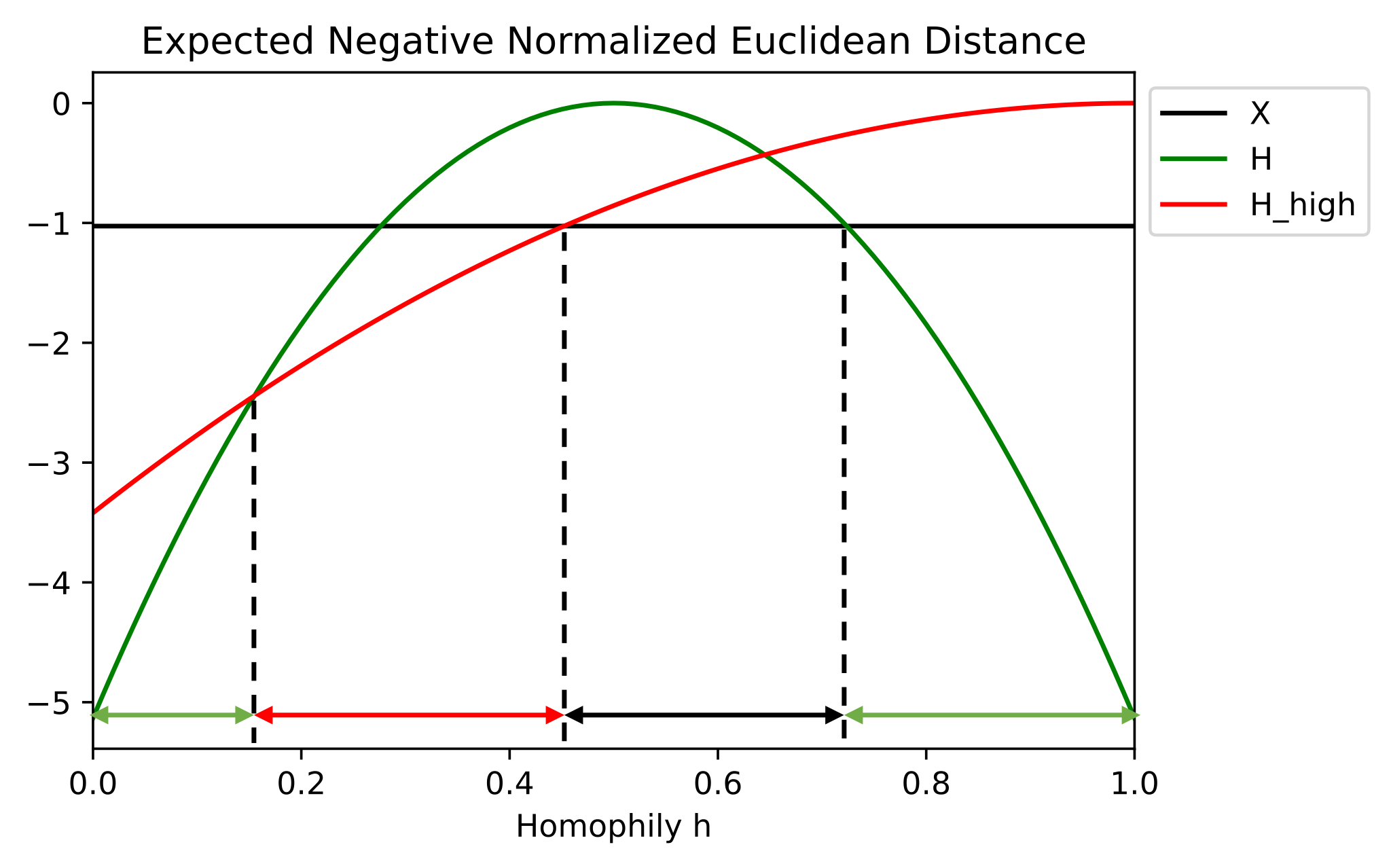}
     } \hspace*{-0.3cm}
     \subfloat[Negative Variance Ratio]{
     \captionsetup{justification = centering}
     \includegraphics[width=0.23\textwidth]{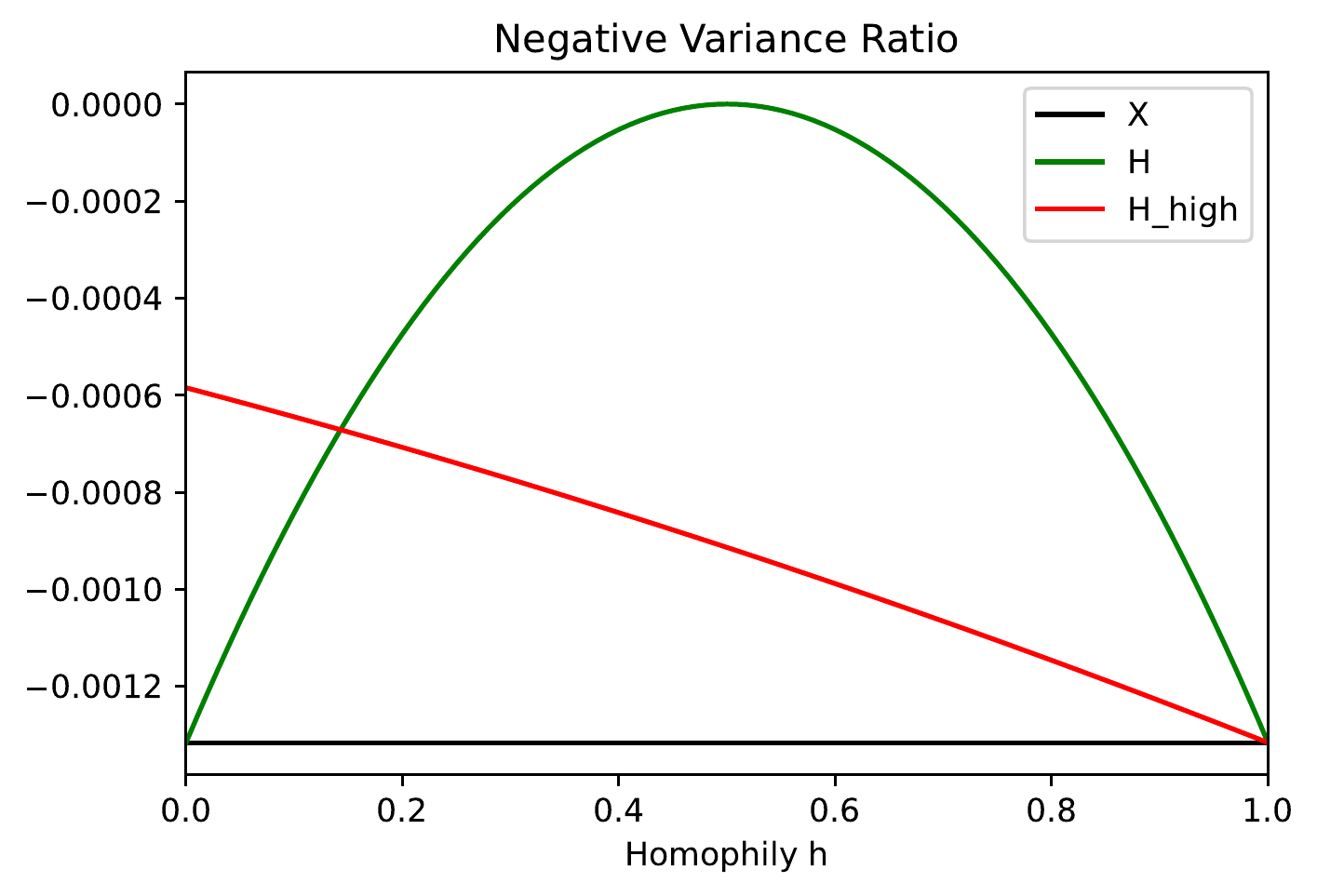}
     } 
     }
     \vspace{-0.2cm}
     \caption{Comparison of CSBM-H with $\sigma^2_0=1.9,\sigma^2_1=2$.}
     \label{fig:csbmh_sigma0=1.9_sigma1=2}
\end{figure}

\noindent \textbf{Increase the Variance of High-variation Class ($\sigma_0^2=1, \sigma_1^2=5$)} \quad  
From Figure~\ref{fig:csbmh_sigma0=1_sigma1=5}, it is observed that as the variance in $\mathcal{C}_1$ increases and the variance between $\mathcal{C}_0$ and $\mathcal{C}_1$ becomes more imbalanced, the PBE and $D_\text{NGJ}$ of the three curves all go up which means the node embeddings become less distinguishable under HP, LP and FP filters. The significant shrinkage of the HP regimes and the expansion of the FP regime indicates that the original features are more robust to imbalanced variances especially in the low homophily area. From Figure \ref{fig:csbmh_sigma0=1_sigma1=5} (d), we can see that the main cause is that the NVR of the 3 curves all move down but the HP curve moves less in low homophily area than other 2 curves. This implies that the HP curve exhibits less sensitivity to $\rho$ within the area of low homophily. 

\noindent \textbf{Increase the Variance of Low-variation Class ($\sigma_0^2=1.9, \sigma_1^2=2$)} \quad 
As shown in Figure \ref{fig:csbmh_sigma0=1.9_sigma1=2}, when the variance in $\mathcal{C}_0$ increases and the variance between $\mathcal{C}_0$ and $\mathcal{C}_1$ becomes more balanced, PBE and $D_\text{NGJ}$ curves go up, which means the node embeddings become less distinguishable. The LP, HP and the FP regimes almost stays the same because the magnitude of NVR becomes too small that it almost has no effect to ND as shown in Figure~\ref{fig:csbmh_sigma0=1.9_sigma1=2} (d).

Interestingly, we found the change of variances cause less differences of the 3 regimes in ENND than that in NVR \footnote{To verify this, we increase $\sigma_0^2$ and $\sigma_1^2$  proportionally. From Figure \ref{fig:csbmh_sigma0=2.5_sigma1=5} in Appendix \ref{appendix:more_figures_csbmh}, relative sizes of the FP, LP, and HP areas remain similar.} and HP filter is less sensitive to $\rho$ changes in low homophily area than LP and FP filters. This insensitivity will have significant impact to the 3 regimes when $\rho$ is close to $0$ and have trivial effect when $\rho$ is close to $1$ because the magnitude of NVR is too small.

\noindent \textbf{Increase the Node Degree of High-variation Class ($d_0=5, d_1=25$)} \quad
From Figure~\ref{fig:csbmh_d0=5_d1=25}, it can be observed that as the node degree of the high-variation class increases, the PBE and $D_\text{NGJ}$ curves of FP and HP filters almost stay the same while the curves of LP filters go down with a large margin. This leads to a substantial expansion of LP regime and shrinkage of FP and HP regime. This is mainly due to the decrease of ENND of LP filters and the decrease of its NVR in low homophily area also plays an important role.

\noindent \textbf{Increase the Node Degree of Low-variation Class ($d_0=25, d_1=5$)} \quad
From Figure~\ref{fig:csbmh_d0=25_d1=5}, we have the similar observation as when we increase the node degree of high-variation class. The difference is that the expansion of LP regime and shrinkage of FP and HP regimes are not as significant as before.

From $\tilde{\sigma}_0^2, \; \tilde{\sigma}_1^2$ we can see that increasing node degree can help LP filter reduce variances of the aggregated features so that the ENND will decrease, especially for high-variation class while HP filter is less sensitive to the change of variances and node degree. 
\begin{figure}[htbp!]
    \centering
     {\hspace*{-0.3cm}
     \resizebox{1\hsize}{!}{
     \subfloat[PBE]{
     \captionsetup{justification = centering}
     \includegraphics[width=0.22\textwidth]{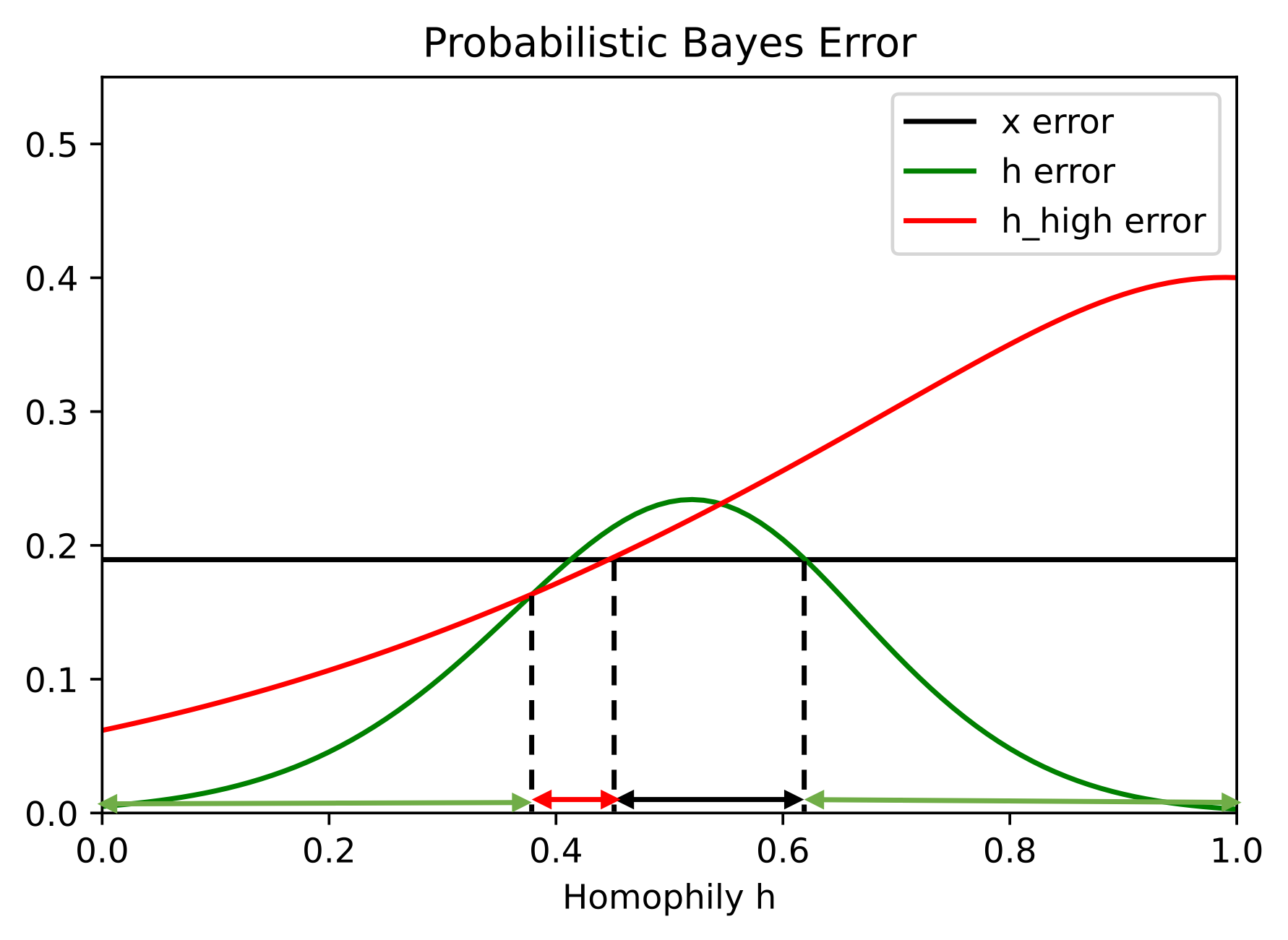}
     }\hspace*{-0.3cm}
     \subfloat[$D_\text{NGJ}$]{
     \captionsetup{justification = centering}
     \includegraphics[width=0.25\textwidth]{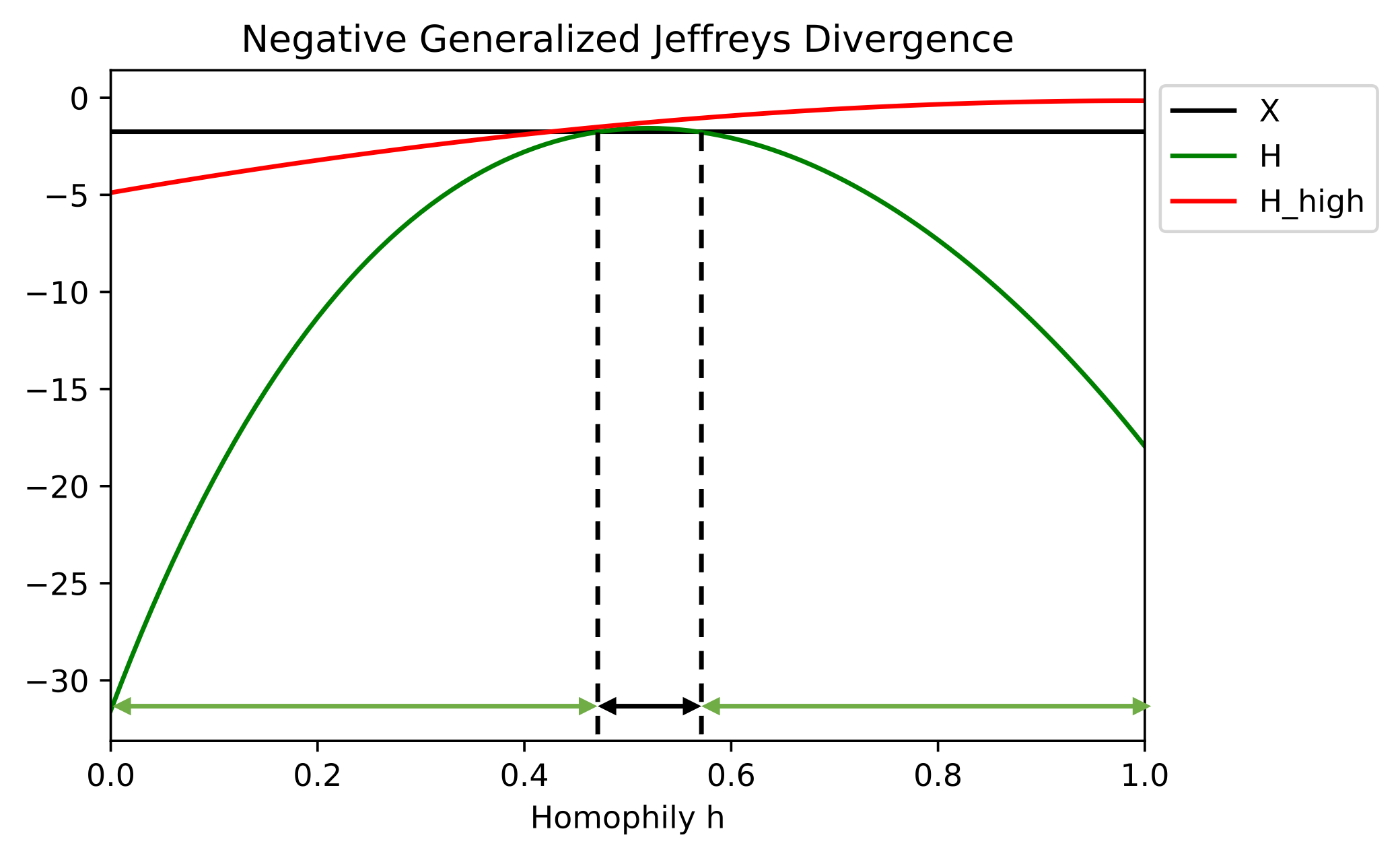}
     } \hspace*{-0.3cm}
     \subfloat[ENND]{
     \captionsetup{justification = centering}
     \includegraphics[width=0.25\textwidth]{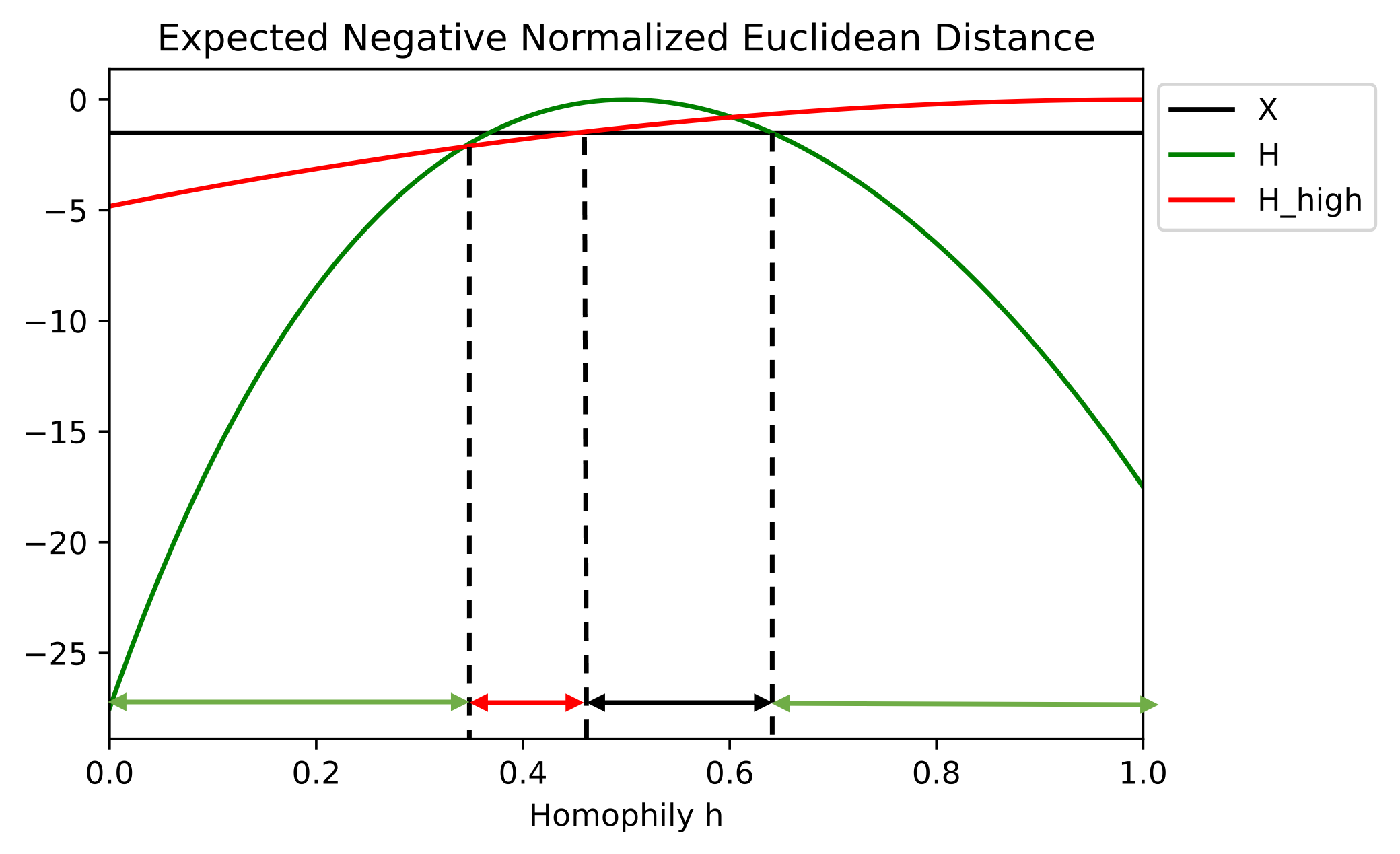}
     } \hspace*{-0.3cm}
     \subfloat[NVR]{
     \captionsetup{justification = centering}
     \includegraphics[width=0.25\textwidth]{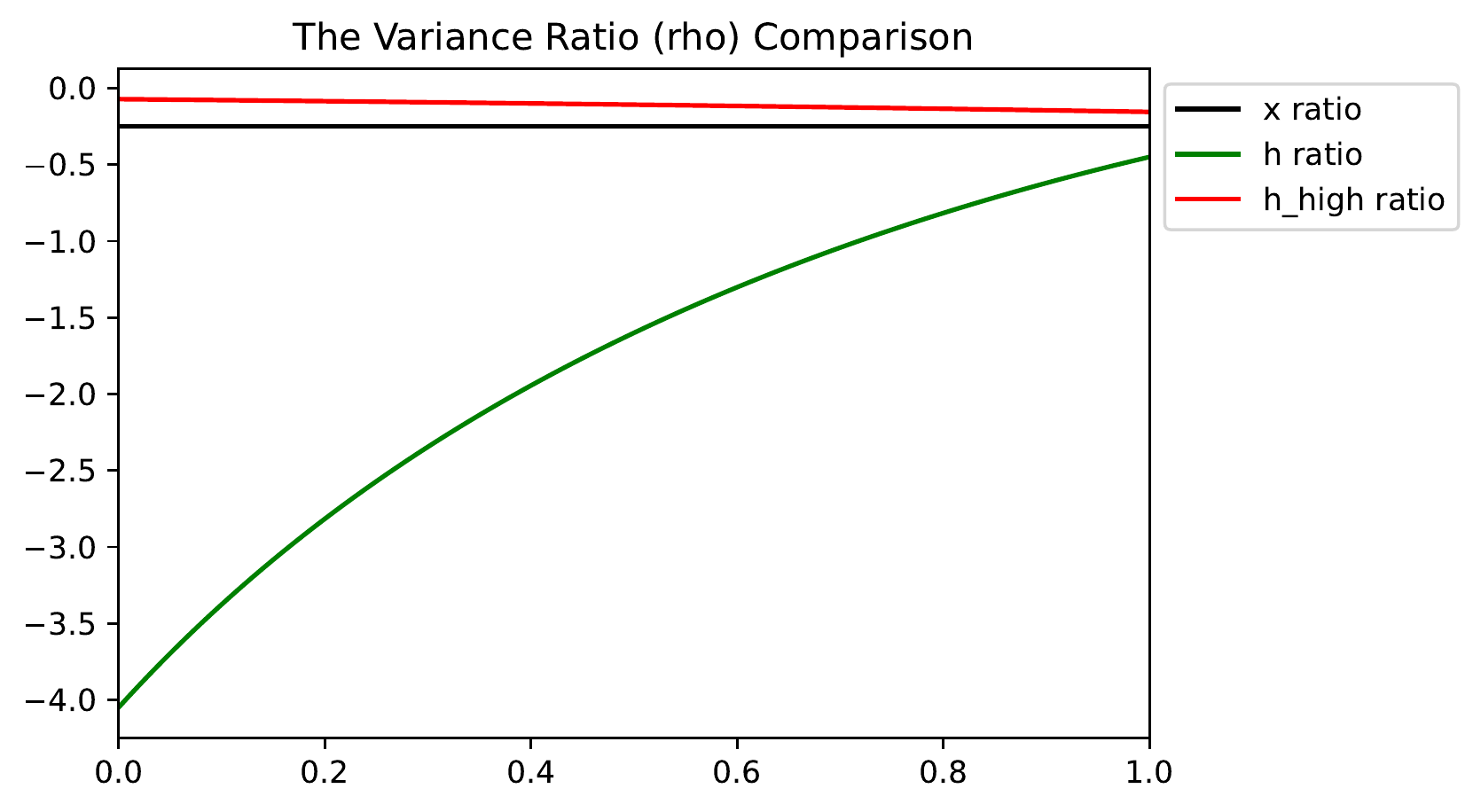}
     }
     }
     }
     \hspace*{-0.2cm}
     \caption{Comparison of CSBM with different $d_0=5,d_1=25$ setups.}
     \label{fig:csbmh_d0=5_d1=25}
\end{figure}

\begin{figure}[htbp!]
    \centering
     {\hspace*{-0.3cm}
     \resizebox{1\hsize}{!}{
     \subfloat[PBE]{
     \captionsetup{justification = centering}
     \includegraphics[width=0.22\textwidth]{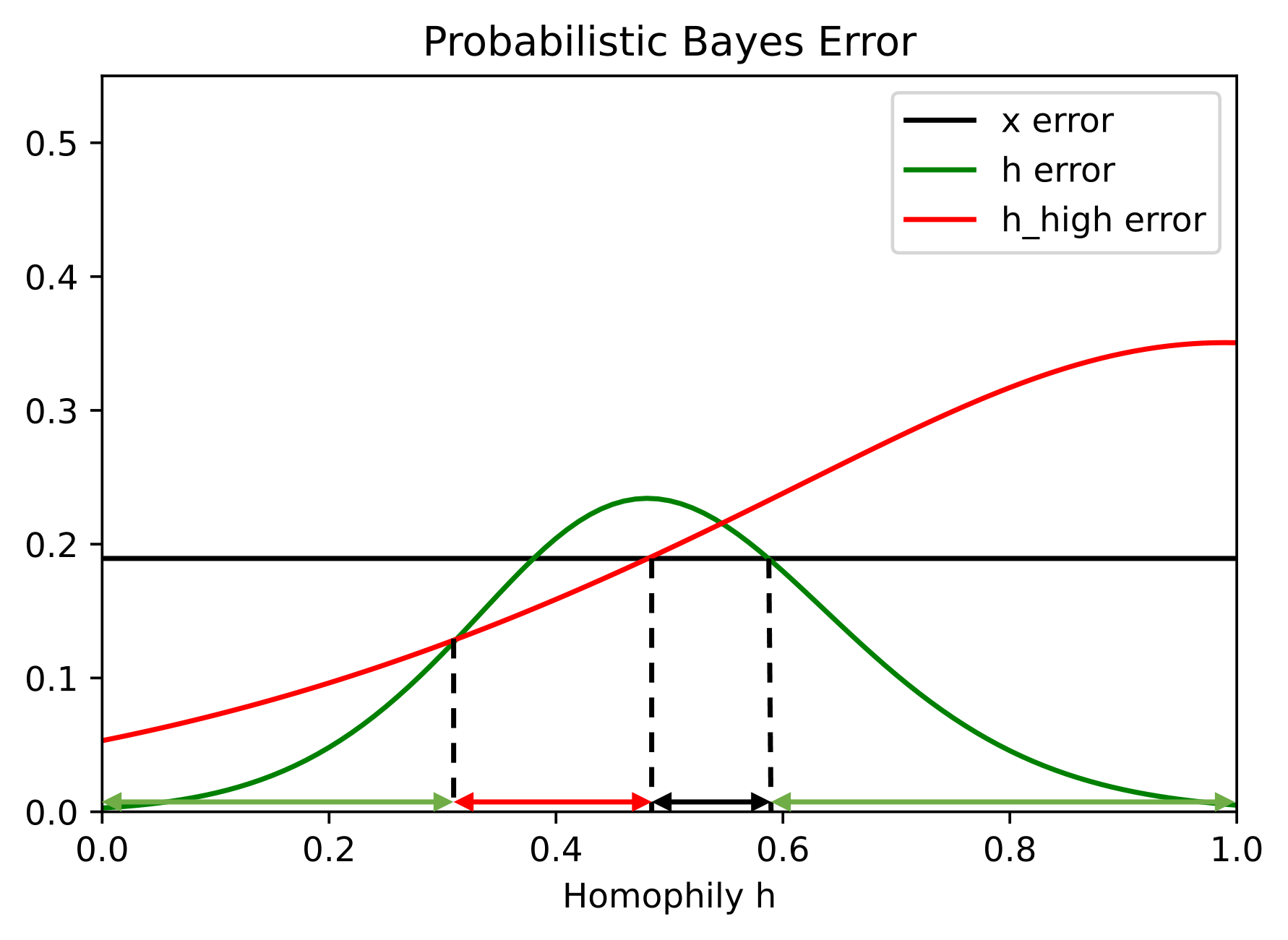}
     }\hspace*{-0.3cm}
     \subfloat[$D_\text{NGJ}$]{
     \captionsetup{justification = centering}
     \includegraphics[width=0.25\textwidth]{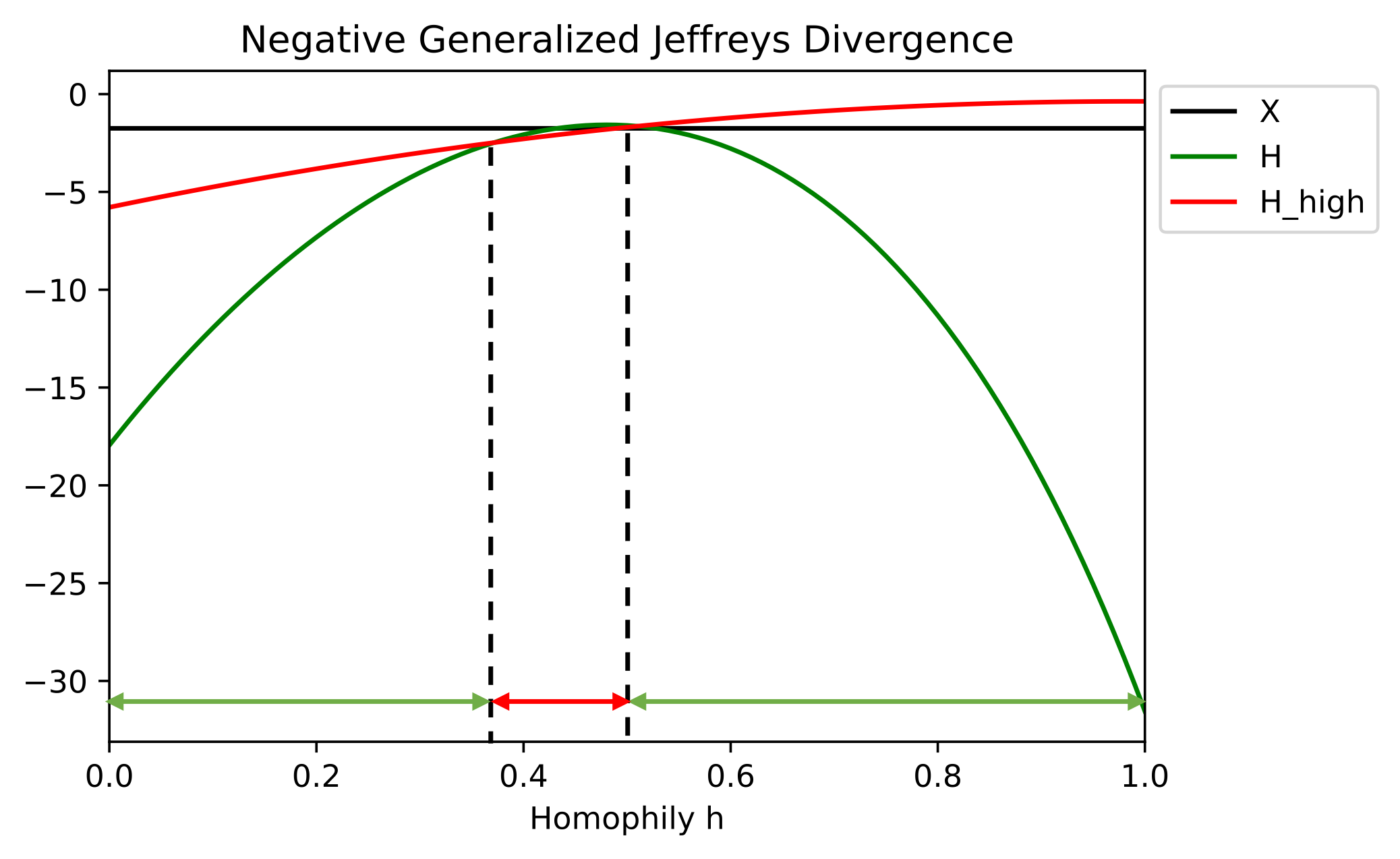}
     } \hspace*{-0.3cm}
     \subfloat[ENND]{
     \captionsetup{justification = centering}
     \includegraphics[width=0.25\textwidth]{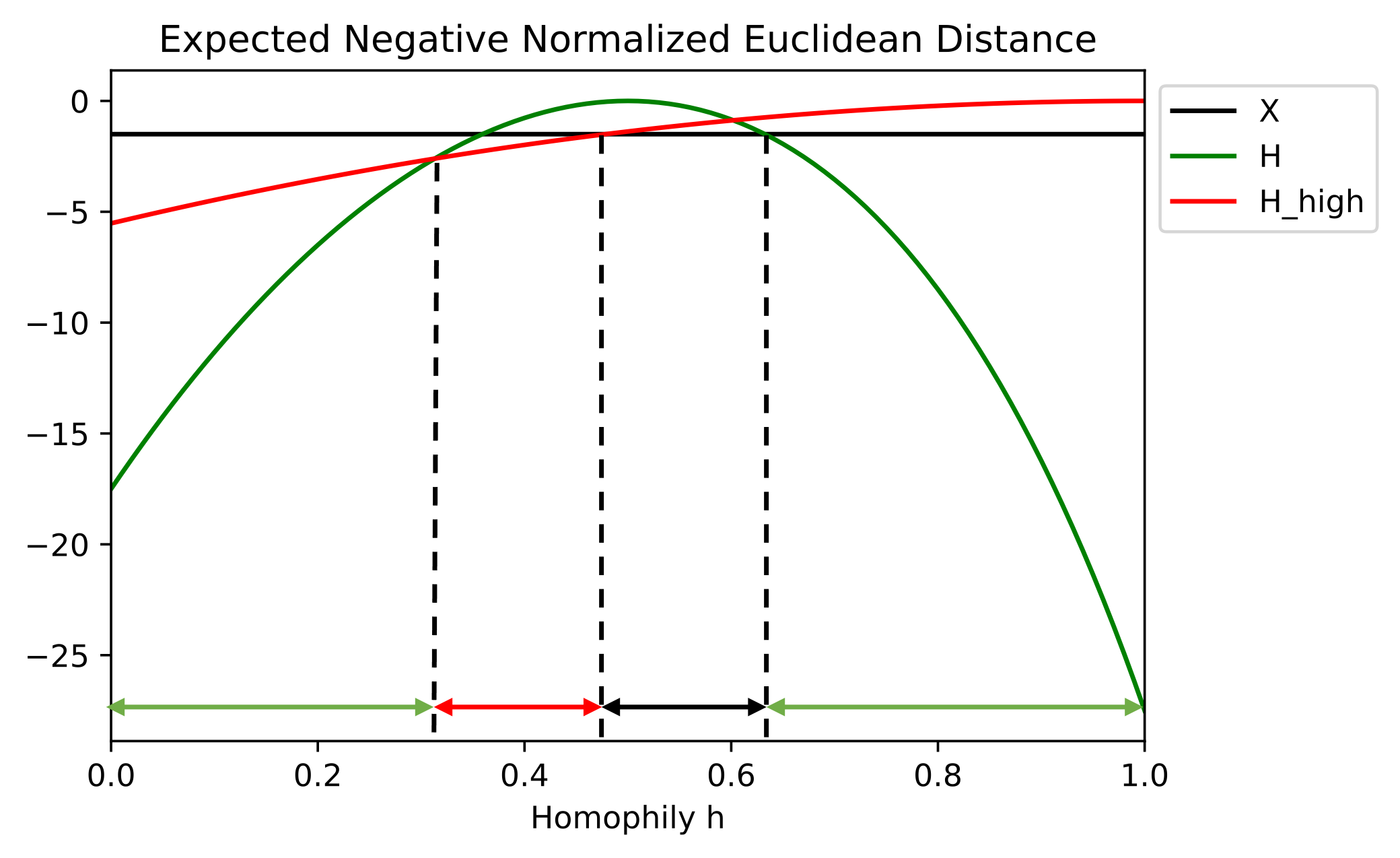}
     } \hspace*{-0.3cm}
     \subfloat[NVR]{
     \captionsetup{justification = centering}
     \includegraphics[width=0.25\textwidth]{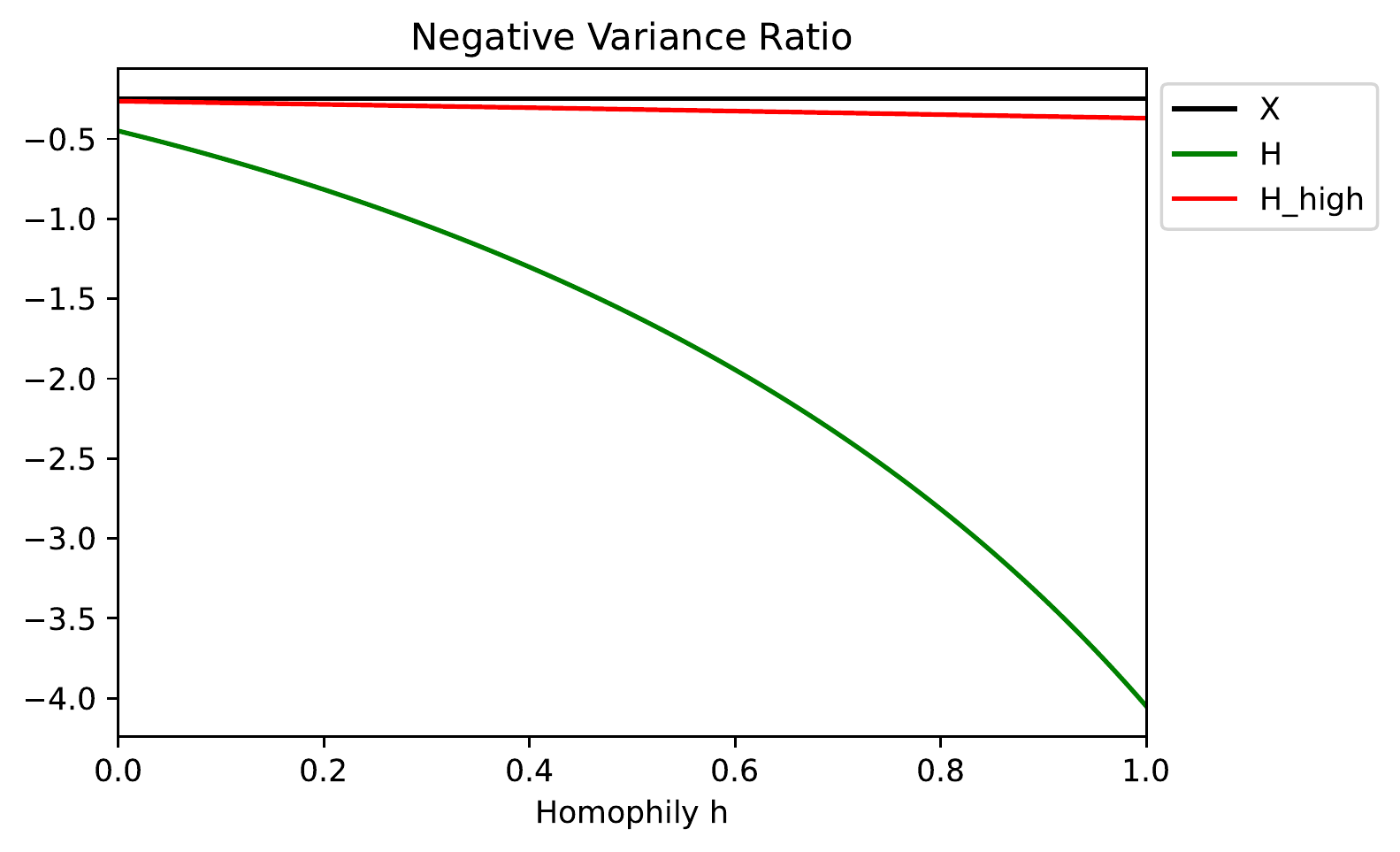}
     }
     }
     }
     \hspace*{-0.2cm}
     \caption{Comparison of CSBM with different $d_0=25,d_1=5$ setups.}
     \label{fig:csbmh_d0=25_d1=5}
\end{figure}

\vspace{-0.1cm}
\subsection{More General Theoretical Analysis}
\vspace{-0.2cm}
\label{sec:general_theoretical_analysis}
Besides the toy example, in this subsection, we aim to gain a deeper understanding of how LP and HP filters affect ND in a broader context beyond the two-normal settings. To be consistent with previous literature, we follow the assumptions outlined in \cite{ma2021homophily}, which are: 1. The features of node $i$ are sampled from distribution $\mathcal{F}_{z_{i}}$, \ie{}, $\bm{x}_{i} \sim \mathcal{F}_{z_{i}}$, with mean $\bm{\mu}_{z_{i}} \in \mathbb{R}^{F_h}$; 2. Dimensions of $\bm{x}_{i}$ are independent to each other; 3. Each dimension in feature $\bm{x}_{i}$ is bounded, \ie{} $a \leq \bm{x}_{i,k} \leq b$; 4. For node $i$, the labels of its neighbors are independently sampled from neighborhood distribution $\mathcal{D}_{z_{i}}$ and repeated for $d_i$ times. We refer to a graph that follows the above assumptions as $\mathcal{G}=\left\{\mathcal{V}, \mathcal{E},\left\{\mathcal{F}_{c}, c \in \mathcal{C}\right\},\left\{\mathcal{D}_{c}, c \in \mathcal{C}\right\}\right\}, \mathcal{C}=\{1,\dots,C\}$ and $(b-a)^2$ reflects how variation the features are. The authors in \cite{ma2021homophily} analyze the distance between the aggregated node embedding and its expectation, \ie{} $\norm{\bm{h}_i - \mathbb{E}(\bm{h}_i)}_2$, which only considers the intra-class ND and has been shown to be inadequate for a comprehensive understanding of ND. Instead, we investigate \textbf{how significant the intra-class embedding distance is smaller than the inter-class embedding distance} in the following theorem, which is a better way to understand ND.

\begin{theorem} 2
Suppose a graph $\mathcal{G}=\left\{\mathcal{V}, \mathcal{E},\left\{\mathcal{F}_{c}, c \in \mathcal{C}\right\},\left\{\mathcal{D}_{c}, c \in \mathcal{C}\right\}\right\}$ meets all the above assumptions (1-4). For nodes $i,j,v \in \mathcal{V}$, suppose $z_i \neq z_j$ and $z_i = z_v$, then for constants $t_x,t_h,t_\text{HP}$ that satisfy $t_x \geq \sqrt{F_h} D_x(i,j), \; t_h \geq \sqrt{F_h} D_h(i,j), \; t_\text{HP} \geq \sqrt{F_h} D_\text{HP}(i,j)$ we have
\vspace{-0.1cm}
\begin{equation}
\begin{aligned}
 & \resizebox{0.7\hsize}{!}{$\mathbb{P}\left(\norm{\bm{x}_i - \bm{x}_j}_2  \geq \norm{\bm{x}_i - \bm{x}_v}_2 + t_x \right)  \leq 2 F_h \exp{\left( -\frac{(D_x(v,j) - \frac{t_x}{\sqrt{F_h}} )^2}{V_x(v,j)} \right)}, $} \\
 & \resizebox{0.7\hsize}{!}{$\mathbb{P}(\norm{\bm{h}_i-\bm{h}_j}_2  \geq \norm{\bm{h}_i-\bm{h}_v}_2 + t_h)  \leq 2 F_h \exp{\left( -\frac{(D_h(v,j) - \frac{t_h}{\sqrt{F_h}} )^2}{V_h(v,j)} \right)},$} \\
& \resizebox{0.7\hsize}{!}{$ \mathbb{P}(\norm{\bm{h}_i^\text{HP} - \bm{h}_j^\text{HP}}_2  \geq \norm{\bm{h}_i^\text{HP} - \bm{h}_v^\text{HP}}_2  + t_\text{HP}) \leq 2 F_h \exp{\left( -\frac{\left(D_\text{HP}(v,j) - \frac{t_\text{HP}}{\sqrt{F_h}} \right)^2}{V_\text{HP}(v,j)} \right)}, $}
\end{aligned}
\end{equation}
\vspace{-0.3cm}
\begin{align*}
 &\resizebox{1\hsize}{!}{$\text{where }D_x(v,j) = \norm{\bm{\mu}_{z_v} - \bm{\mu}_{z_j}}_2, \; V_x(v,j) = (b-a)^2, \ D_h(v,j) = \norm{\tilde{\bm{\mu}}_{z_v} - \tilde{\bm{\mu}}_{z_j}}_2, V_h(v,j) = \left(\frac{1}{2d_v} + \frac{1}{2d_j}\right)(b-a)^2,$} \\
 &\resizebox{1\hsize}{!}{$ D_\text{HP}(v,j) = \norm{\bm{{\mu}}_{z_v} - \bm{\tilde{\mu}}_{z_v} - \left( \bm{{\mu}}_{z_j} - \bm{\tilde{\mu}}_{z_j} \right) }_2, \ V_\text{HP}(v,j) = \left(1+\frac{1}{2d_v} + \frac{1}{2d_j}\right)(b-a)^2, \ \bm{\tilde{\mu}}_{z_v} =  \sum\limits_{\substack{u \in \mathcal{N}(v)} } \mathbb{E}_{\substack{ z_{u} \sim \mathcal{D}_{z_v}, \\ \mathbf{x}_{u} \sim \mathcal{F}_{z_{u}}} } \left[ \frac{1}{d_v}  \bm{x}_{u} \right].$}
\end{align*}
See the proof in Appendix \ref{appendix:proof_theorem2}
\end{theorem}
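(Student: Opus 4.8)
The plan is to prove all three inequalities from a single template, since the LP and HP filtered features $\bm{h},\bm{h}^\text{HP}$ have the same probabilistic structure as $\bm{x}$ once we track their means and per-coordinate ranges. Write $\bm{g}$ for any one of $\bm{x},\bm{h},\bm{h}^\text{HP}$ and let $(D,V)$ denote the matching pair $(D_x,V_x)$, $(D_h,V_h)$, or $(D_\text{HP},V_\text{HP})$. The first step is to eliminate the shared node $i$ via the reverse triangle inequality: since $(\bm{g}_i-\bm{g}_j)-(\bm{g}_i-\bm{g}_v)=\bm{g}_v-\bm{g}_j$, we get $\norm{\bm{g}_i-\bm{g}_j}_2-\norm{\bm{g}_i-\bm{g}_v}_2\leq\norm{\bm{g}_v-\bm{g}_j}_2$. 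Hence the event in each bound is contained in $\{\norm{\bm{g}_v-\bm{g}_j}_2\geq t\}$, and it suffices to bound $\mathbb{P}(\norm{\bm{g}_v-\bm{g}_j}_2\geq t)$. This removes $\bm{g}_i$, and $\bm{g}_v,\bm{g}_j$ are independent (their defining node features and independently sampled neighborhoods do not overlap), so each coordinate of $\bm{g}_v-\bm{g}_j$ is a sum of independent bounded terms.

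Next I would pass to coordinates. Because $\norm{\bm{y}}_2\leq\sqrt{F_h}\,\max_k|y_k|$, the event $\{\norm{\bm{g}_v-\bm{g}_j}_2\geq t\}$ implies $\max_k|(\bm{g}_v-\bm{g}_j)_k|\geq t/\sqrt{F_h}$, so a union bound gives $\mathbb{P}(\norm{\bm{g}_v-\bm{g}_j}_2\geq t)\leq\sum_{k=1}^{F_h}\mathbb{P}(|(\bm{g}_v-\bm{g}_j)_k|\geq t/\sqrt{F_h})$. For a fixed coordinate, set $y_k=(\bm{g}_v-\bm{g}_j)_k$ with mean $m_k=\mathbb{E}[y_k]$; here $m_k$ is the $k$-th entry of $\bm{\mu}_{z_v}-\bm{\mu}_{z_j}$, of $\tilde{\bm{\mu}}_{z_v}-\tilde{\bm{\mu}}_{z_j}$, or of $(\bm{\mu}_{z_v}-\tilde{\bm{\mu}}_{z_v})-(\bm{\mu}_{z_j}-\tilde{\bm{\mu}}_{z_j})$ respectively, so that $|m_k|\leq D(v,j)$. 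The hypothesis $t\geq\sqrt{F_h}\,D(i,j)$ together with $z_i=z_v$ (which forces $D(i,j)=D(v,j)$) gives $t/\sqrt{F_h}\geq D(v,j)\geq|m_k|$, so both shifted thresholds $t/\sqrt{F_h}-m_k$ and $t/\sqrt{F_h}+m_k$ are at least $t/\sqrt{F_h}-D(v,j)\geq 0$. Splitting $\mathbb{P}(|y_k|\geq t/\sqrt{F_h})$ into its two tails and applying Hoeffding's inequality to the centered sum $y_k-m_k$ then yields, for each tail, the bound $\exp(-(D(v,j)-t/\sqrt{F_h})^2/V(v,j))$.

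The remaining work is the per-filter bookkeeping of Hoeffding's sum of squared ranges, which produces the three distinct $V$'s. For $\bm{x}$, $y_k$ is a difference of two independent variables in $[a,b]$, so the squared ranges sum to $2(b-a)^2$ and Hoeffding's factor of two yields denominator $(b-a)^2=V_x$. For $\bm{h}_v=\tfrac{1}{d_v}\sum_{u\in\mathcal{N}_v}\bm{x}_u$, coordinate $k$ of $\bm{h}_v-\bm{h}_j$ sums $d_v$ terms of range $(b-a)/d_v$ and $d_j$ terms of range $(b-a)/d_j$, giving squared-range total $(b-a)^2(1/d_v+1/d_j)$, which halves to $V_h=(\tfrac{1}{2d_v}+\tfrac{1}{2d_j})(b-a)^2$. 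For $\bm{h}^\text{HP}=\bm{x}-\bm{h}$ (self-loops excluded, so center and neighbor draws are independent), two extra range-$(b-a)$ terms from $\bm{x}_v,\bm{x}_j$ give total $(b-a)^2(2+1/d_v+1/d_j)$ and $V_\text{HP}=(1+\tfrac{1}{2d_v}+\tfrac{1}{2d_j})(b-a)^2$. In every case the squared-range total equals $2V$, so the factor of two cancels; summing two tails over $F_h$ coordinates reproduces the prefactor $2F_h$.

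The main obstacle is the middle step: controlling each coordinate's mean by the \emph{global} Euclidean norm, $|m_k|\leq\norm{\cdot}_2=D(v,j)$, and then using $t\geq\sqrt{F_h}\,D(i,j)$ to guarantee the Hoeffding thresholds are nonnegative, so the inequality is both valid and tight enough to recover $(D(v,j)-t/\sqrt{F_h})^2$ in the exponent. The reverse-triangle reduction and the range computations are routine once the independence of $v$'s and $j$'s neighborhoods (and the absence of self-loops for the HP filter) is invoked.
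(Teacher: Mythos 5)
Your proposal is correct and follows essentially the same route as the paper's proof: the reverse-triangle-inequality reduction to bounding $\mathbb{P}\left(\norm{\bm{g}_v-\bm{g}_j}_2 \geq t\right)$, the coordinate-wise union bound (the paper's Lemma 1), control of each coordinate's mean $|m_k|$ by the Euclidean norm $D(v,j)$ together with the hypothesis $t \geq \sqrt{F_h}\,D(i,j) = \sqrt{F_h}\,D(v,j)$, and per-coordinate sub-Gaussian concentration with the same range bookkeeping yielding $V_x$, $V_h$, $V_\text{HP}$. The only cosmetic difference is that you invoke Hoeffding's inequality in packaged form for sums of independent bounded variables, whereas the paper unrolls the identical argument as Markov's inequality plus Hoeffding's lemma with an explicit optimization over the Chernoff parameter $s$.
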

We can see that, the probability upper bound mainly depends on a distance term (inter-class ND) and normalized variance term (intra-class ND). The normalized variance term of HP filter is less sensitive to the changes of node degree than that of LP filter because there is an additional 1 in the constant term. Moreover, we show that the distance term of HP filter actually depends on the \textbf{relative center distance}, which is a novel discovery. As shown in Figure \ref{fig:relative_center_distance_hp_filter}, when homophily decreases, the aggregated centers will move away from the original centers, and the relative center distance (purple) will get larger which means the embedding distance of nodes from different classes will have larger probability to be big. This explains how HP filter work for some heterophily cases. 
\begin{wrapfigure}{R}{0.45\textwidth}
  \begin{center}
    \includegraphics[width=1.\textwidth]{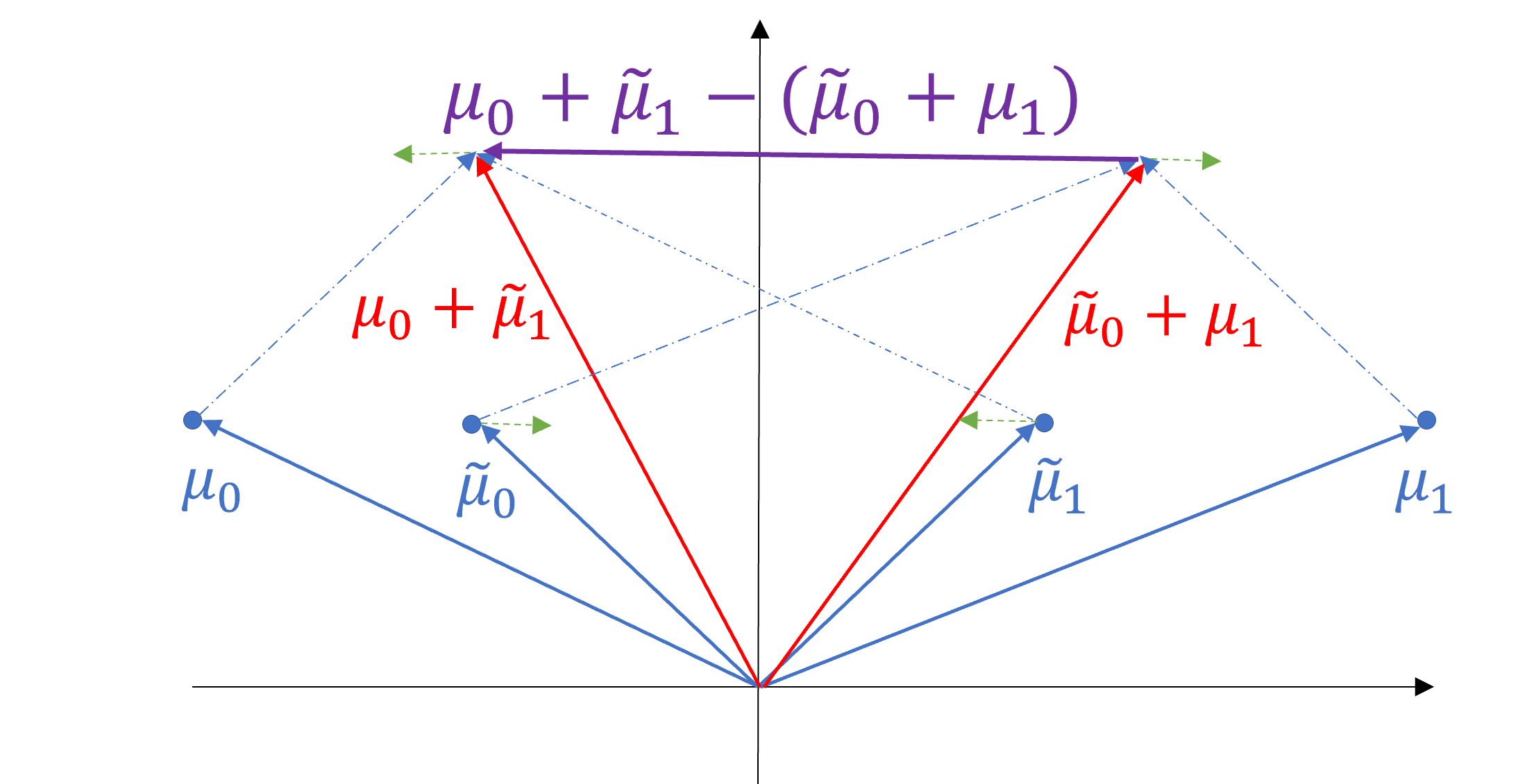}
  \end{center}
  \caption{Demonstration of how HP filter captures the relative center distance.}
  \label{fig:relative_center_distance_hp_filter}
\end{wrapfigure}
Overall, in a more general setting with weaker assumptions,  we can see that ND is also described by the intra- and inter-class ND terms together rather than intra-class ND only, which is consistent with CSBM-H.
\begin{table*}[htbp]
  \centering
  \caption{P-values, homophily values and classifier-based performance metrics on 9 real-world benchmark datasets. Cells marked by grey are incorrect results for both SGC v.s. MLP-1 and GCN v.s. MLP-2 and cells marked by blue are incorrect for 1 of the 2 tests. We use 0.5 as the threshold value of the homophily metrics.}
   \resizebox{1\hsize}{!}{
   \begin{tabular}{c|c|ccccccccc}
    \toprule
    \toprule
          &       & Cornell & Wisconsin & Texas & Film  & Chameleon & Squirrel & Cora  & CiteSeer & PubMed \\
    \midrule
          & $\text{H}_\text{edge}$ & \cellcolor[rgb]{ .647,  .647,  .647}0.5669 & 0.4480 & 0.4106 & 0.3750 & \cellcolor[rgb]{ .647,  .647,  .647}0.2795 & \cellcolor[rgb]{ .647,  .647,  .647}0.2416 & 0.8100 & 0.7362 & \cellcolor[rgb]{ .357,  .608,  .835}0.8024 \\
     & $\text{H}_\text{node}$ & 0.3855 & 0.1498 & 0.0968 & 0.2210 & \cellcolor[rgb]{ .647,  .647,  .647}0.2470 & \cellcolor[rgb]{ .647,  .647,  .647}0.2156 & 0.8252 & 0.7175 & \cellcolor[rgb]{ .357,  .608,  .835}0.7924 \\
    {Baseline} & $\text{H}_\text{class}$ & 0.0468 & 0.0941 & 0.0013 & 0.0110 & \cellcolor[rgb]{ .647,  .647,  .647}0.0620 & \cellcolor[rgb]{ .647,  .647,  .647}0.0254 & 0.7657 & 0.6270 & \cellcolor[rgb]{ .357,  .608,  .835}0.6641 \\
    { Homophily} & $\text{H}_\text{agg}$ & \cellcolor[rgb]{ .647,  .647,  .647}0.8032 & \cellcolor[rgb]{ .647,  .647,  .647}0.7768 & \cellcolor[rgb]{ .647,  .647,  .647}0.694 & \cellcolor[rgb]{ .647,  .647,  .647}0.6822 & 0.61  & \cellcolor[rgb]{ .647,  .647,  .647}0.3566 & 0.9904 & 0.9826 & \cellcolor[rgb]{ .357,  .608,  .835}0.9432 \\
          { Metrics}& $\text{H}_\text{GE}$ & 0.31  & 0.34  & 0.35  & 0.16  & \cellcolor[rgb]{ .647,  .647,  .647}0.0152 & \cellcolor[rgb]{ .647,  .647,  .647}0.0157 & \cellcolor[rgb]{ .647,  .647,  .647}0.17 & \cellcolor[rgb]{ .647,  .647,  .647}0.19 & \cellcolor[rgb]{ .357,  .608,  .835}0.27 \\
    & $\text{H}_\text{adj}$ & 0.1889 & 0.0826 & 0.0258 & 0.1272 & \cellcolor[rgb]{ .647,  .647,  .647}0.0663 & \cellcolor[rgb]{ .647,  .647,  .647}0.0196 & 0.8178 & 0.7588 & \cellcolor[rgb]{ .357,  .608,  .835}0.7431 \\
    & $\text{LI}$    & 0.0169 & 0.1311 & 0.1923 & 0.0002 & \cellcolor[rgb]{ .647,  .647,  .647}0.048 & \cellcolor[rgb]{ .647,  .647,  .647}0.0015 & 0.5904 & \cellcolor[rgb]{ .647,  .647,  .647}0.4508 & \cellcolor[rgb]{ .357,  .608,  .835}0.4093 \\
    \midrule
    {Classifier-based } & $\text{KR}_\text{NNGP}$ & 0.00  & 0.00  & 0.00  & 0.00  & 1.00  & 1.00  & 1.00  & 1.00  & \cellcolor[rgb]{ .357,  .608,  .835}1.00 \\
    {Performance Metrics} & GNB & 0.00  & 0.00  & 0.00  & 0.00  & 1.00  & 1.00  & 1.00  & 1.00  & \cellcolor[rgb]{ .357,  .608,  .835}1.00 \\
    \midrule
    \multicolumn{1}{c|}{\multirow{4}[2]{*}{SGC v.s. MLP-1}} & p-value & 0.00  & 0.00  & 0.00  & 0.00  & 1.00  & 1.00  & 1.00  & 1.00  & 0.00 \\
          & ACC SGC & 70.98 $\pm$ 8.39 & 70.38 $\pm$ 2.85 & 83.28 $\pm$ 5.43 & 25.26 $\pm$ 1.18 & 64.86 $\pm$ 1.81 & 47.62 $\pm$ 1.27 & 85.12 $\pm$ 1.64 & 79.66 $\pm$ 0.75 & 85.5 $\pm$ 0.76 \\
          & ACC MLP-1 & 93.77 $\pm$ 3.34 & 93.87 $\pm$ 3.33 & 93.77 $\pm$ 3.34 & 34.53 $\pm$ 1.48 & 45.01 $\pm$ 1.58 & 29.17 $\pm$ 1.46 & 74.3 $\pm$ 1.27 & 75.51 $\pm$ 1.35 & 86.23 $\pm$ 0.54 \\
          & \textbf{Diff Acc} & -22.79 & -23.49 & -10.49 & -9.27 & 19.85 & 18.45 & 10.82 & 4.15  & -0.73 \\
    \midrule
    \multicolumn{1}{c|}{\multirow{4}[2]{*}{GCN v.s. MLP-2}} &  p-value & 0.00  & 0.00  & 0.00  & 0.00  & 1.00  & 1.00  & 1.00  & 1.00  & \cellcolor[rgb]{ .647,  .647,  .647}0.00 \\
          & ACC GCN & 82.46 $\pm$ 3.11 & 75.5 $\pm$ 2.92 & 83.11 $\pm$ 3.2 & 35.51 $\pm$ 0.99 & 64.18 $\pm$ 2.62 & 44.76 $\pm$ 1.39 & 87.78 $\pm$ 0.96 & 81.39 $\pm$ 1.23 & 88.9 $\pm$ 0.32 \\
          & ACC MLP-2 & 91.30 $\pm$ 0.70 & 93.87 $\pm$ 3.33 & 92.26 $\pm$ 0.71 & 38.58 $\pm$ 0.25 & 46.72 $\pm$ 0.46 & 31.28 $\pm$ 0.27 & 76.44 $\pm$ 0.30 & 76.25 $\pm$ 0.28 & 86.43 $\pm$ 0.13 \\
          & \textbf{Diff Acc} & -8.84 & -18.37 & -9.15 & -3.07 & 17.46 & 13.48 & 11.34 & 5.14  & 2.47 \\
    \bottomrule
    \bottomrule
    \end{tabular}%
    }
  \vspace{-0.3cm}\label{tab:hypothesis_testing_kernel_homo_performance_comparison}%
\end{table*}%
\vspace{-0.4cm}
\section{Empirical Study of Node Distinguishability}
\vspace{-0.2cm}
\label{sec:empirical_study}
Besides theoretical analysis, in this section, we will conduct experiments to verify whether the effect of homophily on the performance of GNNs really relates to its effect on ND. If a strong relation can be verified, then it indicates that we can design new training-free ND-based performance metrics beyond homophily metrics, to evaluate the superiority and inferiority of G-aware models against its coupled G-agnostic models.
\vspace{-0.0cm}
\subsection{Hypothesis Testing on Real-world Datasets}
\vspace{-0.0cm}
\label{sec:hypothesis_testing}

To test whether "intra-class embedding distance is smaller than the inter-class embedding distance" strongly relates to the superiority of G-aware models to their coupled G-agnostic models in practice, we conduct the following hypothesis testing \footnote{Authors in \cite{luan2022we} also conduct hypothesis testing to find out when to use GNNs for node classification, but they test the differences between connected nodes and unconnected nodes instead of intra- and inter-class nodes.}.

\noindent \textbf{Experimental Setup} \quad
We first train two G-aware models GCN, SGC-1 and their coupled G-agnostic models MLP-2 and MLP-1 with fine-tuned hyperparameters provided by \cite{luan2022revisiting}. For each trained model, we calculate the pairwise Euclidean distance of the node embeddings in output layers. Next, we compute the proportion of nodes whose intra-class node distance is significantly smaller than inter-class node distance \footnote{A node is considered as "significantly smaller" when the p-value for its intra-class node distance being smaller than inter-class node distance is smaller than 0.05. In other words, this node is considered as significantly distinguishable. This second statistical test is necessary to avoid noisy nodes. In practice, we noticed that the ratio of intra-class node distance to inter-class node distance is roughly 1 for lots of nodes. This is particularly evident when the labels are sparse and when we use sampling method. It will not only cause instability of the outputs, but also result in false results sometimes. Thus, we don't want to take account these "marginal nodes" into the comparison of Prop values and we found that using another hypothesis test would be helpful.} \eg{} we obtain $\text{Prop}(\text{GCN})$ for GCN. We use Prop to quantify ND and we train the models multiple times for samples to conduct the following hypothesis tests: 
\begin{align*}
    &\resizebox{1\hsize}{!}{$\text{H}_0: \text{Prop}(\text{G-aware model}) \geq \text{Prop}(\text{G-agnostic model}); \ \text{H}_1: \text{Prop}(\text{G-aware model})<\text{Prop}(\text{G-agnostic model})$}
\end{align*}
Specifically, we compare GCN vs. MLP-2 and SGC-1 vs. MLP-1 on $9$ widely used benchmark datasets with different homophily values for 100 times. In each time, we randomly split the data into training/validation/test sets with a ratio of 60\%/20\%/20\%. For the 100 samples, we conduct \textit{T-test for the means of two independent samples of scores}, and obtain the corresponding p-values. The test results and model performance comparisons are shown in Table~\ref{tab:hypothesis_testing_kernel_homo_performance_comparison} (See more experimental tests on state-of-the-art model in Appendix \ref{appendix:detailed_discussion_performance_metrics_more_experimental_results}).

It is observed that, in most cases (except for GCN  vs. MLP-2 on \textit{PubMed} \footnote{We discuss this special case in Appendix \ref{appendix:statistics_comparisons}, together with some similar inconsistency instances found on large-scale datasets.}), when $\text{H}_1$ significantly holds, G-aware models will underperform the coupled G-agnostic models and vice versa. This supports our claim that the performance of G-aware models is closely related to "intra-class vs. inter-class node embedding distances", no matter the homophily levels. It reminds us that the p-value can be a better performance metric for GNNs beyond homophily. Moreover, the p-value can provide a statistical threshold, such as $p \leq 0.05$. This property is not present in existing homophily metrics.

However, it is required to train and fine-tune the models to obtain the p-values, which make it less practical because of computational costs. To overcome this issue, in the next subsection, we propose a classifier-based performance metric that can provide p-values without training.

\vspace{-0.0cm}
\subsection{Towards A Better Metric Beyond Homophily: Classifier-based Performance Metric}
\vspace{-0.0cm}
A qualified classifier should not require iterative training. In this paper, we choose Gaussian Naïve Bayes (GNB)\cite{hastie2009elements} and Kernel Regression (KR) with Neural Network Gaussian Process (NNGP) \cite{lee2017deep,arora2019exact, garriga2018deep,matthews2018gaussian} to capture the \textbf{feature-based linear or non-linear} information.

To get the p-value efficiently, we first randomly sample 500 labeled nodes from $\mathcal{V}$ and splits them into 60\%/40\% as "training" and "test" data. The original features $X$ and aggregated features $H$ of the sampled training and test nodes can be calculated and are then fed into a given classifier. The predicted results and prediction accuracy of the test nodes will be computed directly with feedforward method. We repeat this process for 100 times to get 100 samples of prediction accuracy for $X$ and $H$. Then, for the given classifier, we compute the p-value of the following hypothesis testing, 
\begin{align*}
    &\text{H}_0: \text{Acc}(\text{Classifier}(H)) \geq \text{Acc}(\text{Classifier}(X)); \ \text{H}_1: \text{Acc}(\text{Classifier}(H)) < \text{Acc}(\text{Classifier}(X)) 
\end{align*}

The p-value can provide a statistical threshold value, such as 0.05, to indicate whether $H$ is significantly better than $X$ for node classification. As seen in Table \ref{tab:hypothesis_testing_kernel_homo_performance_comparison}, KR and GNB based metrics significantly outperform the existing homophily metrics, reducing the errors from at least $5$ down to just $1$ out of 18 cases. Besides, we only need a small set of the labels to calculate the p-value, which makes it better for sparse label scenario. Table \ref{tab:homophily_metrics_comparison} summarizes its advantages over the existing metrics. (See Appendix \ref{appendix:detailed_discussion_performance_metrics_more_experimental_results} for more details on classifier-based performance metrics, experiments on synthetic datasets, more detailed comparisons on small-scale and large-scale datasets, discrepancy between linear and non-linear models, results for symmetric renormalized affinity matrix and running time.)

\begin{wraptable}{r}{0.55\textwidth}
  \centering
  \caption{Property comparisons of performance metrics}
    \scalebox{.7}{
        \begin{tabular}{c|cccc}
    \toprule
    \toprule
    \multirow{2}[2]{*}{}Performance &{Linear or } & Feature  & Sparse  & Statistical \\
       Metrics   &  Non-linear    &   Dependency    &   Labels    & Threshold \\
    \midrule
    $\text{H}_\text{node}$ & linear &  \XSolidBrush  &  \XSolidBrush  & \XSolidBrush \\
    $\text{H}_\text{edge}$ & linear &   \XSolidBrush  &  \XSolidBrush  & \XSolidBrush \\
    $\text{H}_\text{class} $ & linear & \XSolidBrush   &  \XSolidBrush  & \XSolidBrush \\
    $\text{H}_\text{agg}$ & linear &  \XSolidBrush   & \Checkmark & \XSolidBrush \\
    $\text{H}_\text{GE}$ & linear & \Checkmark & \Checkmark & \XSolidBrush \\
    $\text{H}_\text{adj}$ & linear &  \XSolidBrush &  \XSolidBrush & \XSolidBrush \\
    $\text{LI}$ & linear &  \XSolidBrush &  \XSolidBrush & \XSolidBrush \\
    Classifier & both & \Checkmark & \Checkmark & \Checkmark \\
    \bottomrule
    \bottomrule
    \end{tabular}%
  \label{tab:homophily_metrics_comparison}%
  }
\end{wraptable} 
\section{Conclusions}
In this paper, we provide a complete understanding of homophily by studying intra- and inter-class ND together. To theoretically investigate ND, we study the PBE and $D_\text{NGJ}$ of the proposed CSBM-H and analyze how graph filters, class variances and node degree distributions will influence the PBE and $D_\text{NGJ}$ curves and the FP, LP, HP regimes. We extend the investigation to broader settings with weaker assumptions and theoretically prove that ND is indeed affected by both intra- and inter-class ND. We also discover that the effect of HP filter depends on the relative center distance. Empirically, through hypothesis testing, we corroborate that the performance of GNNs versus NNs is closely related to whether intra-class node embedding "distance" is smaller than inter-class node embedding "distance". We find that the p-value is a much more effective performance metric beyond homophily metrics on revealing the advantage and disadvantage of GNNs. Based on this observation, we propose classifier-based performance metric, which is a non-linear feature-based metric and can provide statistical threshold value.

\newpage
\section{Reproducibility and Blogs}
\begin{itemize}
    \item Code: \url{https://github.com/SitaoLuan/When-Do-GNNs-Help}. 
    \item Blog in English (on Medium): \url{https://medium.com/SitaoLuan/when-should-we-use-graph-neural-networks-for-node-classification-8ce77a772085}.
    \item Blog in Chinese (on Zhihu): \url{https://zhuanlan.zhihu.com/p/653631858}. 
\end{itemize}

\section{Acknowledgements}
This work was partially supported by the Natural Sciences and Engineering Research Council of Canada (NSERC) Grant RGPIN-2023-04125, RGPIN 2389 and Canadian Institute for Advanced Research (CIFAR) Grant CIFAR FS20-126, CIFAR 10450. Minkai Xu thanks the generous support of Sequoia Capital Stanford Graduate Fellowship.

\bibliography{references.bib}

\begin{thebibliography}{10}

\bibitem{abu2019mixhop}
S.~Abu-El-Haija, B.~Perozzi, A.~Kapoor, N.~Alipourfard, K.~Lerman,
  H.~Harutyunyan, G.~Ver~Steeg, and A.~Galstyan.
\newblock Mixhop: Higher-order graph convolutional architectures via sparsified
  neighborhood mixing.
\newblock In {\em international conference on machine learning}, pages 21--29.
  PMLR, 2019.

\bibitem{arora2019exact}
S.~Arora, S.~S. Du, W.~Hu, Z.~Li, R.~R. Salakhutdinov, and R.~Wang.
\newblock On exact computation with an infinitely wide neural net.
\newblock {\em Advances in neural information processing systems}, 32, 2019.

\bibitem{baranwal2021graph}
A.~Baranwal, K.~Fountoulakis, and A.~Jagannath.
\newblock Graph convolution for semi-supervised classification: Improved linear
  separability and out-of-distribution generalization.
\newblock {\em arXiv preprint arXiv:2102.06966}, 2021.

\bibitem{battaglia2018relational}
P.~W. Battaglia, J.~B. Hamrick, V.~Bapst, A.~Sanchez-Gonzalez, V.~Zambaldi,
  M.~Malinowski, A.~Tacchetti, D.~Raposo, A.~Santoro, R.~Faulkner, et~al.
\newblock Relational inductive biases, deep learning, and graph networks.
\newblock {\em arXiv preprint arXiv:1806.01261}, 2018.

\bibitem{bo2021beyond}
D.~Bo, X.~Wang, C.~Shi, and H.~Shen.
\newblock Beyond low-frequency information in graph convolutional networks.
\newblock {\em arXiv preprint arXiv:2101.00797}, 2021.

\bibitem{chen2023exploiting}
J.~Chen, S.~Chen, J.~Gao, Z.~Huang, J.~Zhang, and J.~Pu.
\newblock Exploiting neighbor effect: Conv-agnostic gnn framework for graphs
  with heterophily.
\newblock {\em IEEE Transactions on Neural Networks and Learning Systems},
  2023.

\bibitem{chien2021adaptive}
E.~Chien, J.~Peng, P.~Li, and O.~Milenkovic.
\newblock Adaptive universal generalized pagerank graph neural network.
\newblock In {\em International Conference on Learning Representations.
  https://openreview. net/forum}, 2021.

\bibitem{csiszar1975divergence}
I.~Csisz{\'a}r.
\newblock I-divergence geometry of probability distributions and minimization
  problems.
\newblock {\em The annals of probability}, pages 146--158, 1975.

\bibitem{davies1973numerical}
R.~B. Davies.
\newblock Numerical inversion of a characteristic function.
\newblock {\em Biometrika}, 60(2):415--417, 1973.

\bibitem{davies1980algorithm}
R.~B. Davies.
\newblock Algorithm as 155: The distribution of a linear combination of $\chi$
  2 random variables.
\newblock {\em Applied Statistics}, pages 323--333, 1980.

\bibitem{deshpande2018contextual}
Y.~Deshpande, S.~Sen, A.~Montanari, and E.~Mossel.
\newblock Contextual stochastic block models.
\newblock {\em Advances in Neural Information Processing Systems}, 31, 2018.

\bibitem{devroye2013probabilistic}
L.~Devroye, L.~Gy{\"o}rfi, and G.~Lugosi.
\newblock {\em A probabilistic theory of pattern recognition}, volume~31.
\newblock Springer Science \& Business Media, 2013.

\bibitem{dowson1982frechet}
D.~Dowson and B.~Landau.
\newblock The fr{\'e}chet distance between multivariate normal distributions.
\newblock {\em Journal of multivariate analysis}, 12(3):450--455, 1982.

\bibitem{ekambaram2014graph}
V.~N. Ekambaram.
\newblock {\em Graph structured data viewed through a fourier lens}.
\newblock University of California, Berkeley, 2014.

\bibitem{farago1993strong}
A.~Farag{\'o} and G.~Lugosi.
\newblock Strong universal consistency of neural network classifiers.
\newblock {\em IEEE Transactions on Information Theory}, 39(4):1146--1151,
  1993.

\bibitem{garriga2018deep}
A.~Garriga-Alonso, C.~E. Rasmussen, and L.~Aitchison.
\newblock Deep convolutional networks as shallow gaussian processes.
\newblock {\em arXiv preprint arXiv:1808.05587}, 2018.

\bibitem{givens1984class}
C.~R. Givens and R.~M. Shortt.
\newblock A class of wasserstein metrics for probability distributions.
\newblock {\em Michigan Mathematical Journal}, 31(2):231--240, 1984.

\bibitem{hamilton2020graph}
W.~L. Hamilton.
\newblock Graph representation learning.
\newblock {\em Synthesis Lectures on Artifical Intelligence and Machine
  Learning}, 14(3):1--159, 2020.

\bibitem{hamilton2017inductive}
W.~L. Hamilton, R.~Ying, and J.~Leskovec.
\newblock Inductive representation learning on large graphs.
\newblock {\em arXiv}, abs/1706.02216, 2017.

\bibitem{hastie2009elements}
T.~Hastie, R.~Tibshirani, J.~H. Friedman, and J.~H. Friedman.
\newblock {\em The elements of statistical learning: data mining, inference,
  and prediction}, volume~2.
\newblock Springer, 2009.

\bibitem{he2021bernnet}
M.~He, Z.~Wei, H.~Xu, et~al.
\newblock Bernnet: Learning arbitrary graph spectral filters via bernstein
  approximation.
\newblock {\em Advances in Neural Information Processing Systems}, 34, 2021.

\bibitem{hofmann2008kernel}
T.~Hofmann, B.~Sch{\"o}lkopf, and A.~J. Smola.
\newblock Kernel methods in machine learning1.
\newblock {\em The Annals of Statistics}, 36(3):1171--1220, 2008.

\bibitem{hua2022high}
C.~Hua, G.~Rabusseau, and J.~Tang.
\newblock High-order pooling for graph neural networks with tensor
  decomposition.
\newblock {\em arXiv preprint arXiv:2205.11691}, 2022.

\bibitem{jeffreys1946invariant}
H.~Jeffreys.
\newblock An invariant form for the prior probability in estimation problems.
\newblock {\em Proceedings of the Royal Society of London. Series A.
  Mathematical and Physical Sciences}, 186(1007):453--461, 1946.

\bibitem{jeffreys1998theory}
H.~Jeffreys.
\newblock {\em The theory of probability}.
\newblock OuP Oxford, 1998.

\bibitem{jin2022raw}
D.~Jin, R.~Wang, M.~Ge, D.~He, X.~Li, W.~Lin, and W.~Zhang.
\newblock Raw-gnn: Random walk aggregation based graph neural network.
\newblock {\em arXiv preprint arXiv:2206.13953}, 2022.

\bibitem{kipf2016classification}
T.~N. Kipf and M.~Welling.
\newblock Semi-supervised classification with graph convolutional networks.
\newblock {\em arXiv}, abs/1609.02907, 2016.

\bibitem{knott1984optimal}
M.~Knott and C.~S. Smith.
\newblock On the optimal mapping of distributions.
\newblock {\em Journal of Optimization Theory and Applications}, 43:39--49,
  1984.

\bibitem{koutroumbas2008pattern}
K.~Koutroumbas and S.~Theodoridis.
\newblock {\em Pattern recognition}.
\newblock Academic Press, 2008.

\bibitem{lee2017deep}
J.~Lee, Y.~Bahri, R.~Novak, S.~S. Schoenholz, J.~Pennington, and
  J.~Sohl-Dickstein.
\newblock Deep neural networks as gaussian processes.
\newblock {\em arXiv preprint arXiv:1711.00165}, 2017.

\bibitem{li2022finding}
X.~Li, R.~Zhu, Y.~Cheng, C.~Shan, S.~Luo, D.~Li, and W.~Qian.
\newblock Finding global homophily in graph neural networks when meeting
  heterophily.
\newblock {\em arXiv preprint arXiv:2205.07308}, 2022.

\bibitem{lim2021new}
D.~Lim, X.~Li, F.~Hohne, and S.-N. Lim.
\newblock New benchmarks for learning on non-homophilous graphs.
\newblock {\em arXiv preprint arXiv:2104.01404}, 2021.

\bibitem{luan2022we}
S.~Luan, C.~Hua, Q.~Lu, J.~Zhu, X.-W. Chang, and D.~Precup.
\newblock When do we need graph neural networks for node classification?
\newblock {\em International Conference on Complex Networks and Their
  Applications}, 2023.

\bibitem{luan2021heterophily}
S.~Luan, C.~Hua, Q.~Lu, J.~Zhu, M.~Zhao, S.~Zhang, X.-W. Chang, and D.~Precup.
\newblock Is heterophily a real nightmare for graph neural networks to do node
  classification?
\newblock {\em arXiv preprint arXiv:2109.05641}, 2021.

\bibitem{luan2022revisiting}
S.~Luan, C.~Hua, Q.~Lu, J.~Zhu, M.~Zhao, S.~Zhang, X.-W. Chang, and D.~Precup.
\newblock Revisiting heterophily for graph neural networks.
\newblock {\em Advances in neural information processing systems},
  35:1362--1375, 2022.

\bibitem{luan2019break}
S.~Luan, M.~Zhao, X.-W. Chang, and D.~Precup.
\newblock Break the ceiling: Stronger multi-scale deep graph convolutional
  networks.
\newblock {\em Advances in neural information processing systems}, 32, 2019.

\bibitem{luan2022complete}
S.~Luan, M.~Zhao, C.~Hua, X.-W. Chang, and D.~Precup.
\newblock Complete the missing half: Augmenting aggregation filtering with
  diversification for graph convolutional networks.
\newblock In {\em NeurIPS 2022 Workshop: New Frontiers in Graph Learning},
  2022.

\bibitem{ma2021homophily}
Y.~Ma, X.~Liu, N.~Shah, and J.~Tang.
\newblock Is homophily a necessity for graph neural networks?
\newblock {\em arXiv preprint arXiv:2106.06134}, 2021.

\bibitem{maehara2019revisiting}
T.~Maehara.
\newblock Revisiting graph neural networks: All we have is low-pass filters.
\newblock {\em arXiv preprint arXiv:1905.09550}, 2019.

\bibitem{massart2007concentration}
P.~Massart.
\newblock {\em Concentration inequalities and model selection: Ecole d'Et{\'e}
  de Probabilit{\'e}s de Saint-Flour XXXIII-2003}.
\newblock Springer, 2007.

\bibitem{matthews2018gaussian}
A.~G. d.~G. Matthews, M.~Rowland, J.~Hron, R.~E. Turner, and Z.~Ghahramani.
\newblock Gaussian process behaviour in wide deep neural networks.
\newblock {\em arXiv preprint arXiv:1804.11271}, 2018.

\bibitem{mcpherson2001birds}
M.~McPherson, L.~Smith-Lovin, and J.~M. Cook.
\newblock Birds of a feather: Homophily in social networks.
\newblock {\em Annual review of sociology}, 27(1):415--444, 2001.

\bibitem{mohri2018foundations}
M.~Mohri, A.~Rostamizadeh, and A.~Talwalkar.
\newblock {\em Foundations of machine learning}.
\newblock MIT press, 2018.

\bibitem{olkin1982distance}
I.~Olkin and F.~Pukelsheim.
\newblock The distance between two random vectors with given dispersion
  matrices.
\newblock {\em Linear Algebra and its Applications}, 48:257--263, 1982.

\bibitem{pardo2018statistical}
L.~Pardo.
\newblock {\em Statistical inference based on divergence measures}.
\newblock CRC press, 2018.

\bibitem{pei2020geom}
H.~Pei, B.~Wei, K.~C.-C. Chang, Y.~Lei, and B.~Yang.
\newblock Geom-gcn: Geometric graph convolutional networks.
\newblock {\em arXiv preprint arXiv:2002.05287}, 2020.

\bibitem{platonov2022characterizing}
O.~Platonov, D.~Kuznedelev, A.~Babenko, and L.~Prokhorenkova.
\newblock Characterizing graph datasets for node classification: Beyond
  homophily-heterophily dichotomy.
\newblock {\em arXiv preprint arXiv:2209.06177}, 2022.

\bibitem{tanton2005encyclopedia}
J.~Tanton.
\newblock {\em Encyclopedia of mathematics}.
\newblock Facts On File, Inc, 2005.

\bibitem{tsitsulin2022synthetic}
A.~Tsitsulin, B.~Rozemberczki, J.~Palowitch, and B.~Perozzi.
\newblock Synthetic graph generation to benchmark graph learning.
\newblock {\em arXiv preprint arXiv:2204.01376}, 2022.

\bibitem{velivckovic2017attention}
P.~Velickovic, G.~Cucurull, A.~Casanova, A.~Romero, P.~Lio, and Y.~Bengio.
\newblock Graph attention networks.
\newblock {\em arXiv}, abs/1710.10903, 2017.

\bibitem{wang2022acmp}
Y.~Wang, K.~Yi, X.~Liu, Y.~G. Wang, and S.~Jin.
\newblock Acmp: Allen-cahn message passing for graph neural networks with
  particle phase transition.
\newblock {\em arXiv preprint arXiv:2206.05437}, 2022.

\bibitem{wei2022understanding}
R.~Wei, H.~Yin, J.~Jia, A.~R. Benson, and P.~Li.
\newblock Understanding non-linearity in graph neural networks from the
  bayesian-inference perspective.
\newblock {\em arXiv preprint arXiv:2207.11311}, 2022.

\bibitem{wu2019simplifying}
F.~Wu, T.~Zhang, A.~H.~d. Souza~Jr, C.~Fifty, T.~Yu, and K.~Q. Weinberger.
\newblock Simplifying graph convolutional networks.
\newblock {\em arXiv preprint arXiv:1902.07153}, 2019.

\bibitem{yan2021two}
Y.~Yan, M.~Hashemi, K.~Swersky, Y.~Yang, and D.~Koutra.
\newblock Two sides of the same coin: Heterophily and oversmoothing in graph
  convolutional neural networks.
\newblock {\em arXiv preprint arXiv:2102.06462}, 2021.

\bibitem{zhu2020graph}
J.~Zhu, R.~A. Rossi, A.~Rao, T.~Mai, N.~Lipka, N.~K. Ahmed, and D.~Koutra.
\newblock Graph neural networks with heterophily.
\newblock {\em arXiv preprint arXiv:2009.13566}, 2020.

\bibitem{zhu2021graph}
J.~Zhu, R.~A. Rossi, A.~Rao, T.~Mai, N.~Lipka, N.~K. Ahmed, and D.~Koutra.
\newblock Graph neural networks with heterophily.
\newblock In {\em Proceedings of the AAAI conference on artificial
  intelligence}, volume~35, pages 11168--11176, 2021.

\bibitem{zhu2020beyond}
J.~Zhu, Y.~Yan, L.~Zhao, M.~Heimann, L.~Akoglu, and D.~Koutra.
\newblock Beyond homophily in graph neural networks: Current limitations and
  effective designs.
\newblock {\em Advances in Neural Information Processing Systems}, 33, 2020.

\end{thebibliography}
\bibliographystyle{abbrv}

\newpage
\appendix
\onecolumn
\section{Proof of Theorem 1}
\label{appendix:proof_of_theorem1}

\begin{theorem} 1 Suppose $\sigma_0^2 \neq \sigma_1^2$ and $\sigma_0^2, \sigma_1^2 > 0$, the prior distribution for $\bm{x}_i$ is $\mathbb{P}(\bm{x}_i\in {\cal C}_0) = \mathbb{P}(\bm{x}_i\in {\cal C}_1) = 1/2$, then the optimal Bayes Classifier ($\text{CL}_{\text{Bayes}}$) for CSBM-H ($\bm{\mu}_0,\bm{\mu}_1,\sigma_0^2 I,\sigma_1^2 I,{d}_0,{d}_1,{h}$) is 
$$\text{CL}_\text{Bayes}(\bm{x}_i) = \left\{
\begin{aligned}
1,\; \eta(\bm{x}_i) \geq 0.5\\
0,\; \eta(\bm{x}_i) < 0.5
\end{aligned}
\right. ,\; \eta(\bm{x}_i) = \mathbb{P}(z_i=1|\bm{x}_i) = \frac{1}{1+\exp{\left(Q(\bm{x}_i)\right)}}, $$
where $Q(\bm{x}_i) = a \bm{x}_i^\top \bm{x}_i + \bm{b}^\top \bm{x}_i +c,\ 
a = \frac{1}{2}\left(\frac{1}{\sigma_1^2} - \frac{1}{\sigma_0^2}\right), 
\bm{b}=\frac{\bm{\mu}_0}{\sigma_0^2}-\frac{\bm{\mu}_1}{\sigma_1^2}, 
c=\frac{\bm{\mu}_1^\top\bm{\mu}_1}{2\sigma_1^2} - \frac{\bm{\mu}_0^\top\bm{\mu}_0}{2\sigma_0^2} + \ln{\left(\frac{\sigma_1^{F_h} }{\sigma_0^{F_h} } \right)}$. 
\end{theorem}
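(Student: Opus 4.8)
The plan is to invoke the standard characterization of the Bayes-optimal classifier under $0$--$1$ loss: the rule that minimizes the misclassification probability assigns $\bm{x}$ to the class with the largest posterior $\mathbb{P}(z=k\mid\bm{x})$. So the whole statement reduces to computing this posterior explicitly for the two-Gaussian model and checking that it takes the stated logistic form. First I would write the posterior via Bayes' rule, $\eta(\bm{x})=\mathbb{P}(z=1\mid\bm{x})=\frac{p_1(\bm{x})\,\mathbb{P}(z=1)}{p_0(\bm{x})\,\mathbb{P}(z=0)+p_1(\bm{x})\,\mathbb{P}(z=1)}$, where $p_0,p_1$ are the class-conditional densities $N(\bm{\mu}_0,\sigma_0^2 I)$ and $N(\bm{\mu}_1,\sigma_1^2 I)$. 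Because the prior is balanced, the factors $\mathbb{P}(z=0)=\mathbb{P}(z=1)=1/2$ cancel, leaving $\eta(\bm{x})=\bigl(1+p_0(\bm{x})/p_1(\bm{x})\bigr)^{-1}$. Hence it suffices to show the likelihood ratio equals $\exp\!\bigl(Q(\bm{x})\bigr)$.

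Next I would substitute the isotropic Gaussian density $p_k(\bm{x})=(2\pi)^{-F_h/2}\sigma_k^{-F_h}\exp\!\bigl(-\norm{\bm{x}-\bm{\mu}_k}_2^2/(2\sigma_k^2)\bigr)$ and take the logarithm of the ratio $p_0/p_1$. This gives $\ln\frac{p_0(\bm{x})}{p_1(\bm{x})}=F_h\ln\frac{\sigma_1}{\sigma_0}-\frac{\norm{\bm{x}-\bm{\mu}_0}_2^2}{2\sigma_0^2}+\frac{\norm{\bm{x}-\bm{\mu}_1}_2^2}{2\sigma_1^2}$. Expanding each squared norm as $\bm{x}^\top\bm{x}-2\bm{\mu}_k^\top\bm{x}+\bm{\mu}_k^\top\bm{\mu}_k$ and collecting terms by degree in $\bm{x}$ yields the quadratic coefficient $\tfrac12(\sigma_1^{-2}-\sigma_0^{-2})=a$, the linear coefficient $\bm{\mu}_0/\sigma_0^2-\bm{\mu}_1/\sigma_1^2=\bm{b}$, and the constant $\frac{\bm{\mu}_1^\top\bm{\mu}_1}{2\sigma_1^2}-\frac{\bm{\mu}_0^\top\bm{\mu}_0}{2\sigma_0^2}+\ln\bigl(\sigma_1^{F_h}/\sigma_0^{F_h}\bigr)=c$, matching the claimed $Q(\bm{x})=a\bm{x}^\top\bm{x}+\bm{b}^\top\bm{x}+c$. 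Therefore $p_0/p_1=\exp(Q(\bm{x}))$ and $\eta(\bm{x})=\bigl(1+\exp(Q(\bm{x}))\bigr)^{-1}$ as asserted, and the MAP rule $\eta(\bm{x})\geq 1/2$ is equivalent to $Q(\bm{x})\leq 0$.

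This argument is essentially routine algebra, so there is no deep obstacle; the only points requiring care are bookkeeping and justification rather than difficulty. The main thing to get right is the constant term $c$: one must keep track of the Gaussian normalization factors, whose ratio contributes the $\ln(\sigma_1^{F_h}/\sigma_0^{F_h})$ term, since these do \emph{not} cancel when $\sigma_0\neq\sigma_1$ (the hypothesis $\sigma_0^2\neq\sigma_1^2$ is exactly what makes $a\neq 0$, so the boundary is a genuine quadric rather than a hyperplane). I would also state explicitly at the outset, citing the standard Bayes-classifier optimality result, why the posterior-maximizing rule minimizes the error probability, so that the computed $\eta$ legitimately defines the \emph{optimal} classifier and not merely \emph{a} classifier.
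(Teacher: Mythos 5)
Your proposal is correct and follows essentially the same route as the paper's proof: Bayes' rule to express $\eta(\bm{x})$ through the likelihood ratio of the two isotropic Gaussian densities, followed by expanding the squared norms and collecting quadratic, linear, and constant terms (including the normalization ratio $\ln(\sigma_1^{F_h}/\sigma_0^{F_h})$) to obtain $Q(\bm{x})$. Your explicit invocation of the standard MAP-optimality result under $0$--$1$ loss is a small justificatory addition the paper leaves implicit, but the computation is identical.
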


\begin{proof}
Since the prior distribution for $\text{CL}_{\text{Bayes}}$ is 
\begin{align*}
   \mathbb{P}(\bm{x}_i\in {\cal C}_0) = \mathbb{P}(\bm{x}_i\in {\cal C}_1) = \frac{1}{2}
\end{align*}
\begin{align*}
   & \text{CL}_\text{Bayes}(z_i=1|\bm{x}_i) = \frac{\mathbb{P}(z_i=1,\bm{x}_i)}{\mathbb{P}(\bm{x}_i)} = \frac{\mathbb{P}(z_i=1)\mathbb{P}(\bm{x}_i|z_i=1)}{\mathbb{P}(z_i=0)\mathbb{P}(\bm{x}_i|z_i=0)+\mathbb{P}(z_i=1)\mathbb{P}(\bm{x}_i|z_i=1)}\\
   & = \frac{1}{1 + \frac{\mathbb{P}(z_i=0)\mathbb{P}(\bm{x}_i|z_i=0)}{\mathbb{P}(z_i=1)\mathbb{P}(\bm{x}_i|z_i=1)}} = \frac{1}{1 + \frac{ (2\pi)^{-F_h/2}\det{\left(\sigma_0^2I\right)}^{-1/2} \exp{\left(-\frac{1}{2\sigma_0^2}(\bm{x}_i - \bm{\mu}_0)^\top(\bm{x}_i - \bm{\mu}_0)  \right)}}{ (2\pi)^{-F_h/2}\det{\left(\sigma_1^2I\right)}^{-1/2} \exp{\left(-\frac{1}{2\sigma_1^2}(\bm{x}_i - \bm{\mu}_1)^\top(\bm{x}_i - \bm{\mu}_1)  \right)}}} \\
   &= \frac{1}{1 + \frac{ \sigma_0^{-F_h} }{ \sigma_1^{-F_h} } \exp{\left(-\frac{1}{2\sigma_0^2}(\bm{x}_i - \bm{\mu}_0)^\top(\bm{x}_i - \bm{\mu}_0) + \frac{1}{2\sigma_1^2}(\bm{x}_i - \bm{\mu}_1)^\top(\bm{x}_i - \bm{\mu}_1) \right)} }\\
   &= \frac{1}{1 + \frac{ \sigma_0^{-F_h} }{ \sigma_1^{-F_h} } \exp{\left(-\frac{1}{2\sigma_0^2}(\bm{x}_i^\top \bm{x}_i - 2\bm{\mu}_0^\top \bm{x}_i +\bm{\mu}_0^\top \bm{\mu}_0) + \frac{1}{2\sigma_1^2}(\bm{x}_i^\top \bm{x}_i - 2\bm{\mu}_1^\top \bm{x}_i + \bm{\mu}_1^\top\bm{\mu}_1) \right)} }\\
   &= \frac{1}{1 + \exp{\left((\frac{1}{2\sigma_1^2}-\frac{1}{2\sigma_0^2})\bm{x}_i^\top \bm{x}_i + (\frac{\bm{\mu}_0}{\sigma_0^2} - \frac{\bm{\mu}_1}{\sigma_1^2})^\top \bm{x}_i + \frac{\bm{\mu}_1^\top \bm{\mu}_1}{2\sigma_1^2} - \frac{\bm{\mu}_0^\top \bm{\mu}_0}{2\sigma_0^2} +\ln{\left(\frac{ \sigma_1^{F_h} }{ \sigma_0^{F_h} } \right)} \right)} }
\end{align*}
For the more general case where $\mathbb{P}(\bm{x}_i\in {\cal C}_0) = \frac{n_0}{n_0+n_1},\ \mathbb{P}(\bm{x}_i\in {\cal C}_1) = \frac{n_1}{n_0 + n_1}$, the results for $a, \bm{b}$ are the same and $c=\frac{\bm{\mu}_1^\top\bm{\mu}_1}{2\sigma_1^2} - \frac{\bm{\mu}_0^\top\bm{\mu}_0}{2\sigma_0^2} + \ln{\left(\frac{n_0 \sigma_1^{F_h} }{n_1 \sigma_0^{F_h} } \right)}$.
\end{proof}

\section{Generalized Jeffreys Divergence}
\label{appendix:NGJD}
Suppose we have
\begin{align*}
    P(\bm{x}) = N(\bm{\mu}_0,\sigma_0^2 I),\ Q(\bm{x}) = N(\bm{\mu}_1,\sigma_1^2 I)
\end{align*}
Then, the KL-divergence between $P(\bm{x})$ and $Q(\bm{x})$ is
\begin{align*}
    & D_\text{KL}(P||Q) = \int P(\bm{x}) \ln{\frac{P(\bm{x})}{Q(\bm{x})}} d\bm{x} = \mathbb{E}_{\bm{x}\sim P(\bm{x})} \ln{\frac{P(\bm{x})}{Q(\bm{x})}} \\
    &= \mathbb{E}_{\bm{x}\sim P(\bm{x})} \ln{\left( \frac{\sigma_1^{F_h}}{\sigma_0^{F_h}} \exp \left(-\frac{1}{2}{\sigma}_0^{-2}(\bm{x}-\bm{\mu}_0)^{\top} (\bm{x}-\bm{\mu}_0) + \frac{1}{2}{\sigma}_1^{-2}(\bm{x}-\bm{\mu}_1)^{\top} (\bm{x}-\bm{\mu}_1)  \right) \right)} \\
    &= F_h \ln{\frac{\sigma_1}{\sigma_0}} + \mathbb{E}_{\bm{x}\sim P(\bm{x})} \left(-\frac{1}{2}{\sigma}_0^{-2}(\bm{x}-\bm{\mu}_0)^{\top} (\bm{x}-\bm{\mu}_0) + \frac{1}{2}{\sigma}_1^{-2}(\bm{x}-\bm{\mu}_1)^{\top} (\bm{x}-\bm{\mu}_1)  \right) \\
    &= F_h \ln{\frac{\sigma_1}{\sigma_0}}- \frac{F_h}{2} + \mathbb{E}_{\bm{x}\sim P(\bm{x})} \left(\frac{1}{2}{\sigma}_1^{-2}(\bm{x}-\bm{\mu}_1)^{\top} (\bm{x}-\bm{\mu}_1)  \right) \\
    &= F_h \ln{\frac{\sigma_1}{\sigma_0}}- \frac{F_h}{2} + F_h\frac{\sigma_0^{2}}{2\sigma_1^{2}} + \frac{(\bm{\mu}_0- \bm{\mu}_1)^\top (\bm{\mu}_0- \bm{\mu}_1)}{2\sigma_1^{2}} \\ 
    & = F_h \ln{\frac{\sigma_1}{\sigma_0}}- \frac{F_h}{2} + F_h\frac{\sigma_0^{2}}{2\sigma_1^{2}} + \frac{d_X^2}{2\sigma_1^{2}}
\end{align*}
where $d_X^2$ is the squared Euclidean distance. In the same way, we have
\begin{align*}
    D_\text{KL}(Q||P) = F_h \ln{\frac{\sigma_0}{\sigma_1}}- \frac{F_h}{2} + F_h\frac{\sigma_1^{2}}{2\sigma_0^{2}} + \frac{d_X^2}{2\sigma_0^{2}}\\
\end{align*}
Suppose $\mathbb{P}(\bm{x} \sim P)=\mathbb{P}(\bm{x} \sim Q)=\frac{1}{2}$, then we have
\begin{align*}
    D_\text{NGJ}(\text{CSBM-H}) & =  -\mathbb{P}(\bm{x} \sim P) \mathbb{E}_{\bm{x}\sim P}\left[\ln{\frac{P(\bm{x})}{Q(\bm{x})}} \right] - \mathbb{P}(\bm{x} \sim Q) \mathbb{E}_{\bm{x}\sim Q}\left[\ln{\frac{Q(\bm{x})}{P(\bm{x})}} \right] \\ 
    & = -\frac{F_h}{4}(\rho^2 + \frac{1}{\rho^2}-2) - d_X^2(\frac{1}{4\sigma_1^2}+\frac{1}{4\sigma_0^2})
\end{align*}

\section{Calculation of Probabilistic Bayes Error (PBE)}
\label{appendix:noncentral_chisquare}
\subsection{An Introduction}
\paragraph{Noncentral $\chi^2$ distribution}
Let $\left(X_1, X_2, \dots, X_i, \ldots, X_k\right)$ be $k$ independent, normally distributed random variables with means $\mu_i$ and unit variances. Then the random variable
$$
\sum_{i=1}^k X_i^2 \sim \chi'^2(k,\lambda)
$$
is distributed according to the noncentral $\chi^2$ distribution. It has two parameters $(k,\lambda)$: $k$ which specifies the number of degrees of freedom (\ie{} the number of $X_i$ ), and $\lambda$ which is the sum of the squared mean of the random variables $X_i$:
$$
\lambda=\sum_{i=1}^k \mu_i^2 .
$$
$\lambda$ is sometimes called the noncentrality parameter.
\paragraph{Generalized $\chi^2$ distribution}
The generalized $\chi^2$ variable can be written as a linear sum of independent noncentral $\chi^2$ variables and a normal variable: 
$$
\xi=\sum_i w_i Y_i+X, \quad Y_i \sim \chi^{\prime 2}\left(k_i, \lambda_i\right), \quad X \sim N\left(m, s^2\right)
$$
Here the parameters are the weights $w_i$, the degrees of freedom $k_i$ and non-centralities $\lambda_i$ of the constituent $\chi^2$, and the normal parameters $m$ and $s$.

\subsection{Quadratic Function}

For $i \in \mathcal{C}_0$, we rewrite $\bm{x}_i = \sigma_0 \bm{y}_i + \bm{\mu}_0$
where $\bm{y}_i$  is the standard normal variable. 
The quadratic function of $Q(\bm{x}_i)$ satisfies
\begin{align*}
    Q(\bm{x}_i) 
    & = a \bm{x}_i^\top \bm{x}_i + \bm{b}^\top \bm{x}_i +c \\
    & = a\Big(\bm{x}_i+\frac{\bm{b}}{2a}\Big)^\top\Big(\bm{x}_i+\frac{\bm{b}}{2a}\Big) 
          + c-\frac{\bm{b}^\top\bm{b}}{4a} \\ 
    & = a\Big(\sigma_0 \bm{y}_i + \bm{\mu}_0+\frac{\bm{b}}{2a}\Big)^\top\Big(\sigma_0 \bm{y}_i + \bm{\mu}_0+\frac{\bm{b}}{2a}\Big) + c-\frac{\bm{b}^\top\bm{b}}{4a} \\
    &= a\sigma_0^2\Big(\bm{y}_i + \frac{\bm{\mu}_0}{\sigma_0}+\frac{ \bm{b}}{2a\sigma_0}\Big)^\top\Big(\bm{y}_i + \frac{\bm{\mu}_0}{\sigma_0} +\frac{\bm{b}}{2a\sigma_0}\Big) + c-\frac{\bm{b}^\top\bm{b}}{4a} \\ 
    & =  w_0 {y}_{i}' + c-\frac{\bm{b}^\top\bm{b}}{4a} \sim \tilde{\chi}^2(w_0,F_h,\lambda_0) + c-\frac{\bm{b}^\top\bm{b}}{4a}
\end{align*}
where ${y}_{i}' = (\bm{y}_i + \frac{\bm{\mu}_0}{\sigma_0}+\frac{ \bm{b}}{2a\sigma_0})^\top (\bm{y}_i + \frac{\bm{\mu}_0}{\sigma_0}+\frac{ \bm{b}}{2a\sigma_0}) \sim \chi'^2(F_h,\lambda_0)$ is distributed as non-central $\chi^2$ distribution, the degree of freedom is $F_h$, $\lambda_0 = (\frac{\bm{\mu}_0}{\sigma_0} + \frac{ \bm{b}}{2a\sigma_0})^\top(\frac{\bm{\mu}_0}{\sigma_0}+\frac{ \bm{b}}{2a\sigma_0})$, the weight $w_0 = a\sigma_0^2$. Then, the cumulative distribution function (CDF) of $Q(\bm{x}_i)$ can be calculated as follows,
\begin{align*}
    \text{CDF}(x) &= \mathbb{P}( Q(\bm{x}_i) \leq x) = \mathbb{P}( \tilde{\chi}^2(w_0,F_h,\lambda_0) \leq  x- c+\frac{\bm{b}^\top\bm{b}}{4a}) = \text{CDF}_{\tilde{\chi}^2(w_0,F_h,\lambda_0)}(x - \xi)  
\end{align*}
where $\xi = c - \frac{\bm{b}^\top\bm{b}}{4a}$. For $j \in \mathcal{C}_1$ and $\bm{h}_{i}, \bm{h}_{j}$, we can apply the same computation. And since
\begin{align*}
& \mathbb{P}(\text{CL}_\text{Bayes}(\bm{x})=0|\bm{x}\in \mathcal{C}_0) = \mathbb{P}(Q(\bm{x})>0|\bm{x}\in \mathcal{C}_0)  = 1-\text{CDF}_{\tilde{\chi}^2(w_0,F_h,\lambda_0)}(-\xi), \\ %
& \mathbb{P}(\text{CL}_\text{Bayes}(\bm{x})=1|\bm{x}\in \mathcal{C}_1) = \mathbb{P}(Q(\bm{x})\leq 0|\bm{x}\in \mathcal{C}_1) = \text{CDF}_{\tilde{\chi}^2(w_1,F_h,\lambda_1)}(-\xi). %
\end{align*}
where $w_1 = a\sigma_1^2, \lambda_1 = (\frac{\bm{\mu}_1}{\sigma_1} + \frac{ \bm{b}}{2a\sigma_1})^\top(\frac{\bm{\mu}_1}{\sigma_1}+\frac{ \bm{b}}{2a\sigma_1})$. Then from  \eqref{eq:bayes_error_csbmh} and with $\mathbb{P}(\bm{x} \sim P) = \mathbb{P}(\bm{x} \sim Q) = 1/2$, the PBE for the two normal settings can be calculated as,
$$
    \frac{\text{CDF}_{\tilde{\chi}^2(w_0,F_h,\lambda_0)}(-\xi) + \left(1- \text{CDF}_{\tilde{\chi}^2(w_1,F_h,\lambda_1)}(-\xi) \right)}{2}
$$

For a special case where $\sigma_0^2=\sigma_1^2 \neq 0$, we have $a=0$ and $Q(\bm{x}_i)=\bm{b}^\top \bm{x}_i +c$ follows a normal distribution:
\begin{align*}
    Q(\bm{x}_i)\sim N(\bm{b}^\top \bm{\mu}_0+c, \sigma_0^2 \bm{b}^\top\bm{b} )
\end{align*}
Then
\begin{align*}
    \text{CDF}(x) 
    & = \mathbb{P}( Q(\bm{x}_i) \leq x) = \mathbb{P}( \bm{b}^\top \bm{\mu}_0+c + \sqrt{\sigma_0^2 \bm{b}^\top\bm{b}} x' \leq x) \\ 
    & = \mathbb{P}( x' \leq \frac{x-(\bm{b}^\top \bm{\mu}_0+c )}{\sqrt{\sigma_0^2 \bm{b}^\top\bm{b}}} )
\end{align*}
where $x' \sim N(0,1)$ and the PBE becomes
$$
    \frac{\text{CDF}_{N(0,1)} \left(\frac{-(\bm{b}^\top \bm{\mu}_0+c )}{\sqrt{\sigma_0^2 \bm{b}^\top\bm{b}}} \right) + \left(1- \text{CDF}_{N(0,1)} \left(\frac{-(\bm{b}^\top \bm{\mu}_1+c )}{\sqrt{\sigma_1^2 \bm{b}^\top\bm{b}}} \right) \right)}{2}
$$

\section{A Discussion of (Imbalanced) Prior Distribution}
\label{appendix:imbalanced_prior_distribution}
In this section, we provide an open-ended discussion on how the prior distribution (\ie{} imbalanced datasets) will influence the ND of CSBM-H, which can possibly lead to some interesting future works.

Let ${\mathbb{P}}(\bm{x} \sim \mathcal{C}_0) = \frac{n_0}{n_1+n_0}=p_0,\; {\mathbb{P}}(\bm{x} \sim \mathcal{C}_1) = \frac{n_1}{n_1+n_0}=p_1, p_0+p_1=1$ and $\rho = \frac{\sigma_0}{\sigma_1}$, which is the ratio of standard deviation and $0 \leq \rho \leq 1$, then $D_\text{NGJ}$ is
\begin{align*}
    D_\text{NGJ}(\text{CSBM-H})
    & = - p_0 D_\text{KL}(P||Q) - p_1 D_\text{KL}(Q||P) \\  
    &= F_h\ln{\rho}(p_1-p_0 ) + \frac{F_h}{2}(p_0\rho^2 + \frac{p_1}{\rho^2}-1) + d_X^2(\frac{p_0}{2\sigma_1^2}+\frac{p_1}{2\sigma_0^2})
\end{align*}
where $d_X^2 = (\bm{\mu}_0- \bm{\mu}_1)^\top (\bm{\mu}_0- \bm{\mu}_1)$, which is the square Euclidean distance between the means of the two distributions.

And the PBE of CSBM-H becomes
$$
    \frac{n_0 \text{CDF}_{\tilde{\chi}^2(w_0,F_h,\lambda_0)}(-\xi) + n_1 \left(1- \text{CDF}_{\tilde{\chi}^2(w_1,F_h,\lambda_1)}(-\xi) \right)}{n_0 + n_1}
$$

\begin{figure}[htbp!]
    \centering
     {
     \subfloat[PBE]{
     \captionsetup{justification = centering}
     \includegraphics[width=0.43\textwidth]{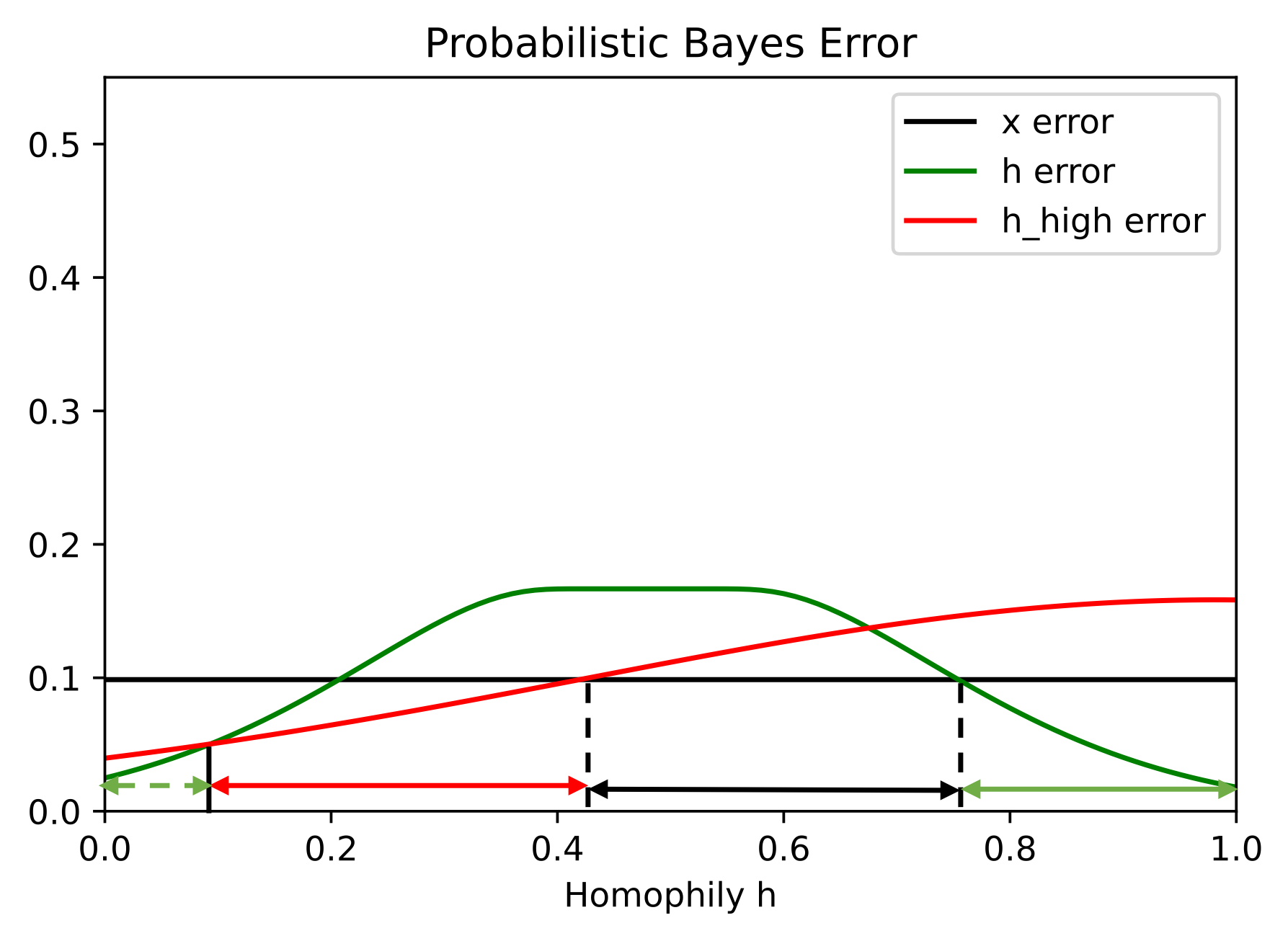}
     }
     \subfloat[$D_\text{NGJ}$]{
     \captionsetup{justification = centering}
     \includegraphics[width=0.5\textwidth]{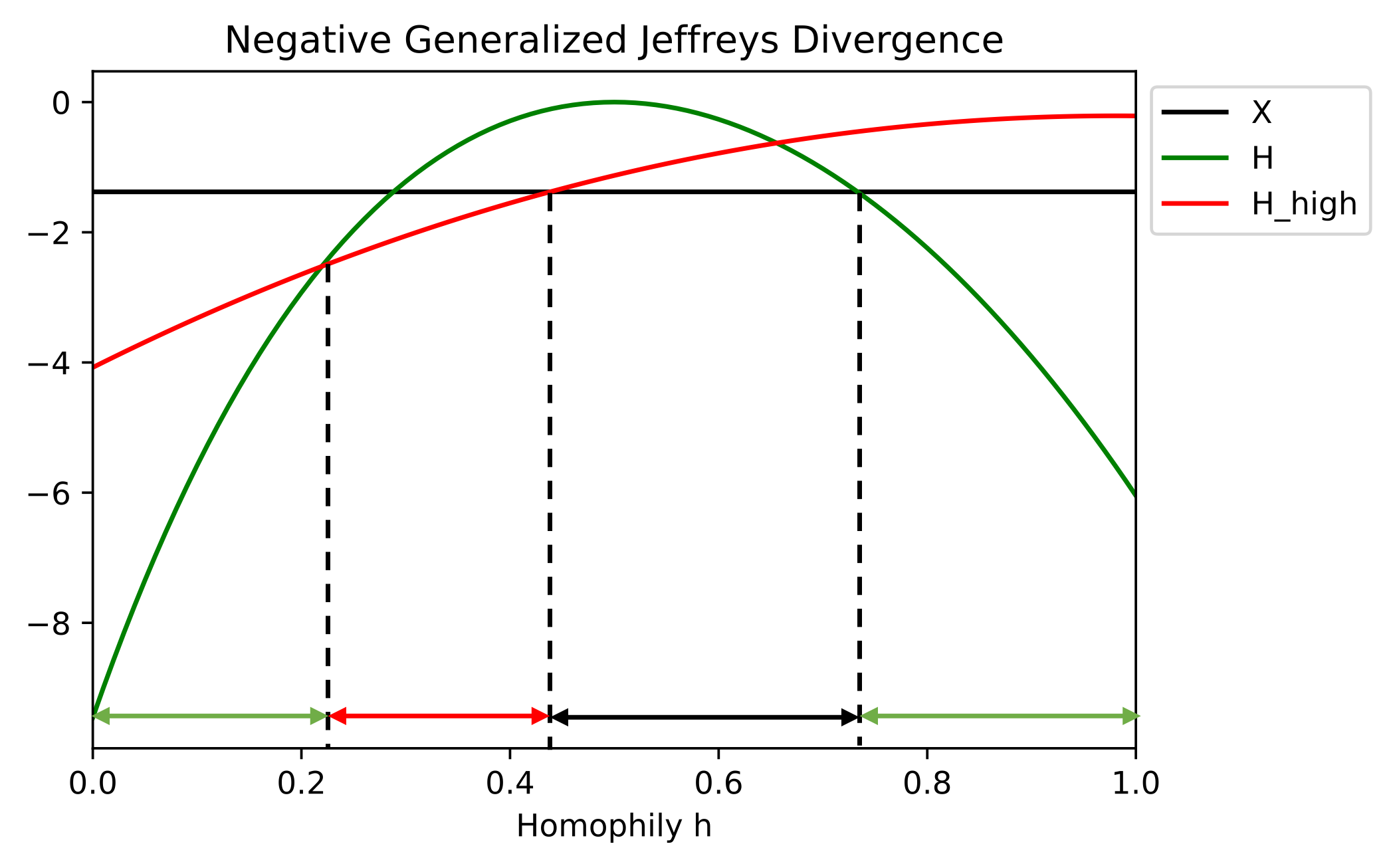}
     }\\
     \subfloat[ENND]{
     \captionsetup{justification = centering}
     \includegraphics[width=0.52\textwidth]{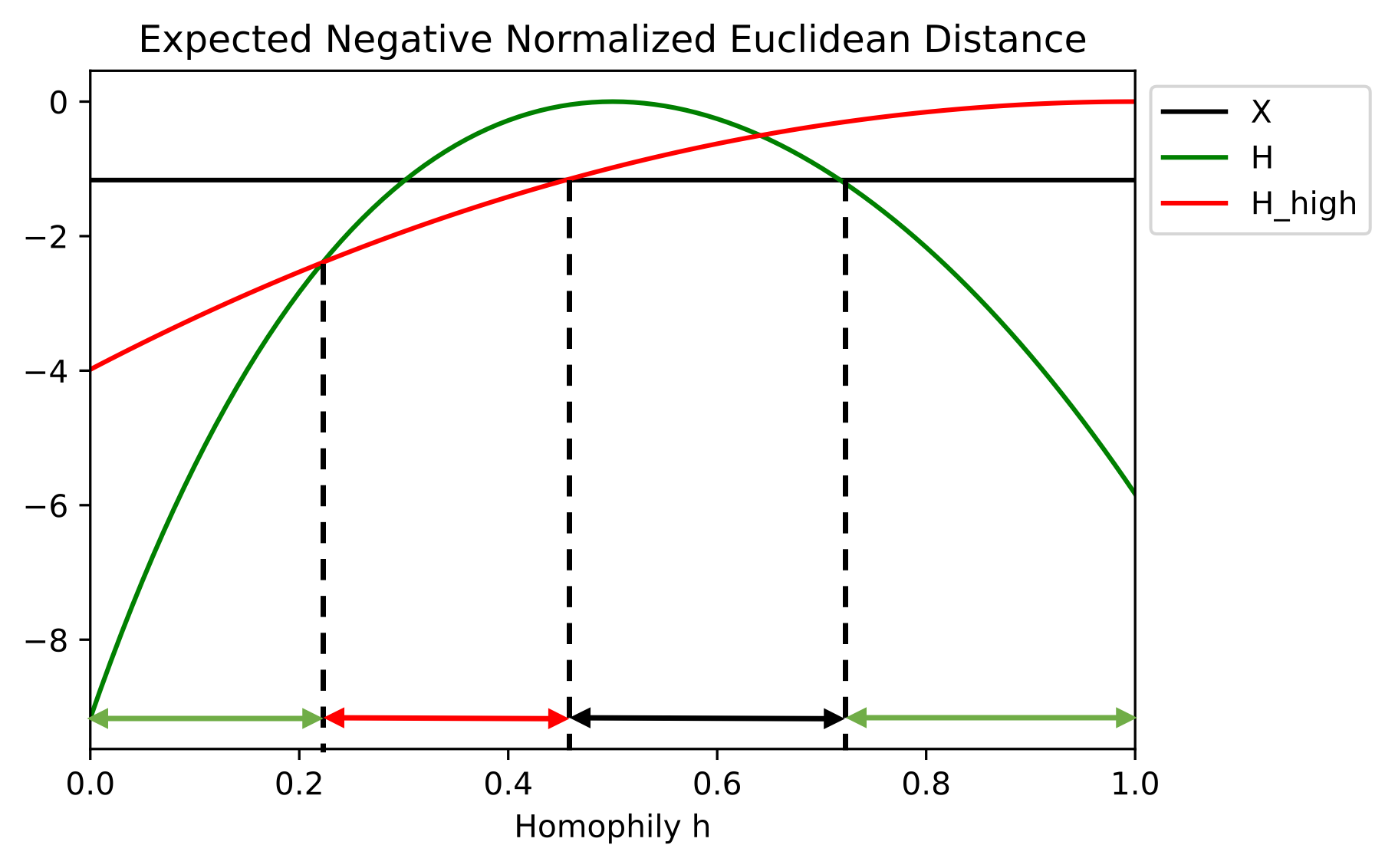}
     } \hspace*{-0.2cm}
     \subfloat[Negative Variance Ratio]{
     \captionsetup{justification = centering}
     \includegraphics[width=0.45\textwidth]{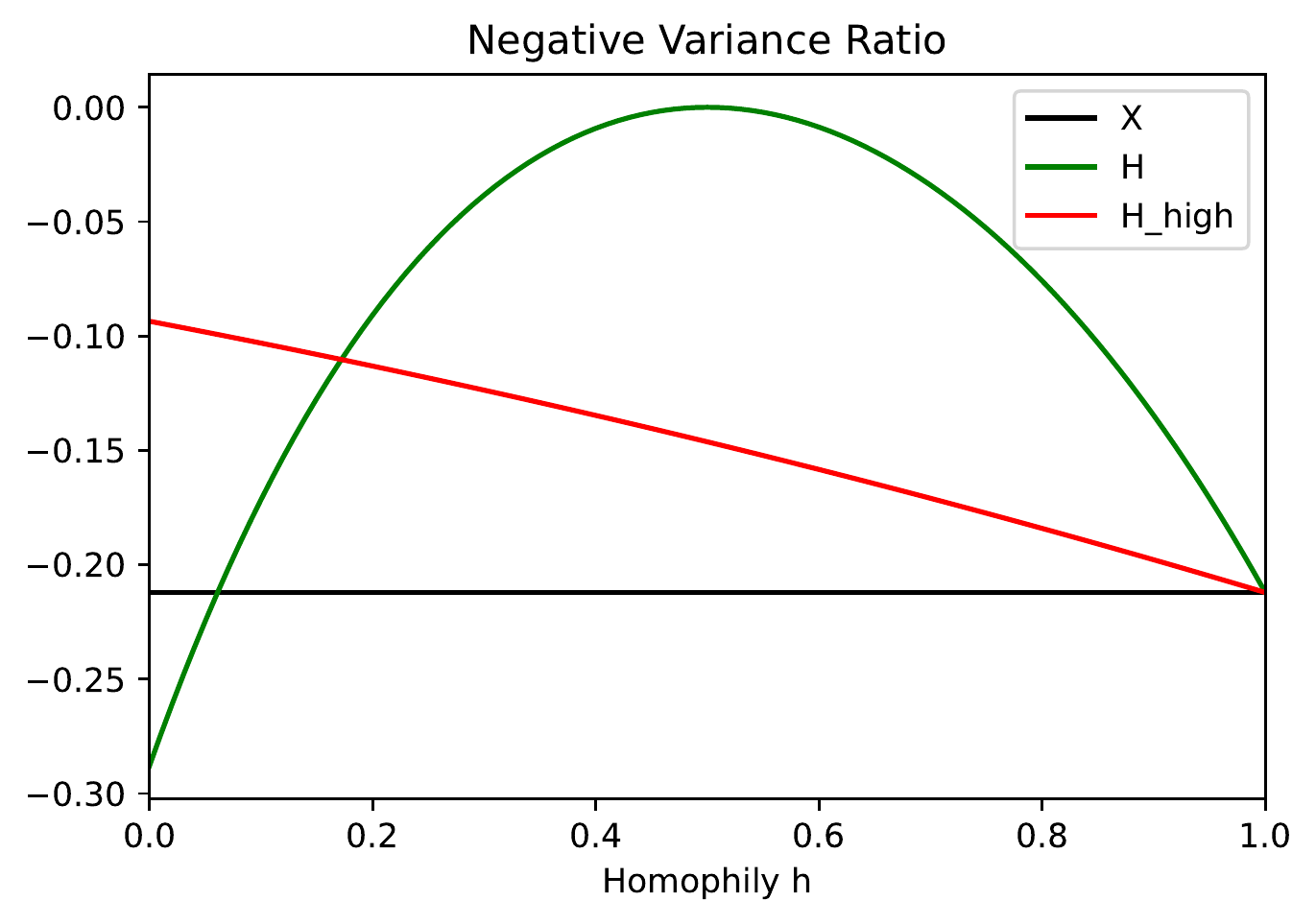}
     } 
     }
     \caption{Comparison of CSBM-H with $n_0=500,n_1=100$.}
     \label{fig:csbmh_n0=500_n1=100}
\end{figure}

\begin{figure}[htbp!]
    \centering
     {
     \subfloat[PBE]{
     \captionsetup{justification = centering}
     \includegraphics[width=0.43\textwidth]{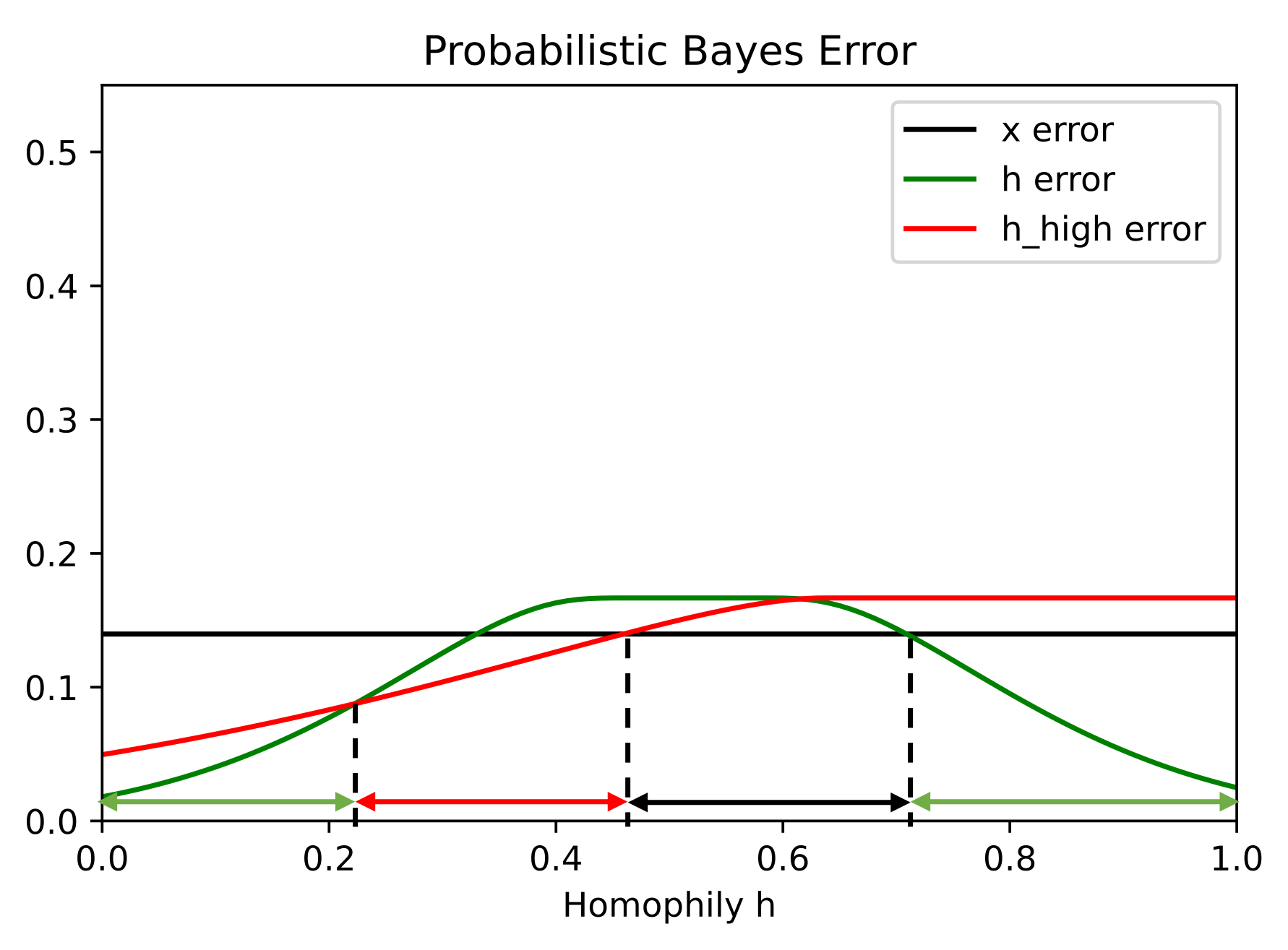}
     }
     \subfloat[$D_\text{NGJ}$]{
     \captionsetup{justification = centering}
     \includegraphics[width=0.5\textwidth]{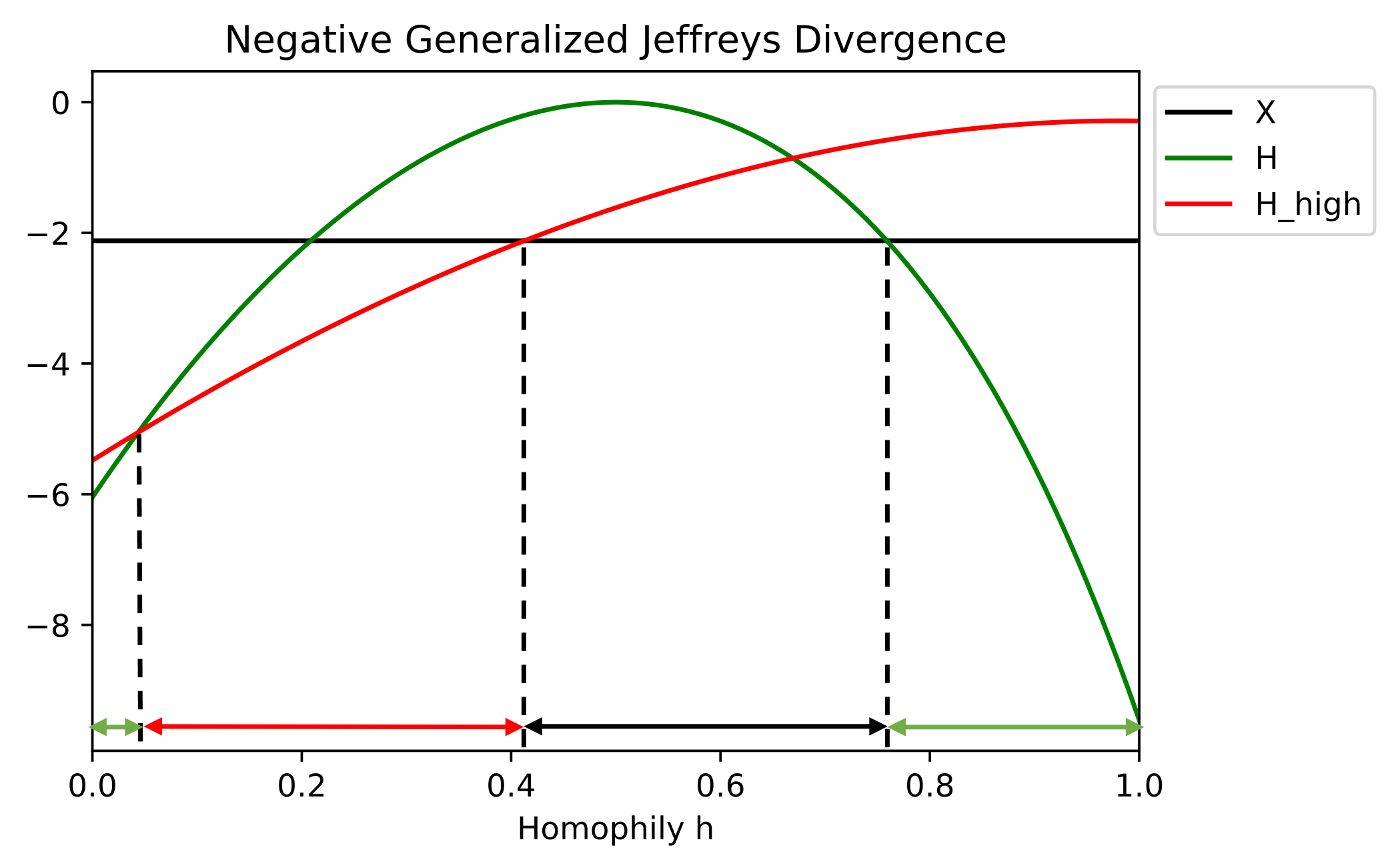}
     }\\
     \subfloat[ENND]{
     \captionsetup{justification = centering}
     \includegraphics[width=0.5\textwidth]{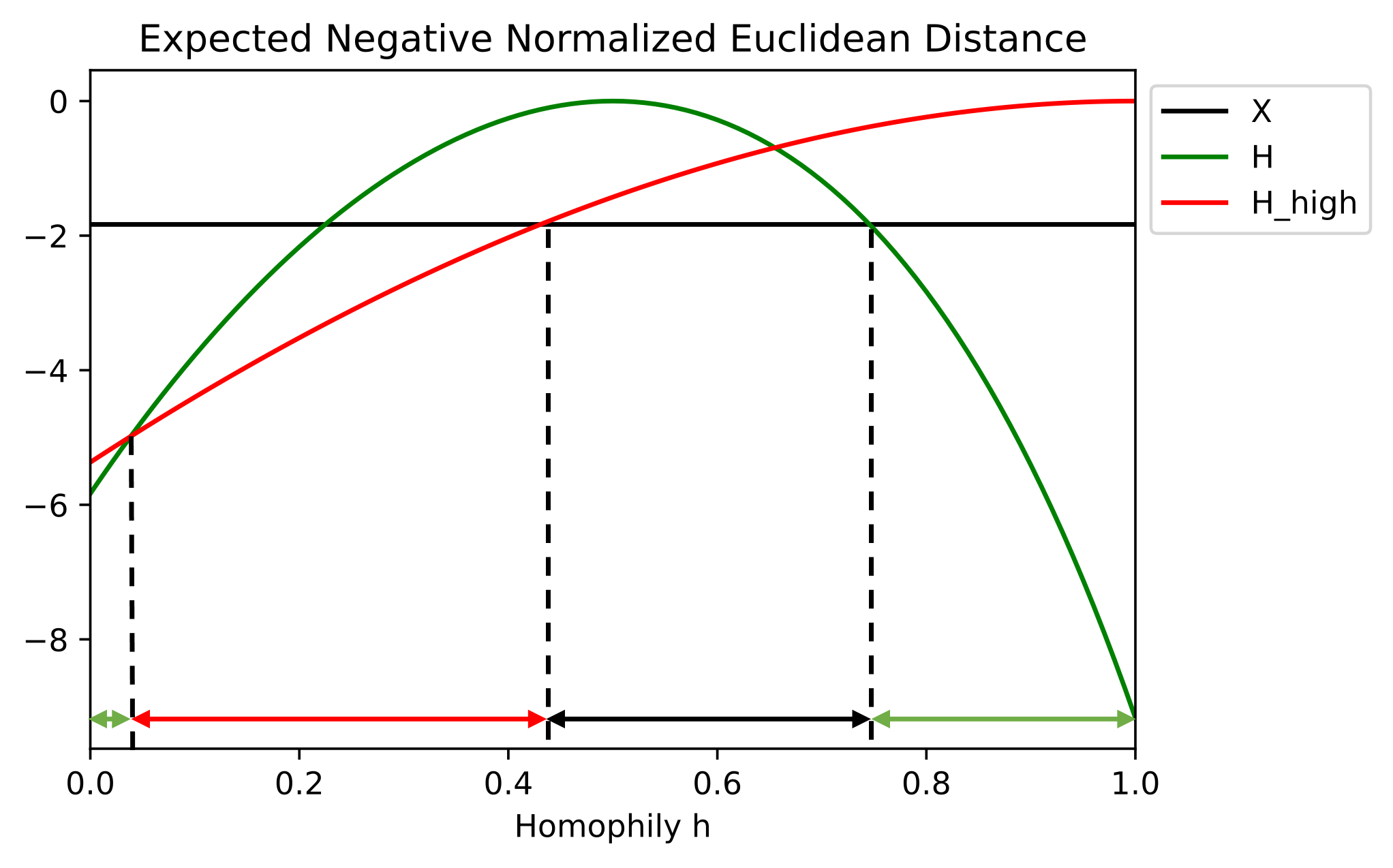}
     } \hspace*{-0.2cm}
     \subfloat[Negative Variance Ratio]{
     \captionsetup{justification = centering}
     \includegraphics[width=0.43\textwidth]{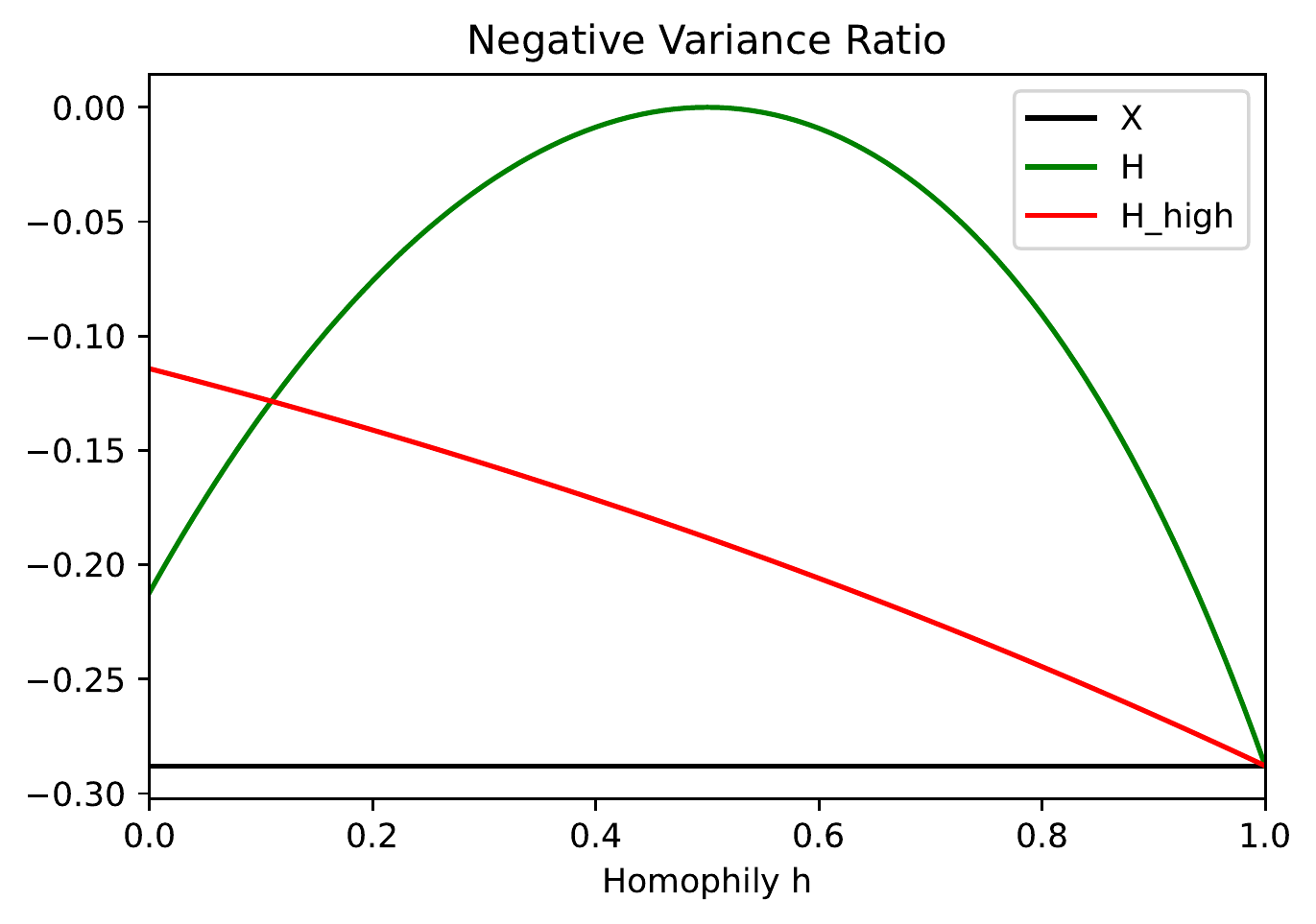}
     } 
     }
     
     \caption{Comparison of CSBM-H with $n_0=100,n_1=500$.}
     \label{fig:csbmh_n0=100_n1=500}
\end{figure}
From Figure \ref{fig:csbmh_n0=500_n1=100} (b) we can find that, as the size of the low-variation class increases, the LP regime expands and HP regime shrinks at the low homophily area in terms of $D_\text{NGJ}$. This is because in ENND, the normalization term $\frac{p_0}{2\sigma_1^2}$ gets higher weight, making the curve for LP filters move down and HP filter move up, which leads to the expansion of LP regime.

However, we can also observe that the changes of PBE and $D_\text{NGJ}$ curves show inconsistent results. As the size of the low-variation class increases, the LP regime shrinks and HP regime expands in PBE, while the LP regime expands and HP regime shrinks in $D_\text{NGJ}$. In Figure \ref{fig:csbmh_n0=100_n1=500}, we observe the similar inconsistency between PBE and $D_\text{NGJ}$ curves. This discrepancy reminds us that the performance of LP and HP filters on imbalanced datasets might be under-explored. We do not have a conclusion for this challenge in this paper and we encourage more researchers to study the connection among the prior distribution, the performance of LP and HP filters and ND.

\section{More About CSBM-H}
\label{appendix:more_about_csbmh}
\subsection{Directed or Undirected Graphs?}
\label{appendix:directed_or_undirected}
\paragraph{Question} Why not generated an undirected graph.
\paragraph{Answer} If we impose undirected assumption in CSBM-H, we have to not only discuss the node degree from intra-class edges, but also discuss degree from inter-class edges and control their relations with the corresponding homophily level. This will inevitably add more parameters to CSBM-H and make the model much more complicated. However, we find that this complication does not bring us extra benefit for understanding the effect of homophily, which deviate the main goal of our paper. And we guess this might be one of the reasons that the existing work mainly keep the discussion within the directed setting \cite{ma2021homophily}. 

Actually, when CSMB-H was firstly designed, we would like to only have one "free parameter" $h$ in it to make it simple. Because in this way, we are able to show the whole picture of the effect of homophily from 0 to 1, like the figures in Section \ref{sec:obervations_on_csbmh}.

\subsection{Extend CSBM-H to More General Settings}
\label{appendix:extend_csbmh_general_settings}
The CSBM-H can be extended to more general settings: 1. The two-normal setting can be expanded to multi-class classification problems with different sets of $(\bm{\mu},\sigma^2 \bm{I}, d)$ parameters for each class; 2. The degrees of nodes can be generalized to different degree distributions; 3. The scalar homophily parameter $h$ can be generalized to matrix $H \in \mathbb{R}^{C \times C}$, where $H_{c_1,c_2}$ represents the probability of nodes in class $\mathcal{C}_1$ connecting to nodes in class $\mathcal{C}_2$, which is the compatibility matrix used in \cite{zhu2021graph}. Furthermore, we can also define the local homophily value $H_{v,c}$ for each node, where $H_{v,c}$ indicates the proportion of neighbors of $v$ that connect to nodes from class $\mathcal{C}_c$. To demonstrate and visualize the effect of homophily intuitively and easily, we use the CSBM-H settings in this paper as stated above. Although the settings are simple, insightful results can still be obtained. The more complicated variants will be left for future work.

\subsection{Two More Metrics of ND: Negative Squared Wasserstein Distance and Hellinger Distance}
\label{appendix:two_more_nd_metrics}
\paragraph{Original Definition}
In general, the Wasserstein distance between two Gaussians is $d=W_2\left({N}\left(\bm{\mu}_0, \Sigma_0 \right) ; {N}\left(\bm{\mu_1}, \Sigma_1\right)\right)$ and we have \cite{givens1984class, knott1984optimal, olkin1982distance, dowson1982frechet}
$$
d^2=\left\|\bm{\mu}_0 - \bm{\mu}_1 \right\|_2^2 + \text{Tr}\left( \Sigma_0 + \Sigma_1 - 2\left(\Sigma_0^{1/2} \Sigma_1 \Sigma_0^{1/2}\right)^{1/2}\right)
$$

The squared Hellinger distance between two Gaussians is \cite{pardo2018statistical}
$$
\resizebox{1\hsize}{!}{$H^2({N}\left(\bm{\mu}_0, \Sigma_0 \right) ; {N}\left(\bm{\mu_1}, \Sigma_1\right)) = 1 -\frac{\text{det}\left(\Sigma_0\right)^{1/4} \text{det}\left(\Sigma_1\right)^{1/4}}{\text{det}\left(\frac{\Sigma_0+\Sigma_1}{2}\right)^{1/2}} \exp \left\{-\frac{1}{8}\left(\bm{\mu}_0 - \bm{\mu}_1 \right)^\top \left(\frac{\Sigma_0 + \Sigma_1}{2}\right)^{-1} \left(\bm{\mu}_0 - \bm{\mu}_1 \right)\right\} $}
$$
Wasserstein distance is a distance function defined between probability distributions and Hellinger distance is a type of $f$-divergence which is used to quantify the similarity between two probability distributions \cite{tanton2005encyclopedia,jeffreys1946invariant}. These two metrics can be used to study the ND of CSBM-H and we will introduce them in the following subsection.

\paragraph{Calculation for CSBM-H}
The negative squared Wasserstein distance (NSWD) for CSBM-H is

\begin{align*}
\text{NSWD} = - \left\|\bm{\mu}_0 - \bm{\mu}_1 \right\|_2^2 - F_h (\sigma_0 - \sigma_1)^2 
\end{align*}

The negative squared Hellinger distance (NSHD) for CSBM-H is
\begin{align*}
\text{NSHD} &= - 1 + \frac{\text{det}\left(\sigma_0^2 I \right)^{1/4} \text{det}\left(\sigma_1^2 I \right)^{1/4}}{\text{det}\left(\frac{\sigma_0^2 I + \sigma_1^2 I}{2}\right)^{1/2}} \exp \left\{-\frac{1}{8}\left(\bm{\mu}_0 - \bm{\mu}_1 \right)^\top \left(\frac{\sigma_0^2 + \sigma_1^2 }{2}\right)^{-1} \left(\bm{\mu}_0 - \bm{\mu}_1 \right)\right\} \\
& = - 1 + \frac{\sigma_0^{F_h/2} \sigma_1^{F_h/2}}{(\frac{\sigma_0^2  + \sigma_1^2 }{2})^{F_h/2} } \exp \left\{-\frac{1}{8}\left(\bm{\mu}_0 - \bm{\mu}_1 \right)^\top \left(\frac{\sigma_0^2 + \sigma_1^2 }{2}\right)^{-1} \left(\bm{\mu}_0 - \bm{\mu}_1 \right)\right\}\\
& = - 1 + \left(\frac{2}{\rho^2 + 1/\rho^2 } \right)^{F_h/2} \exp \left\{-\frac{d_X^2}{4 \left(\sigma_0^2 + \sigma_1^2 \right) }  \right\}
\end{align*}

Although defining the distance in different ways, both NSWD and NSHD indicate that the ND of CSBM-H depends on both intra- and inter-class ND, which is consistent with our conclusions in main paper. Besides, NSWD and NSHD provide analytic expressions for ND, which can be good tools for future research.

\section{More Figures of CSBM-H}
\label{appendix:more_figures_csbmh}

\begin{figure}[htbp!]
    \centering
     {
     \subfloat[PBE]{
     \captionsetup{justification = centering}
     \includegraphics[width=0.42\textwidth]{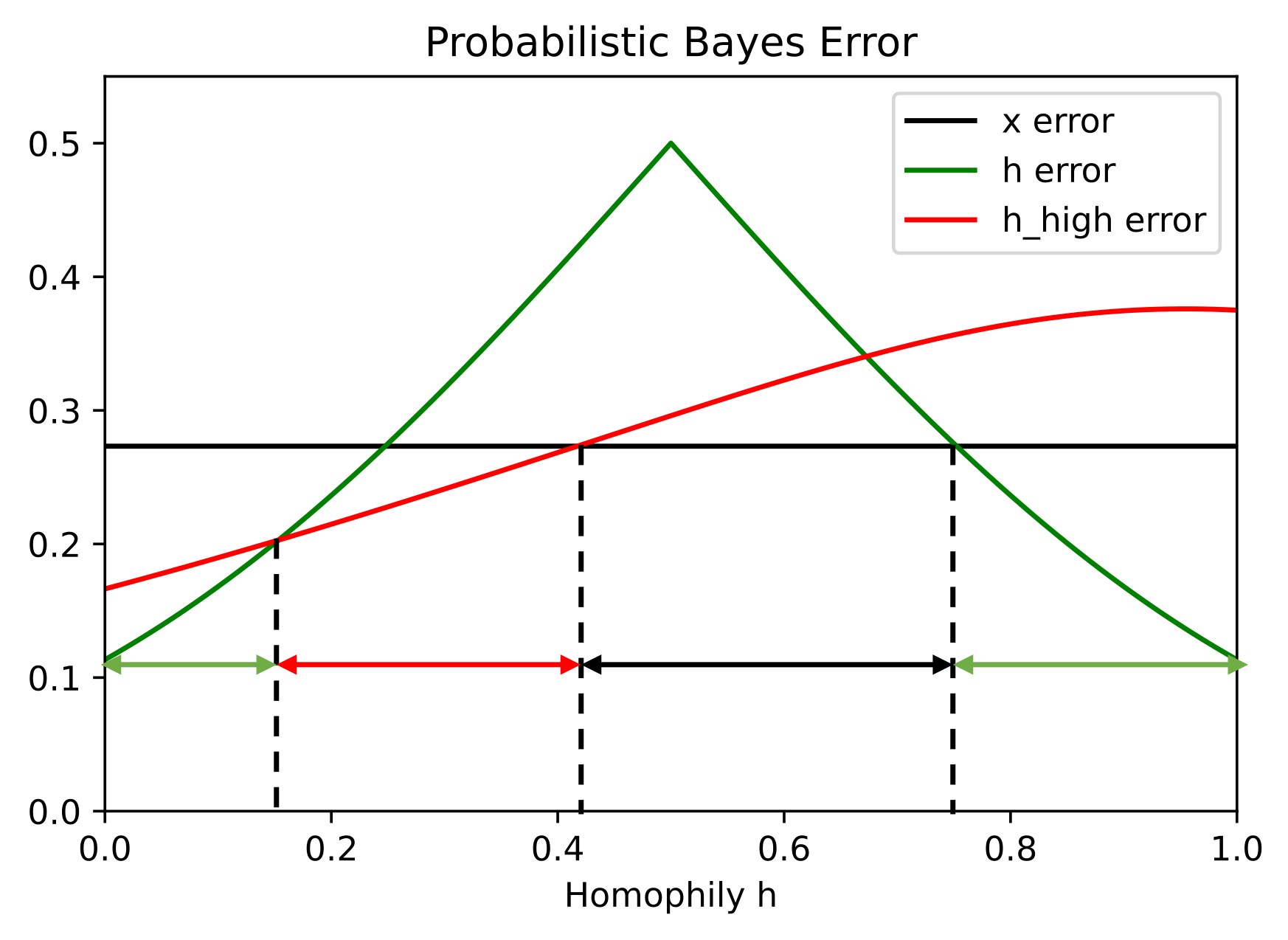}
     }
     \subfloat[$D_\text{NGJ}$]{
     \captionsetup{justification = centering}
     \includegraphics[width=0.5\textwidth]{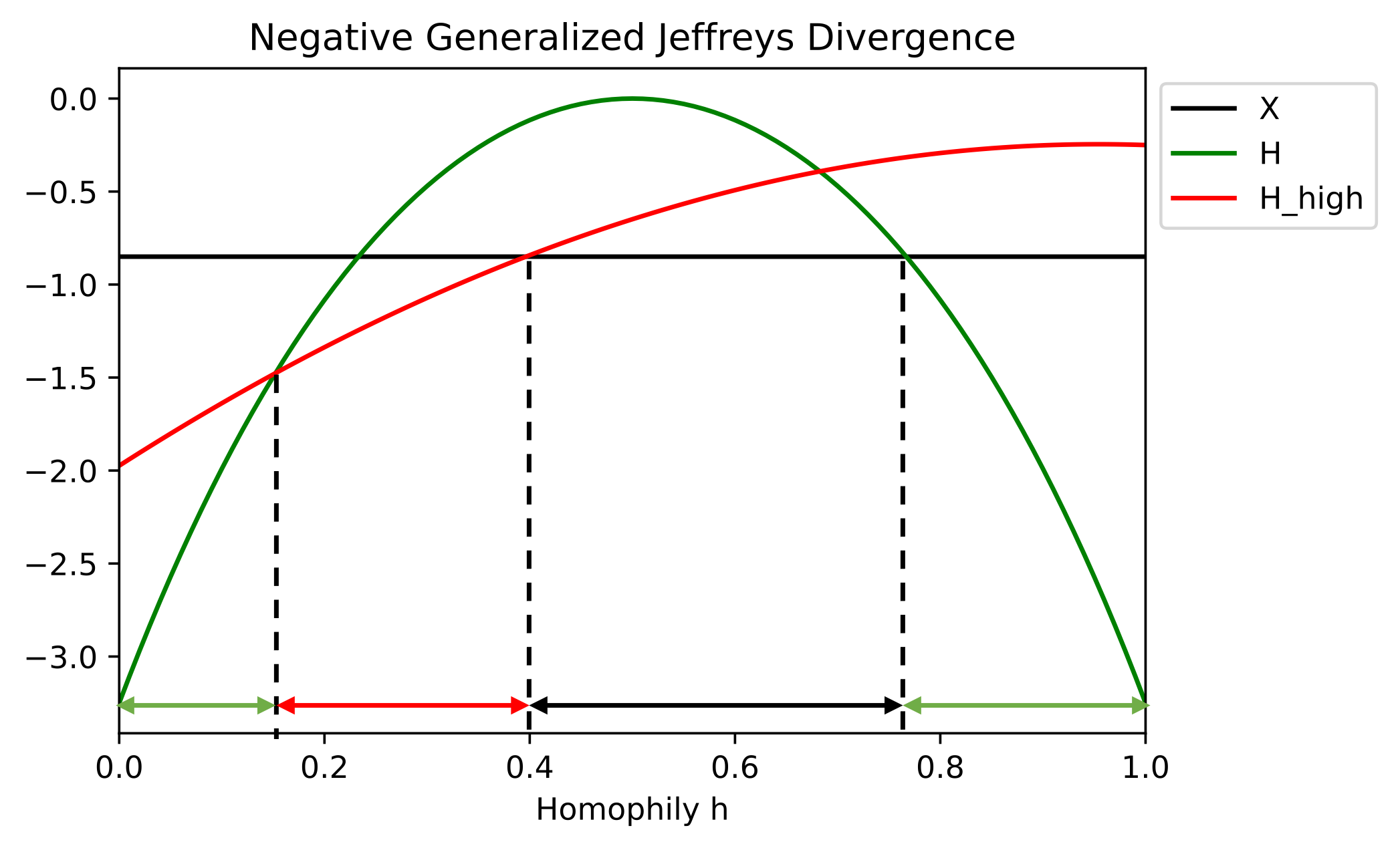}
     }\\ 
     \subfloat[ENND]{
     \captionsetup{justification = centering}
     \includegraphics[width=0.5\textwidth]{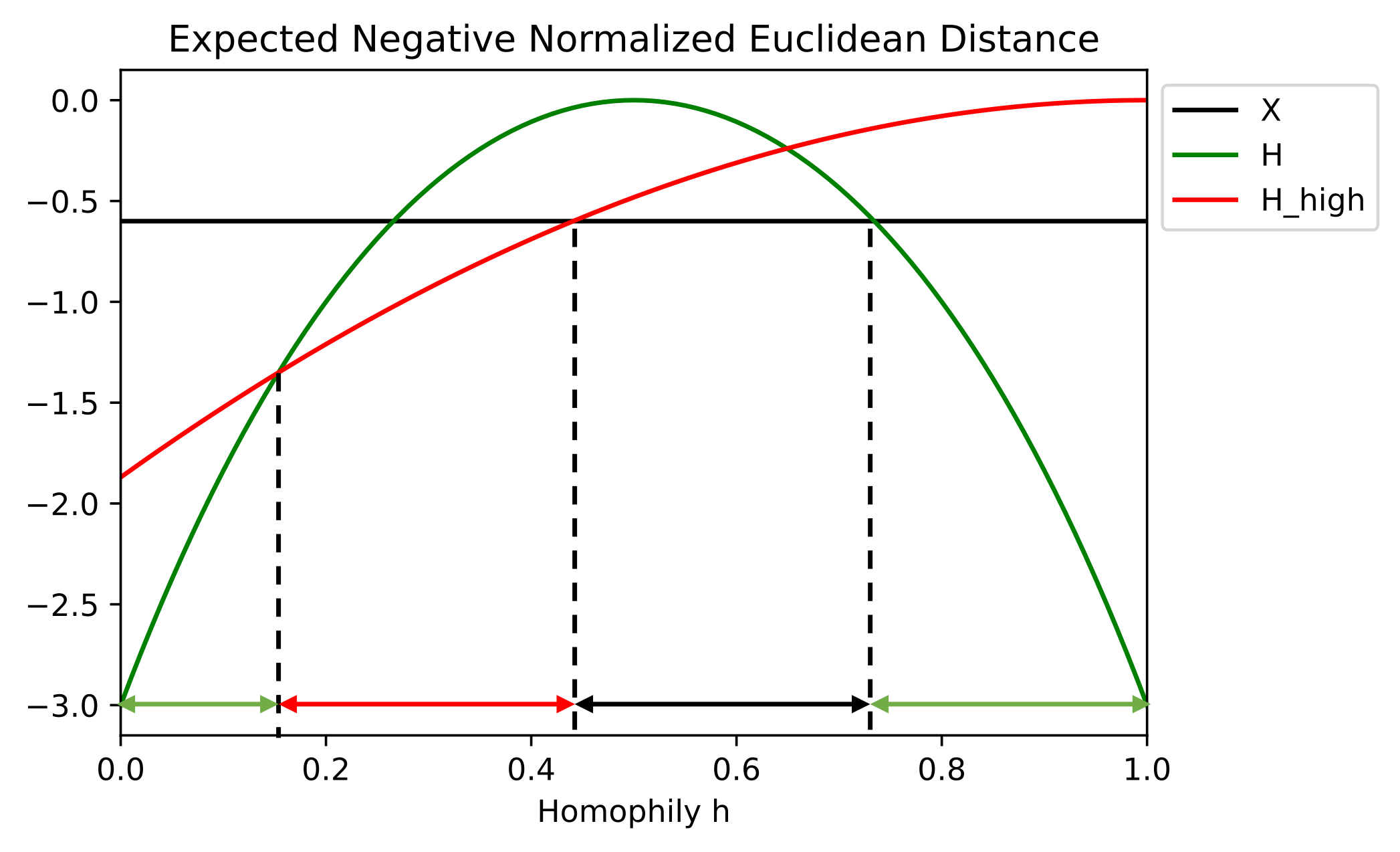}
     } \hspace*{-0.2cm}
     \subfloat[Negative Variance Ratio]{
     \captionsetup{justification = centering}
     \includegraphics[width=0.43\textwidth]{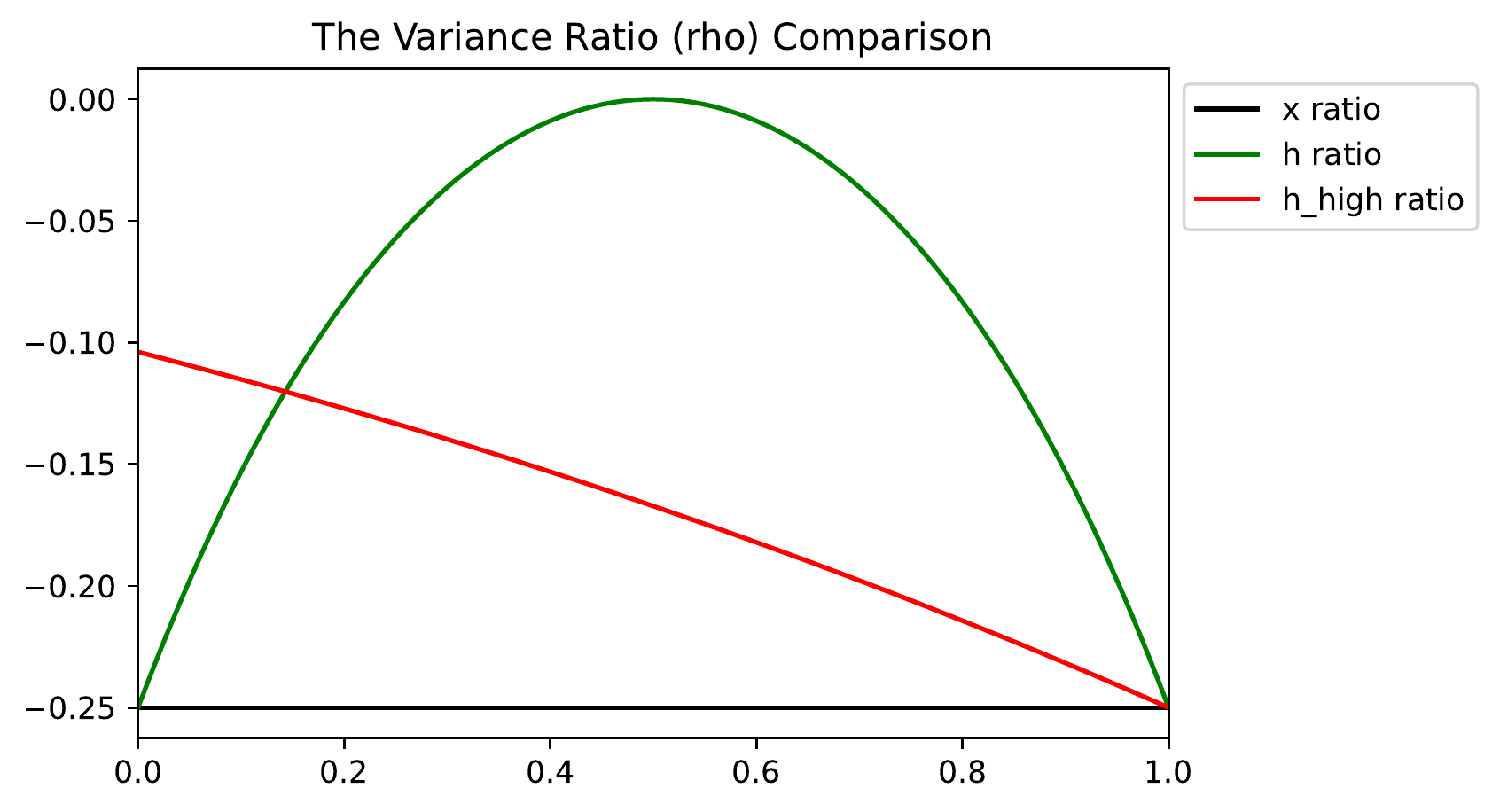}
     } 
     }
     
     \caption{Comparison of CSBM-H with $\sigma^2_0=2.5,\sigma^2_1=5$.}
     \label{fig:csbmh_sigma0=2.5_sigma1=5}
\end{figure}

\section{Proof of Theorem 2}
\label{appendix:proof_theorem2}
To prove theorem 2, we need the following two lemmas.
\begin{lemma} 1
Let $\bm{x}_i = X_{i,:}$ and suppose each dimension of $\bm{x}_i$ are independent, then for $\bm{x}_i, \bm{h}_i = (H_{i,:})^\top, \bm{h}_i^\text{HP} = (H_{i,:}^\text{HP})^\top$ we have
\begin{align*}
    \mathbb{P}\left(\left\|\mathbf{x}_{i}-\mathbf{x}_{j}\right\|_{2} \geq t\right) \leq \sum_{k=1}^{F_h} \mathbb{P}\left(\left| \mathbf{x}_{i,k}- \mathbf{x}_{j,k} \right| \geq \frac{t}{ \sqrt{F_h}} \right)\\
    \mathbb{P}\left(\left\|\mathbf{h}_{i}-\mathbf{h}_{j}\right\|_{2} \geq t\right) \leq \sum_{k=1}^{F_h} \mathbb{P}\left(\left|  \mathbf{h}_{i,k}- \mathbf{h}_{j,k} \right| \geq \frac{t}{ \sqrt{F_h}} \right)\\
    \mathbb{P}\left(\left\|\bm{h}_i^\text{HP}-\bm{h}_j^\text{HP}\right\|_{2} \geq t\right) \leq \sum_{k=1}^{F_h} \mathbb{P}\left(\left|  \mathbf{h}^\text{HP}_{i,k}- \mathbf{h}^\text{HP}_{j,k} \right| \geq \frac{t}{ \sqrt{F_h}} \right)
\end{align*}
\end{lemma}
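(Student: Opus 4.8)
The plan is to prove all three inequalities by a single elementary argument, since they share an identical structure: each has the form $\mathbb{P}(\|\bm{a}\|_2 \geq t) \leq \sum_{k=1}^{F_h} \mathbb{P}(|a_k| \geq t/\sqrt{F_h})$, with $\bm{a}$ taken to be $\bm{x}_i - \bm{x}_j$, $\bm{h}_i - \bm{h}_j$, or $\bm{h}_i^{\text{HP}} - \bm{h}_j^{\text{HP}}$ respectively. The core observation is a purely deterministic fact about vectors in $\mathbb{R}^{F_h}$: if every coordinate of $\bm{a}$ had magnitude strictly smaller than $t/\sqrt{F_h}$, then $\|\bm{a}\|_2^2 = \sum_{k=1}^{F_h} a_k^2 < F_h \cdot (t^2/F_h) = t^2$, forcing $\|\bm{a}\|_2 < t$. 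Taking the contrapositive, the event $\{\|\bm{a}\|_2 \geq t\}$ forces at least one coordinate to satisfy $|a_k| \geq t/\sqrt{F_h}$.

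First I would record this as the set inclusion $\{\|\bm{a}\|_2 \geq t\} \subseteq \bigcup_{k=1}^{F_h} \{|a_k| \geq t/\sqrt{F_h}\}$. Then I would apply monotonicity and subadditivity (the union bound) of the probability measure to conclude $\mathbb{P}(\|\bm{a}\|_2 \geq t) \leq \mathbb{P}\big(\bigcup_{k} \{|a_k| \geq t/\sqrt{F_h}\}\big) \leq \sum_{k=1}^{F_h} \mathbb{P}(|a_k| \geq t/\sqrt{F_h})$. Substituting the three choices of $\bm{a}$, and noting that the $k$-th coordinate of $\bm{x}_i - \bm{x}_j$ is $\bm{x}_{i,k} - \bm{x}_{j,k}$ (and analogously for $\bm{h}$ and $\bm{h}^{\text{HP}}$), yields the three stated bounds verbatim.

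There is essentially no obstacle here: the argument is a union bound, and notably it does not even use the stated independence of the coordinates, since subadditivity holds for arbitrary events. The only point worth flagging is that the independence hypothesis (together with the boundedness assumption $a \leq \bm{x}_{i,k} \leq b$) is not needed for \emph{this} lemma but is consumed downstream, where the per-coordinate tail probabilities $\mathbb{P}(|a_k| \geq t/\sqrt{F_h})$ are controlled via a Hoeffding-type concentration inequality to produce the exponential bounds appearing in Theorem 2. I would therefore keep the proof of Lemma 1 to the clean two-line inclusion-plus-union-bound argument and defer the concentration estimates to the subsequent lemma.
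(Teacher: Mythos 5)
Your proposal is correct and follows essentially the same route as the paper's own proof: the contrapositive/pigeonhole observation that $\left\|\bm{x}_i-\bm{x}_j\right\|_2 \geq t$ forces $\left|\bm{x}_{i,k}-\bm{x}_{j,k}\right| \geq t/\sqrt{F_h}$ for at least one $k$, followed by the union bound, with the $\bm{h}$ and $\bm{h}^{\text{HP}}$ cases handled by the identical argument. Your side remark is also accurate: the paper's proof likewise never uses the independence hypothesis, which is only consumed later in the Hoeffding-based bounds of Theorem 2.
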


\begin{proof}
If $\left\| \mathbf{x}_{i}-\mathbf{x}_{j} \right\|_{2} \geq t$, then at least for one $k \in\{1, \ldots, F_h\}$, the inequality $\left|\mathbf{x}_{i,k} - \mathbf{x}_{j,k}  \right| \geq \frac{t}{\sqrt{F_h}}$ holds. Therefore, we have
$$
\begin{aligned}
\mathbb{P}\left(\left\| \mathbf{x}_{i}- \mathbf{x}_{j}  \right\|_{2} \geq t \right) & \leq \mathbb{P}\left(\bigcup_{k=1}^{F_h}\left\{\left|\mathbf{x}_{i,k} - \mathbf{x}_{j,k} \right| \geq \frac{t}{\sqrt{F_h}} \right\}\right) \\
& \leq \sum_{k=1}^{F_h} \mathbb{P}\left( \left|\mathbf{x}_{i,k} - \mathbf{x}_{j,k} \right| \geq \frac{t}{\sqrt{F_h}} \right)\\
\end{aligned}
$$
The results for $\bm{h}_i, \bm{h}_i^\text{HP}$ can be proved by analogy.
\end{proof}
\begin{lemma}2(Heoffding Lemma) Let $X$ be any real-valued random variable such that $a\leq X\leq b$ almost surely, \ie{} with probability one. Then, for all $\lambda \in \mathbb {R} $,

$$ \mathbb {E} \left[e^{\lambda X}\right]\leq \exp {\Big (}\lambda \mathbb {E} [X]+{\frac {\lambda ^{2}(b-a)^{2}}{8}}{\Big )}$$

\end{lemma}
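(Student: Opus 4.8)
The plan is to establish this moment generating function bound in three movements: a reduction to the centered case, a convexity upper bound on $e^{\lambda X}$, and a Taylor expansion of the resulting log-cumulant generating function. First I would reduce to the case $\mathbb{E}[X]=0$. Setting $Y = X - \mathbb{E}[X]$, the variable $Y$ is mean-zero and almost surely supported in $[a - \mathbb{E}[X],\, b - \mathbb{E}[X]]$, an interval whose length is still $b-a$. Since $\mathbb{E}[e^{\lambda X}] = e^{\lambda \mathbb{E}[X]}\, \mathbb{E}[e^{\lambda Y}]$, the claimed inequality for $X$ is equivalent to $\mathbb{E}[e^{\lambda Y}] \leq \exp\bigl(\tfrac{1}{8}\lambda^2 (b-a)^2\bigr)$. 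Hence it suffices to prove the statement under the additional assumption $\mathbb{E}[X]=0$, in which case $a \leq 0 \leq b$.

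Next I would exploit the convexity of $t \mapsto e^{\lambda t}$ on $[a,b]$. For any $x \in [a,b]$, write $x$ as the convex combination $x = \tfrac{b-x}{b-a}\,a + \tfrac{x-a}{b-a}\,b$, so convexity yields the pointwise bound $e^{\lambda x} \leq \tfrac{b-x}{b-a}\,e^{\lambda a} + \tfrac{x-a}{b-a}\,e^{\lambda b}$. Taking expectations and invoking $\mathbb{E}[X]=0$ annihilates the terms linear in $x$, giving $\mathbb{E}[e^{\lambda X}] \leq \tfrac{b}{b-a}\,e^{\lambda a} - \tfrac{a}{b-a}\,e^{\lambda b}$. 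Introducing $p = -a/(b-a) \in [0,1]$ so that $b/(b-a) = 1-p$, and setting $h = \lambda(b-a)$, this right-hand side factors cleanly as $e^{\lambda a}\bigl(1 - p + p\,e^{h}\bigr) = e^{L(h)}$ where $L(h) = -ph + \ln\bigl(1 - p + p\,e^{h}\bigr)$, using $\lambda a = -ph$.

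The final and most delicate movement is to show $L(h) \leq \tfrac{1}{8}h^2$, which then completes the proof since $\tfrac18 h^2 = \tfrac18 \lambda^2 (b-a)^2$. A direct computation gives $L(0)=0$, and writing $u(h) = \tfrac{p\,e^h}{1 - p + p\,e^h} \in [0,1]$ one finds $L'(h) = u(h) - p$, so that $L'(0) = p - p = 0$. Differentiating once more, $1 - u = \tfrac{1-p}{1-p+p e^h}$ gives $L''(h) = u'(h) = u(h)\bigl(1 - u(h)\bigr) \leq \tfrac14$, since $u(1-u)$ is maximized at $u = \tfrac12$. By Taylor's theorem with Lagrange remainder, there is some $\xi$ between $0$ and $h$ with $L(h) = L(0) + L'(0)\,h + \tfrac12 L''(\xi)\,h^2 \leq \tfrac18 h^2$. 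The main obstacle is precisely this uniform bound $L''(h) \leq \tfrac14$: the crux is recognizing $L''$ as the variance-like quantity $u(1-u)$ of a Bernoulli-type parameter and bounding it by $\tfrac14$, which is exactly what pins down the constant $\tfrac18$ in the exponent.
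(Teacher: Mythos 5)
Your proof is correct and complete: the centering reduction, the convexity bound $e^{\lambda x}\le \tfrac{b-x}{b-a}e^{\lambda a}+\tfrac{x-a}{b-a}e^{\lambda b}$, and the identification $L''(h)=u(h)\bigl(1-u(h)\bigr)\le\tfrac14$ followed by Taylor's theorem with Lagrange remainder constitute exactly the classical argument, with the degenerate cases $p\in\{0,1\}$ (and the trivial case $a=b$) causing no trouble since there $L\equiv 0$. Note that the paper gives no proof of this lemma at all — it simply writes ``See \cite{massart2007concentration}'' — so your write-up supplies precisely the standard proof contained in that cited reference, rather than diverging from anything in the paper.
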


\begin{proof}
    See \cite{massart2007concentration}
\end{proof}
\begin{theorem} 2
For nodes $i,j,v \in \mathcal{V}$, suppose $z_i \neq z_j$ and $z_i = z_v$, then for constants $t_x,t_h,t_\text{HP}$ that satisfy $t_x \geq \sqrt{F_h} D_x(i,j), \; t_h \geq \sqrt{F_h} D_h(i,j), \; t_\text{HP} \geq \sqrt{F_h} D_\text{HP}(i,j)$ we have
\begin{equation}
\begin{aligned}
&   \mathbb{P}\left(\norm{\bm{x}_i - \bm{x}_j}_2  \geq \norm{\bm{x}_i - \bm{x}_v}_2 + t_x \right) \leq 2 F_h \exp{\left( -\frac{(D_x(i,j) - \frac{t_x}{\sqrt{F_h}} )^2}{V_x(i,j)} \right)},  \\
 & \mathbb{P}(\norm{\bm{h}_i-\bm{h}_j}_2  \geq \norm{\bm{h}_i-\bm{h}_v}_2 + t_h) \leq 2 F_h \exp{\left( -\frac{(D_h(i,j) - \frac{t_h}{\sqrt{F_h}} )^2}{V_h(i,j)} \right)}, \\
& \mathbb{P}(\norm{\bm{h}_i^\text{HP} - \bm{h}_j^\text{HP}}_2  \geq \norm{\bm{h}_i^\text{HP} - \bm{h}_v^\text{HP}}_2  + t_\text{HP}) \leq 2 F_h \exp{\left( -\frac{\left(D_\text{HP}(i,j) - \frac{t_\text{HP}}{\sqrt{F_h}} \right)^2}{V_\text{HP}(i,j)} \right)}, \\
\end{aligned}
\end{equation}
where 
\begin{align*}
 &D_x(i,j) = \norm{\bm{\mu}_{z_i} - \bm{\mu}_{z_j}}_2, \; V_x(i,j) = (b-a)^2, \\
 &D_h(i,j) = \norm{\tilde{\bm{\mu}}_{z_i} - \tilde{\bm{\mu}}_{z_j}}_2, V_h(i,j) = \left(\frac{1}{2d_i} + \frac{1}{2d_j}\right)(b-a)^2, \\
 &D_\text{HP}(i,j) = \norm{\bm{{\mu}}_{z_i} - \bm{\tilde{\mu}}_{z_i} - \left( \bm{{\mu}}_{z_j} - \bm{\tilde{\mu}}_{z_j} \right) }_2, \\
 & V_\text{HP}(i,j) = \left(1+\frac{1}{2d_i} + \frac{1}{2d_j}\right)(b-a)^2, \\
 & \bm{\tilde{\mu}}_{z_i} =  \sum\limits_{\substack{v \in \mathcal{N}(i)} } \mathbb{E}_{\substack{ z_{v} \sim \mathcal{D}_{z_i}, \\ \mathbf{x}_{v} \sim \mathcal{F}_{z_{v}}} } \left[ \frac{1}{d_i}  \bm{x}_{v} \right].
\end{align*}
\end{theorem}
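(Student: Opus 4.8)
The plan is to prove all three inequalities by one common argument, since $\bm{x}$, $\bm{h}$, and $\bm{h}^\text{HP}$ differ only in which linear combination of the independent, $[a,b]$-bounded raw features enters each coordinate; I will carry out the full-pass case in detail and then read off the filtered cases. The awkward object is the difference of two norms, so first I would discard the intra-class term with the triangle inequality: since $\norm{\bm{x}_i - \bm{x}_j}_2 \le \norm{\bm{x}_i - \bm{x}_v}_2 + \norm{\bm{x}_v - \bm{x}_j}_2$, the event $\{\norm{\bm{x}_i-\bm{x}_j}_2 \ge \norm{\bm{x}_i-\bm{x}_v}_2 + t_x\}$ is contained in the single-pair event $\{\norm{\bm{x}_v-\bm{x}_j}_2 \ge t_x\}$. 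Because $z_v = z_i \neq z_j$, the surviving pair $(v,j)$ is inter-class, with coordinate-wise mean gap $\bm{\mu}_{z_v}-\bm{\mu}_{z_j}$ of norm $D_x(v,j)$ (equal to $D_x(i,j)$ as $z_v=z_i$); this is exactly why an inter-class mean distance, and the degree of node $v$, appear in the bound.

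\textbf{Coordinate decoupling and concentration.} Applying Lemma 1 to the reduced event gives $\mathbb{P}(\norm{\bm{x}_v-\bm{x}_j}_2 \ge t_x) \le \sum_{k=1}^{F_h} \mathbb{P}(|\bm{x}_{v,k}-\bm{x}_{j,k}| \ge t_x/\sqrt{F_h})$, reducing everything to one-dimensional tails of the independent difference $Y_k = \bm{x}_{v,k}-\bm{x}_{j,k}$ with mean $m_k = \mu_{z_v,k}-\mu_{z_j,k}$. I would apply Hoeffding's lemma (Lemma 2) to $\bm{x}_{v,k}$ and to $-\bm{x}_{j,k}$ \emph{separately} and multiply the two MGFs by independence; using the single-variable range $(b-a)$ twice gives $\mathbb{E}[e^{\lambda(Y_k-m_k)}] \le \exp(\lambda^2 (b-a)^2/4)$, hence the Chernoff tail $\mathbb{P}(Y_k - m_k \ge u) \le \exp(-u^2/(b-a)^2)$ for $u \ge 0$. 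Applying Hoeffding to $Y_k$ directly with its width-$2(b-a)$ range would lose the factor of $2$ in the exponent, so the separate-then-multiply step is what lands the denominator on $V_x=(b-a)^2$ rather than $2(b-a)^2$.

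\textbf{Handling the mean shift and assembling.} To bound both tails I bound each coordinate mean by the full-vector distance, $|m_k| \le \norm{\bm{\mu}_{z_v}-\bm{\mu}_{z_j}}_2 = D_x(v,j)$, and invoke the hypothesis $t_x \ge \sqrt{F_h}\,D_x(v,j)$, which forces the Chernoff shift $t_x/\sqrt{F_h}-|m_k| \ge t_x/\sqrt{F_h}-D_x(v,j)\ge 0$ into the region where $u \mapsto e^{-u^2/(b-a)^2}$ is decreasing. Monotonicity then replaces the coordinate-dependent shift by the uniform one, so each tail is at most $\exp(-(D_x(v,j)-t_x/\sqrt{F_h})^2/(b-a)^2)$; the two tails contribute a factor $2$ and summing over $k$ contributes the factor $F_h$, giving the claim. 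For $\bm{h}$ the only change is that $\bm{h}_{v,k}=d_v^{-1}\sum_{u\in\mathcal{N}(v)}\bm{x}_{u,k}$ is an average of $d_v$ rescaled summands of range $(b-a)/d_v$, so Hoeffding gives proxy $(b-a)^2/(4d_v)$ and the $(v,j)$ difference produces $V_h(v,j)=(\tfrac{1}{2d_v}+\tfrac{1}{2d_j})(b-a)^2$ with mean gap $\tilde{\bm{\mu}}_{z_v}-\tilde{\bm{\mu}}_{z_j}$; for $\bm{h}^\text{HP}_{v,k}=\bm{x}_{v,k}-\bm{h}_{v,k}$ the absence of self-loops makes $\bm{x}_{v,k}$ independent of its own aggregate, adding a proxy $(b-a)^2/4$ and producing the extra $+1$ in $V_\text{HP}(v,j)$ together with the relative-center gap $\bm{\mu}_{z_v}-\tilde{\bm{\mu}}_{z_v}-(\bm{\mu}_{z_j}-\tilde{\bm{\mu}}_{z_j})$.

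\textbf{Main obstacle.} The one genuinely delicate point is the mismatch between the per-coordinate mean $m_k$, which is all the Chernoff bound sees, and the Euclidean norm $D$ of the whole mean-difference vector, which is what the statement reports: the elementary inequality $|m_k|\le D$ together with the assumption $t\ge\sqrt{F_h}\,D$ is precisely what keeps the shift non-negative so that monotonicity can be applied uniformly across all coordinates. Everything else — in particular the degree-weighted variance proxies for $\bm{h}$ and $\bm{h}^\text{HP}$ — is routine Hoeffding bookkeeping once the independence structure and the no-self-loop fact are used correctly.
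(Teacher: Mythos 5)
Your proposal is correct and follows essentially the same route as the paper's proof: the triangle-inequality reduction of the two-norm event to the single inter-class pair $(v,j)$, the coordinatewise union bound of Lemma 1, separate Hoeffding MGF bounds multiplied via independence (precisely to keep the denominator at $(b-a)^2$), and the $|m_k|\le D$ monotonicity step enabled by $t \ge \sqrt{F_h}\,D$, with identical degree-weighted bookkeeping yielding $V_h(v,j)$ and the extra $+1$ in $V_\text{HP}(v,j)$ from the no-self-loop independence of $\bm{x}_{v,k}$ and $\bm{x}_{j,k}$ from the aggregates. The only difference is presentational: you phrase the Chernoff step in sub-Gaussian variance-proxy language, whereas the paper exhibits the optimizing parameter $s$ explicitly.
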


The proof of Theorem 2 will be splitted into three parts for $\bm{x}_i, \bm{h}_i$ and  $\bm{h}_i^\text{HP}$, respectively.

\subsection{Proof for Original (Full-pass Filtered) Features}

\begin{proof}

Since we have 
$$\norm{\bm{x}_i-\bm{x}_j}_2  - \norm{\bm{x}_i-\bm{x}_v}_2 \leq \norm{\bm{x}_i-\bm{x}_j - (\bm{x}_i-\bm{x}_v)}_2 = \norm{\bm{x}_v-\bm{x}_j}_2$$
then 
\begin{align*}
\mathbb{P}&\left(\norm{\bm{x}_i-\bm{x}_j}_2  \geq \norm{\bm{x}_i-\bm{x}_v}_2 + t_x \right)  = \mathbb{P}\left(\norm{\bm{x}_i-\bm{x}_j}_2 - \norm{\bm{x}_i-\bm{x}_v}_2 \geq  t_x \right)\\
&\leq \mathbb{P}\left(\norm{\bm{x}_i-\bm{x}_j - (\bm{x}_i-\bm{x}_v)}_2 \geq t_x \right) = \mathbb{P}\left(\norm{\bm{x}_v-\bm{x}_j)}_2 \geq t_x \right)
\end{align*}
We will calculate the upper bound of $\mathbb{P}\left(\norm{\bm{x}_v - \bm{x}_j}_2 \geq t_x \right)$ in the following part. To do this, we first compute the upper bound of $\mathbb{P}\left(\bm{x}_{v,k}-\bm{x}_{j,k} \geq t \right)$.
For $t \geq \norm{\bm{{\mu}}_{z_v} -\bm{{\mu}}_{z_j} }$ and any $s \geq 0$, we have

$$
\begin{aligned}
    & \ \ \mathbb{P}\left( \bm{{x}}_{v,k} -\bm{{x}}_{j,k} \geq t\right) 
    =  \mathbb{P}\left( \exp{\left( s(\bm{{x}}_{v,k} -\bm{{x}}_{j,k} ) \right)}   \geq \exp{\left(s t \right)} \right) \\ 
     \leq  \ & \exp{\left(-s t \right)} \mathbb{E}\left[ \exp{\left( s(\bm{{x}}_{v,k} -\bm{{x}}_{j,k} ) \right)}  \right] \; (\text{Markov Inequality}) \\
    =\ &  \exp{\left(-s t\right)} \mathbb{E}\left[ \exp{\left( s\bm{x}_{v,k} \right)}  \right]   \mathbb{E}\left[ \exp{\left( - s\bm{x}_{j,k} \right)}  \right]  \; (\text{Independency})\\
    \leq \ &   \exp{\left(-s t\right)}  \exp{\left( s \mathbb{E}\left[\bm{x}_{v,k}   \right] + \frac{(b-a)^2s^2}{8 }\right)}  \times \exp{\left( -s \mathbb{E}\left[  \bm{x}_{j,k} \right] + \frac{(b-a)^2s^2}{8 } \right)}   \; (\text{Hoeffding's lemma})\\
    =\ &  \exp{\left( \frac{(b-a)^2}{4} s^2 + (\bm{{\mu}}_{z_v,k} -\bm{{\mu}}_{z_j,k} - t )s \right)} \\ 
   \leq  \ &  \exp{\left( \frac{(b-a)^2}{4} s^2 + ( \left|\bm{{\mu}}_{z_v,k} -\bm{{\mu}}_{z_j,k} \right| - t )s \right)} 
\end{aligned}
$$
Since $t \geq \norm{\bm{{\mu}}_{z_v} -\bm{{\mu}}_{z_j} } \geq \left| \bm{{\mu}}_{z_v,k} -\bm{{\mu}}_{z_j,k}  \right|$ for any $k$, so when $s=-\frac{(\left| \bm{{\mu}}_{z_v,k} -\bm{{\mu}}_{z_j,k}  \right| - t )}{\frac{(b-a)^2}{2}} \geq 0$, we get the tightest bound of the above inequality and 
$$\exp{\left( -\frac{(\left| \bm{{\mu}}_{z_v,k} - \bm{{\mu}}_{z_j,k} \right| -t )^2}{(b-a)^2} \right)} \leq \exp{\left( -\frac{(\norm{{\bm{\mu}}_{z_v}-{\bm{\mu}}_{z_j}}_2 - t )^2}{(b-a)^2} \right)} $$
With the same steps, we have 
$$
\mathbb{P}\left( \bm{{x}}_{v,k} -\bm{{x}}_{j,k}  \leq -t \right)  = \mathbb{P}\left( \bm{{x}}_{j,k} - \bm{{x}}_{v,k}  \geq t \right) \leq \exp{\left( -\frac{(\norm{ {\bm{\mu}}_{z_v}- {\bm{\mu}}_{z_j}}_2 - t )^2}{(b-a)^2} \right)}
$$
Combined together we have
$$
\mathbb{P}\left( \big|\bm{{x}}_{v,k} -\bm{{x}}_{j,k} \big| \geq t \right) \leq 2 \exp{\left( -\frac{(\norm{ {\bm{\mu}}_{z_v}- {\bm{\mu}}_{z_j}}_2 - t )^2}{(b-a)^2} \right)} 
$$
Since $\frac{t_x}{\sqrt{F_h}} \geq \norm{\bm{ {\mu}}_{z_v} -\bm{ {\mu}}_{z_j} }$, then from Lemma 1 we have
\begin{equation}
\begin{aligned}\mathbb{P}\left(\norm{ \bm{{x}}_{v} - \bm{{x}}_{j} }_2 \geq t_x \right) \leq \sum_{k=1}^{F_h} \mathbb{P}\left( \left|\mathbf{x}_{v,k} - \mathbf{x}_{j,k} \right| \geq \frac{t_x}{\sqrt{F_h}} \right) \leq 2 F_h \exp{\left( -\frac{(\norm{ {\bm{\mu}}_{z_v} -  {\bm{\mu}}_{z_j}}_2 - \frac{t_x}{\sqrt{F_h}} )^2}{(b-a)^2} \right)} \\
\end{aligned}
\end{equation}

\end{proof}

\subsection{Proof for Low-pass Filter}

\begin{proof} Part for LP filter:

Let $\bm{{h}}_{i,k}  = \frac{1}{d_i} \sum\limits_{\substack{u \in \mathcal{N}(i),\\ z_{u} \sim \mathcal{D}_{z_i}, \\ \mathbf{x}_{u,k} \sim \mathcal{F}_{z_{u},k}} } \bm{x}_{u,k}$ and $\bm{\tilde{\mu}}_{z_i,k} = \mathbb{E}\left[\bm{{h}}_{i,k} \right] = \mathbb{E} \left[\frac{1}{d_i} \sum \mathbf{x}_{u,k} \right]$. Since we have 
$$\norm{\bm{h}_i-\bm{h}_j}_2  - \norm{\bm{h}_i-\bm{h}_v}_2 \leq \norm{\bm{h}_i-\bm{h}_j - (\bm{h}_i-\bm{h}_v)}_2 = \norm{\bm{h}_v-\bm{h}_j}_2$$
then 
\begin{align*}
\mathbb{P}&\left(\norm{\bm{h}_i-\bm{h}_j}_2  \geq \norm{\bm{h}_i-\bm{h}_v}_2 + t_h \right)  = \mathbb{P}\left(\norm{\bm{h}_i-\bm{h}_j}_2 - \norm{\bm{h}_i-\bm{h}_v}_2 \geq  t_h \right)\\
&\leq \mathbb{P}\left(\norm{\bm{h}_i-\bm{h}_j - (\bm{h}_i-\bm{h}_v)}_2 \geq t_h \right) = \mathbb{P}\left(\norm{\bm{h}_v-\bm{h}_j)}_2 \geq t_h \right)
\end{align*}
We will calculate the upper bound of $\mathbb{P}\left(\norm{\bm{h}_v - \bm{h}_j}_2 \geq t_h \right)$ in the following part. To do this, we first compute the upper bound of $\mathbb{P}\left(\bm{h}_{v,k}-\bm{h}_{j,k} \geq t \right)$.
For $t \geq \norm{\bm{\tilde{\mu}}_{z_v} -\bm{\tilde{\mu}}_{z_j} }$ and any $s \geq 0$, we have

$$
\begin{aligned}
    & \ \ \mathbb{P}\left( \bm{{h}}_{v,k} -\bm{{h}}_{j,k} \geq t\right) 
    =  \mathbb{P}\left( \exp{\left( s(\bm{{h}}_{v,k} -\bm{{h}}_{j,k} ) \right)}   \geq \exp{\left(s t \right)} \right) \\ 
     \leq  \ & \exp{\left(-s t \right)} \mathbb{E}\left[ \exp{\left( s(\bm{{h}}_{v,k} -\bm{{h}}_{j,k} ) \right)}  \right] \; (\text{Markov Inequality}) \\
    =\ &  \exp{\left(-s t\right)} \mathbb{E}\left[ \exp{\left( \frac{s}{d_v} \sum\limits_{\substack{u \in \mathcal{N}(v),\\ z_{u} \sim \mathcal{D}_{z_v}, \\ \mathbf{x}_{u,k} \sim \mathcal{F}_{z_{u},k}} } \bm{x}_{u,k} \right)}  \right]   \mathbb{E}\left[ \exp{\left( \frac{-s}{d_j} \sum\limits_{\substack{u \in \mathcal{N}(j),\\ z_{u} \sim \mathcal{D}_{z_j}, \\ \mathbf{x}_{u,k} \sim \mathcal{F}_{z_{u},k}} } \bm{x}_{u,k} \right)}  \right]  \; (\text{Independency})\\
     =\ &  \exp{\left(-s t\right)} \prod\limits_{\substack{u \in \mathcal{N}(v),\\ z_{u} \sim \mathcal{D}_{z_v}, \\ \mathbf{x}_{u,k} \sim \mathcal{F}_{z_{u},k}} }  \mathbb{E}\left[ \exp{\left( \frac{s}{d_v}  \bm{x}_{u,k} \right)}  \right]   \prod\limits_{\substack{u \in \mathcal{N}(j),\\ z_{u} \sim \mathcal{D}_{z_j}, \\ \mathbf{x}_{u,k} \sim \mathcal{F}_{z_{u},k}} } \mathbb{E}\left[ \exp{\left( \frac{-s}{d_j} \bm{x}_{u,k} \right)}  \right]  \; (\text{Independency})\\
    \leq \ &   \exp{\left(-s t\right)} \prod\limits_{\substack{u \in \mathcal{N}(v),\\ z_{u} \sim \mathcal{D}_{z_v}, \\ \mathbf{x}_{u,k} \sim \mathcal{F}_{z_{u},k}} } \exp{\left(  \frac{s}{d_v} \mathbb{E}\left[\bm{x}_{u,k}   \right] + \frac{(b-a)^2s^2}{8d_v^2}\right)}  \\ 
    & \ \  \times \prod\limits_{\substack{u \in \mathcal{N}(j),\\ z_{u} \sim \mathcal{D}_{z_j}, \\ \mathbf{x}_{u,k} \sim \mathcal{F}_{z_{u,k}}} } \exp{\left(\frac{-s}{d_j}  \mathbb{E}\left[  \bm{x}_{u,k} \right] + \frac{(b-a)^2s^2}{8d_j^2} \right)}   \; (\text{Hoeffding's lemma})\\
    =\ &  \exp{\left(-s t\right)} \exp{\left(\frac{(b-a)^2s^2}{8d_v}\right)} \exp{\left(  s \mathbb{E}\left[ \frac{1}{d_v} \sum\limits_{\substack{u \in \mathcal{N}(v),\\ z_{u} \sim \mathcal{D}_{z_v}, \\ \mathbf{x}_{u,k} \sim \mathcal{F}_{z_{u},k}} } \bm{x}_{u,k}   \right] \right)}  \\ 
    & \ \  \times \exp{\left(\frac{(b-a)^2s^2}{8d_j}\right)} \exp{\left( -s  \mathbb{E}\left[  \frac{1}{d_j} \sum\limits_{\substack{u \in \mathcal{N}(j),\\ z_{u} \sim \mathcal{D}_{z_j}, \\ \mathbf{x}_{u,k} \sim \mathcal{F}_{z_{u},k}} } \bm{x}_{u,k} \right] \right)}   \; \\
    =\ &  \exp{\left( \Big(\frac{(b-a)^2}{8d_v} + \frac{(b-a)^2}{8d_j}\Big)s^2 + (\bm{\tilde{\mu}}_{z_v,k} -\bm{\tilde{\mu}}_{z_j,k} - t )s \right)} \\ 
   \leq  \ &  \exp{\left( \Big(\frac{(b-a)^2}{8d_v} + \frac{(b-a)^2}{8d_j}\Big)s^2 + ( \left|\bm{\tilde{\mu}}_{z_v,k} -\bm{\tilde{\mu}}_{z_j,k} \right| - t )s \right)} 
\end{aligned}
$$
Since $t\geq \norm{\bm{\tilde{\mu}}_{z_v} -\bm{\tilde{\mu}}_{z_j} } \geq \left| \bm{\tilde{\mu}}_{z_v,k} -\bm{\tilde{\mu}}_{z_j,k}  \right|$ for any $k$, so when $s=-\frac{(\left| \bm{\tilde{\mu}}_{z_v,k} -\bm{\tilde{\mu}}_{z_j,k}  \right| - t )}{\frac{(b-a)^2}{4 d_v} + \frac{(b-a)^2}{4 d_j}} \geq 0$, we get the tightest bound of the above inequality and 
$$\exp{\left( -\frac{(\left| \bm{\tilde{\mu}}_{z_v,k} - \bm{\tilde{\mu}}_{z_j,k} \right| -t )^2}{\frac{(b-a)^2}{2d_v} + \frac{(b-a)^2}{2d_j}} \right)} \leq \exp{\left( -\frac{(\norm{\tilde{\bm{\mu}}_{z_v}-\tilde{\bm{\mu}}_{z_j}}_2 - t )^2}{\frac{(b-a)^2}{2d_v} + \frac{(b-a)^2}{2d_j}} \right)} $$
With the same steps, we have 
$$
\mathbb{P}\left( \bm{{h}}_{v,k} -\bm{{h}}_{j,k}  \leq -t \right)  = \mathbb{P}\left( \bm{{h}}_{j,k} - \bm{{h}}_{v,k}  \geq t \right) \leq \exp{\left( -\frac{(\norm{\tilde{\bm{\mu}}_{z_v}-\tilde{\bm{\mu}}_{z_j}}_2 - t )^2}{\frac{(b-a)^2}{2d_v} + \frac{(b-a)^2}{2d_j}} \right)}
$$
Combined together we have
$$
\mathbb{P}\left( \big|\bm{{h}}_{v,k} -\bm{{h}}_{j,k} \big| \geq t \right) \leq 2 \exp{\left( -\frac{(\norm{\tilde{\bm{\mu}}_{z_v}-\tilde{\bm{\mu}}_{z_j}}_2 - t )^2}{\frac{(b-a)^2}{2d_v} + \frac{(b-a)^2}{2d_j}} \right)} 
$$
Since $\frac{t_h}{\sqrt{F_h}}\geq \norm{\bm{\tilde{\mu}}_{z_v} -\bm{\tilde{\mu}}_{z_j} }$, then from Lemma 1 we have
\begin{equation}
\begin{aligned}\mathbb{P}\left(\norm{ \bm{{h}}_{v} - \bm{{h}}_{j} }_2 \geq t_h \right) \leq \sum_{k=1}^{F_h} \mathbb{P}\left( \left|\mathbf{h}_{v,k} - \mathbf{h}_{j,k} \right| \geq \frac{t_h}{\sqrt{F_h}} \right) \leq 2 F_h \exp{\left( -\frac{(\norm{\tilde{\bm{\mu}}_{z_v} - \tilde{\bm{\mu}}_{z_j}}_2 - \frac{t_h}{\sqrt{F_h}} )^2}{\frac{(b-a)^2}{2d_v} + \frac{(b-a)^2}{2d_j}} \right)} \\
\end{aligned}
\end{equation}

\end{proof}
\subsection{Theoretical Results for High-pass Filter}

\begin{proof} 

The proof for HP filter is similar to that of LP filter.

Let $\bm{{h}}_{i}^\text{HP}  = \bm{x}_i - \bm{{h}}_i$, which is the HP filtered signal. Since we have 
$$\norm{\bm{h}_i^\text{HP} - \bm{h}_j^\text{HP}}_2  - \norm{\bm{h}_i^\text{HP} - \bm{h}_v^\text{HP}}_2 \leq \norm{\bm{h}_i^\text{HP}  - \bm{h}_j^\text{HP} - (\bm{h}_i^\text{HP} - \bm{h}_v^\text{HP})}_2 = \norm{\bm{h}_v^\text{HP}  - \bm{h}_j^\text{HP}}_2$$
then 
\begin{align*}
\mathbb{P} & \left(\norm{\bm{h}_i^\text{HP} - \bm{h}_j^\text{HP}}_2  \geq \norm{\bm{h}_i^\text{HP} - \bm{h}_v^\text{HP}}_2 + t_\text{HP} \right)  = \mathbb{P}\left(\norm{\bm{h}_i^\text{HP} - \bm{h}_j^\text{HP}}_2 - \norm{\bm{h}_i^\text{HP} - \bm{h}_v^\text{HP}}_2 \geq  t_\text{HP} \right)\\
&\leq \mathbb{P}\left(\norm{\bm{h}_i^\text{HP} - \bm{h}_j^\text{HP} - (\bm{h}_i^\text{HP} - \bm{h}_v^\text{HP})}_2 \geq t_\text{HP} \right) = \mathbb{P}\left(\norm{\bm{h}_v^\text{HP} - \bm{h}_j^\text{HP})}_2 \geq t_\text{HP} \right)
\end{align*}
We will calculate the upper bound of $\mathbb{P}\left(\norm{\bm{h}_v^\text{HP} - \bm{h}_j^\text{HP}}_2 \geq t \right)$ in the following part. To do this, we first compute the upper bound of $\mathbb{P}\left(\bm{h}_{v,k}^\text{HP} - \bm{h}_{j,k}^\text{HP} \geq t \right)$. 

For $t \geq \norm{\bm{{\mu}}_v - \bm{\tilde{\mu}}_v - \left( \bm{{\mu}}_j - \bm{\tilde{\mu}}_j \right) }_2$ and $s \geq 0$, we have

\begin{align*}
&\mathbb{P}\left( \bm{{h}}_{v,k}^\text{HP} - \bm{{h}}_{j,k}^\text{HP} \geq t\right) =  \mathbb{P}\left(\bm{x}_{v,k} - \bm{{h}}_{v,k} - \bm{x}_{j,k} + \bm{{h}}_{j,k} \geq t\right)\\
&= \mathbb{P}\left( \exp{\left( s(\bm{x}_{v,k}- \bm{{h}}_{v,k} - \bm{x}_{j,k} + \bm{{h}}_{j,k} ) \right)}   \geq \exp{\left(st\right)}\right)\\
&\leq  \exp{\left(-st\right)} \mathbb{E}\left[ \exp{\left( s(\bm{x}_{v,k} - \bm{{h}}_{v,k} - \bm{x}_{j,k} + \bm{{h}}_{j,k}) \right)}  \right] \; (\text{Markov Inequality}) \\
&=  \exp{\left(-st\right)} \times \mathbb{E} \left[ \exp{\left(s \bm{x}_{v,k} \right)} \right] \times \mathbb{E} \left[ \exp{\left(-s \bm{x}_{j,k} \right)} \right] \times  \\
& \mathbb{E}\left[ \exp{\left( -\frac{s}{d_v} \sum\limits_{\substack{u \in \mathcal{N}(v),\\ z_{u} \sim \mathcal{D}_{z_v}, \\ \mathbf{x}_{u,k} \sim \mathcal{F}_{z_{u},k}} } \bm{x}_{u,k} \right)}  \right]   \mathbb{E}\left[ \exp{\left( \frac{s}{d_j} \sum\limits_{\substack{u \in \mathcal{N}(j),\\ z_{u} \sim \mathcal{D}_{z_j}, \\ \mathbf{x}_{u,k} \sim \mathcal{F}_{z_{u,k}}} } \bm{x}_{u,k} \right)}  \right]  \; (\text{Independency})\\
&\leq  \exp{\left(-st\right)} \mathbb{E} \left[ \exp{\left(s \bm{x}_{v,k} \right)} \right] \mathbb{E} \left[ \exp{\left(-s \bm{x}_{j,k}\right)} \right]\\
&\prod\limits_{\substack{u \in \mathcal{N}(v),\\ z_{u} \sim \mathcal{D}_{z_v}, \\ \mathbf{x}_{u,k} \sim \mathcal{F}_{z_{u},k}} }  \mathbb{E}\left[ \exp{\left( \frac{-s}{d_v}  \bm{x}_{u,k} \right)}  \right]   \prod\limits_{\substack{u \in \mathcal{N}(j),\\ z_{u} \sim \mathcal{D}_{z_j}, \\ \mathbf{x}_{u,k} \sim \mathcal{F}_{z_{u,k}}} } \mathbb{E}\left[ \exp{\left( \frac{s}{d_j} \bm{x}_{u,k} \right)}  \right]  \\
&\leq  \exp{\left(-st\right)}  \exp{\left(  s \bm{{\mu}}_{v,k} + \frac{(b-a)^2s^2}{8}\right)}   \exp{\left(-s \bm{{\mu}}_{j,k} + \frac{(b-a)^2s^2}{8} \right)}\\
& \prod \exp{\left(\frac{-s}{d_v} \mathbb{E}[\bm{x}_{u,k}] + \frac{(b-a)^2s^2}{8d_v^2}\right)}  \prod \exp{\left(\frac{s}{d_j}  \mathbb{E}[\bm{x}_{u,k}] + \frac{(b-a)^2s^2}{8d_j^2} \right)} \; (\text{Hoeffding's lemma})\\
& =  \exp{\left( (\frac{(b-a)^2}{4}+\frac{(b-a)^2}{8d_v} + \frac{(b-a)^2}{8d_j})s^2 +\left(\bm{{\mu}}_{v,k} - \bm{{\mu}}_{j,k} -\left( \bm{\tilde{\mu}}_{v,k} -\bm{\tilde{\mu}}_{j,k} \right) -t \right)s \right)} \\
& \leq \exp{\left( (\frac{(b-a)^2}{4}+\frac{(b-a)^2}{8d_v} + \frac{(b-a)^2}{8d_j})s^2 + \left( \left| \bm{{\mu}}_{v,k} - \bm{{\mu}}_{j,k} -\left( \bm{\tilde{\mu}}_{v,k} -\bm{\tilde{\mu}}_{j,k}\right)  \right| - t \right)s \right)} 
\end{align*}

Since $t \geq \norm{\bm{{\mu}}_v - \bm{\tilde{\mu}}_v - \left( \bm{{\mu}}_j - \bm{\tilde{\mu}}_j \right) }_2$, then when $s=-\frac{\left| \bm{{\mu}}_{v,k} - \bm{{\mu}}_{j,k} -\left( \bm{\tilde{\mu}}_{v,k} -\bm{\tilde{\mu}}_{j,k} \right) \right| - t }{(\frac{(b-a)^2}{2}+\frac{(b-a)^2}{4d_v} + \frac{(b-a)^2}{4d_j})} > 0$, we get the tightest bound and
$$\exp{\left( -\frac{\left( \left| \bm{{\mu}}_{v,k} - \bm{{\mu}}_{j,k} -\left( \bm{\tilde{\mu}}_{v,k} -\bm{\tilde{\mu}}_{j,k}\right)  \right| - t \right)^2}{(1+\frac{1}{2d_v} + \frac{1}{2d_j})(b-a)^2} \right)} \leq \exp{\left( -\frac{\left( \norm{\bm{{\mu}}_v - \bm{\tilde{\mu}}_v - \left( \bm{{\mu}}_j - \bm{\tilde{\mu}}_j \right) }_2 - t \right)^2}{(1+\frac{1}{2d_v} + \frac{1}{2d_j})(b-a)^2} \right)} $$
Then 
$$\mathbb{P}\left( \big|\bm{{h}}_{v,k}^\text{HP} - \bm{{h}}_{j,k}^\text{HP} \big| \geq t\right) \leq 2 \exp{\left( -\frac{\left( \norm{\bm{{\mu}}_v - \bm{\tilde{\mu}}_v - \left( \bm{{\mu}}_j - \bm{\tilde{\mu}}_j \right) }_2 - t \right)^2}{(1+\frac{1}{2d_v} + \frac{1}{2d_j})(b-a)^2} \right)} $$
Since $\frac{t_\text{HP}}{\sqrt{F_h}}\geq \norm{\bm{{\mu}}_v - \bm{\tilde{\mu}}_v - \left( \bm{{\mu}}_j - \bm{\tilde{\mu}}_j \right) }_2$, then from Lemma 1, we have 
\begin{equation}
\begin{aligned}
\mathbb{P}\left(\norm{ \bm{{h}}_{v}^\text{HP} - \bm{{h}}_{j}^\text{HP} }_2 \geq t_\text{HP} \right) & \leq \sum_{k=1}^{F_h} \mathbb{P}\left( \left|\mathbf{h}_{v,k}^\text{HP} - \mathbf{h}_{j,k}^\text{HP} \right| \geq \frac{t_\text{HP}}{\sqrt{F_h}} \right) \\ 
& \leq 2 F_h \exp{\Bigg( -\frac{\left( \norm{\bm{{\mu}}_v - \bm{\tilde{\mu}}_v - \left( \bm{{\mu}}_j - \bm{\tilde{\mu}}_j \right) }_2 - \frac{t_\text{HP}}{\sqrt{F_h}} \right)^2}{(1+\frac{1}{2d_v} + \frac{1}{2d_j})(b-a)^2} \Bigg)} \\
\end{aligned}
\end{equation}
$\norm{\bm{{\mu}}_v - \bm{\tilde{\mu}}_v - \left( \bm{{\mu}}_j - \bm{\tilde{\mu}}_j \right) }_2$ is essentially the relative center movement.

\end{proof}

\section{Detailed Discussion of Performance Metrics and More Experimental Results}
\label{appendix:detailed_discussion_performance_metrics_more_experimental_results}

\subsection{Hypothesis Testing of ACM-GNNs vs. GNNs}
To more comprehensively validate if "intra-class embedding distance is smaller than the inter-class embedding distance" closely correlates to the superiority of a given model versus another model, we choose the SOTA model ACM-GNNs and conduct the following hypothesis testing of ACM-GNNs \cite{luan2022revisiting} versus GNNs and ACM-GNNs versus MLPs. From the results in table \ref{tab:tab_acmgnns_vs_gnns} we can see that the above statements hold except in ACM-SGC-1 vs. SGC-1 on Squirrel and ACM-GCN vs. GCN on CiteSeer. This again verifies that the relationship between intra- and inter-class embedding distance strongly relates to the model performance.
\begin{table}[htbp]
  \centering
  \caption{Hypothesis testing results of ACM-GNNs v.s. GNNs: The cells marked by orange are the cases that the p-values significantly indicate the opposite direction as the trained results (ground truth).}
  \resizebox{1\hsize}{!}{
    \begin{tabular}{c|c|ccccccccc}
    \toprule
    \toprule
          &       & Cornell & Wisconsin & Texas & Film  & Chameleon & Squirrel & Cora  & CiteSeer & PubMed \\
    \midrule
          &  p-value & 1.00  & 1.00  & 1.00  & 1.00  & 0.19  & \cellcolor[rgb]{ .929,  .49,  .192}1.00 & 1.00  & 1.00  & 1.00 \\
    {ACM-SGC-1 } & ACC ACM-SGC-1 & 93.77 $\pm$ 1.91 & 93.25 $\pm$ 2.92 & 93.61 $\pm$ 1.55 & 39.33 $\pm$ 1.25 & 63.68 $\pm$ 1.62 & 46.4 $\pm$ 1.13 & 86.63 $\pm$ 1.13 & 80.96 $\pm$ 0.93 & 87.75 $\pm$ 0.88 \\
    {v.s. SGC-1} & ACC SGC-1 & 70.98 $\pm$ 8.39 & 70.38 $\pm$ 2.85 & 83.28 $\pm$ 5.43 & 25.26 $\pm$ 1.18 & 64.86 $\pm$ 1.81 & 47.62 $\pm$ 1.27 & 85.12 $\pm$ 1.64 & 79.66 $\pm$ 0.75 & 85.5 $\pm$ 0.76 \\
          & \textbf{Diff Acc} & 22.79 & 22.87 & 10.33 & 14.07 & -1.18 & -1.22 & 1.51  & 1.30  & 2.25 \\
    \midrule
          & p-value & 1.00  & 1.00  & 1.00  & 1.00  & 1.00  & 1.00  & 0.41  & \cellcolor[rgb]{ .929,  .49,  .192}0.00 & 1.00 \\
    {ACM-GCN } & ACC ACM-GCN & 94.75 $\pm$ 3.8 & 95.75 $\pm$ 2.03 & 94.92 $\pm$ 2.88 & 41.62 $\pm$ 1.15 & 69.04 $\pm$ 1.74 & 58.02 $\pm$ 1.86 & 88.62 $\pm$ 1.22 & 81.68 $\pm$ 0.97 & 90.66 $\pm$ 0.47 \\
    {v.s. GCN} & ACC GCN & 82.46 $\pm$ 3.11 & 75.5 $\pm$ 2.92 & 83.11 $\pm$ 3.2 & 35.51 $\pm$ 0.99 & 64.18 $\pm$ 2.62 & 44.76 $\pm$ 1.39 & 87.78 $\pm$ 0.96 & 81.39 $\pm$ 1.23 & 88.9 $\pm$ 0.32 \\
          & \textbf{Diff Acc} & 12.29 & 20.25 & 11.81 & 6.11  & 4.86  & 13.26 & 0.84  & 0.29  & 1.76 \\
    \midrule
          & p-value & 0.10  & 0.00  & 0.50  & 1.00  & 1.00  & 1.00  & 1.00  & 1.00  & 0.42 \\
    {ACM-SGC-1 } & ACC ACM-SGC-1 & 93.77 $\pm$ 1.91 & 93.25 $\pm$ 2.92 & 93.61 $\pm$ 1.55 & 39.33 $\pm$ 1.25 & 63.68 $\pm$ 1.62 & 46.4 $\pm$ 1.13 & 86.63 $\pm$ 1.13 & 80.96 $\pm$ 0.93 & 87.75 $\pm$ 0.88 \\
    {v.s. MLP-1} & ACC MLP-1 & 93.77 $\pm$ 3.34 & 93.87 $\pm$ 3.33 & 93.77 $\pm$ 3.34 & 34.53 $\pm$ 1.48 & 45.01 $\pm$ 1.58 & 29.17 $\pm$ 1.46 & 74.3 $\pm$ 1.27 & 75.51 $\pm$ 1.35 & 86.23 $\pm$ 0.54 \\
          & \textbf{Diff Acc} & 0.00  & -0.62 & -0.16 & 4.80  & 18.67 & 17.23 & 12.33 & 5.45  & 1.52 \\
    \midrule
          & p-value & 0.94  & 1.00  & 1.00  & 1.00  & 1.00  & 1.00  & 1.00  & 1.00  & 1.00 \\
    {ACM-GCN} & ACC ACM-GCN & 94.75 $\pm$ 3.8 & 95.75 $\pm$ 2.03 & 94.92 $\pm$ 2.88 & 41.62 $\pm$ 1.15 & 69.04 $\pm$ 1.74 & 58.02 $\pm$ 1.86 & 88.62 $\pm$ 1.22 & 81.68 $\pm$ 0.97 & 90.66 $\pm$ 0.47 \\
    { v.s. MLP-2} & ACC MLP-2 & 91.30 $\pm$ 0.70 & 93.87 $\pm$ 3.33 & 92.26 $\pm$ 0.71 & 38.58 $\pm$ 0.25 & 46.72 $\pm$ 0.46 & 31.28 $\pm$ 0.27 & 76.44 $\pm$ 0.30 & 76.25 $\pm$ 0.28 & 86.43 $\pm$ 0.13 \\
          & \textbf{Diff Acc} & 3.45  & 1.88  & 2.66  & 3.04  & 22.32 & 26.74 & 12.18 & 5.43  & 4.23 \\
    \bottomrule
    \bottomrule
    \end{tabular}%
    }
  \label{tab:tab_acmgnns_vs_gnns}%
\end{table}%

\subsection{Implementation Details of KR and GNB}

Classifier-based performance metrics: we measure the quality of aggregated features based on the performance of a "training-free" classifier. In this paper, we take use of kernel regression and naive Bayes classifiers.

\paragraph{Kernel Regression} Kernel method utilizes a pairwise similarity function $K(\bm{x}_i, \bm{x}_j)$ to measure how closely related two node embeddings are, without the need for any training process \cite{koutroumbas2008pattern, hofmann2008kernel, mohri2018foundations}. A higher value of $K(\bm{x}_i, \bm{x}_j)$ indicates a smaller distance between the embeddings of nodes $\bm{x}_i$ and $\bm{x}_j$ and vice versa. 
\begin{algorithm}
\caption{Pseudo code for kernel regression}\label{alg:kernel_regression}
\begin{algorithmic}
\Require $X, \hat{A}, Z, N, N_S, N_\text{epochs}$    \Comment{$N$ is the number of nodes, $N_S$ is the number of samples}
\For{$i$ in $N_\text{epochs}$}
\State $S \gets \text{sample}(N,N_S)$
\State Get $K_S^X, K_S^H, Z_S$                \Comment{$K_S^X, K_S^H$ are the kernels for $X$ and $H$ for the sampled nodes}
\State $S_\text{train}, S_\text{test} \gets \text{sample}(S, 0.6N_S, 0.4N_S)$
\State $f_K^X \gets \left( K_S^X[S_\text{test},:][:,S_\text{train}] \right) \left( K_S^X[S_\text{train},:][:,S_\text{train}] \right)^{-1} Z_S[S_\text{train},:]$
\State $f_K^H \gets \left( K_S^H[S_\text{test},:][:,S_\text{train}] \right) \left( K_S^H[S_\text{train},:][:,S_\text{train}] \right)^{-1} Z_S[S_\text{train},:]$
\State Compute $\text{ACC}^X_i, \text{ACC}^H_i \gets \text{Accuracy}(f_K^X, Z_S[S_\text{test},:]),\ \text{Accuracy}(f_K^H, Z_S[S_\text{test},:]$)
\EndFor
\State $\text{p-value} \gets \text{ttest}( \text{ACC}^X, \text{ACC}^H)$
\end{algorithmic}
\end{algorithm}

To capture the \textbf{feature-based non-linear node similarity}, we use Neural Network Gaussian Process (NNGP) \cite{lee2017deep,arora2019exact,garriga2018deep,matthews2018gaussian}. Specifically, we consider the activation function $\phi(x) = \text{ReLU}(x)$ and have

\begin{align*}
&K_\text{NL}(\bm{x}_i, \bm{x}_j) = \frac{1}{2\pi}\left(\bm{x}_i^\top \bm{x}_j\left(\pi-\tilde{\phi}\left(\frac{\bm{x}_i^\top \bm{x}_j}{\|\bm{x}_i\|_2\|\bm{x}_j\|_2}\right)\right) +\sqrt{\|\bm{x}_i\|_2^2\|\bm{x}_j\|_2^2-\left(\bm{x}_i^\top \bm{x}_j\right)^2}\right)
\end{align*}

where $\tilde{\phi}({x}) = \arccos({x})$ is the dual activation function of $\text{ReLU}$. \footnote{Note that when $\phi({x}) =  \exp{(i{x})}$, we have $K(\bm{x}_i, \bm{x}_j) = \exp{(-\frac{1}{2}\norm{\bm{x}_i- \bm{x}_j}_2^2)}$, which is closely related to the Euclidean distance of node embeddings tested in section \ref{sec:hypothesis_testing}, further emphasizing the strong relationship between embedding distances and kernel similarities.}

Furthermore, we observe that there exist some datasets where linear G-aware models do not have the same performance disparities compared to their coupled G-agnostic models as non-linear G-aware models, \eg{} as the results on PubMed shown in table \ref{tab:hypothesis_testing_kernel_homo_performance_comparison}, SGC-1 underperforms MLPs while GCN outperforms MLP-2. This implies that relying on a single non-linear metric to assess whether G-aware models will surpass their coupled G-agnostic models is not enough, we need a linear metric as well. Thus, we choose the following linear kernel (inner product) for regression
\begin{align*}
K_\text{L}(\bm{x}_i, \bm{x}_j) = \frac{\bm{x}_i^\top \bm{x}_j}{\|\bm{x}_i\|_2\|\bm{x}_j\|_2} 
\end{align*}.

For Gaussian Naïve Bayes (GNB), we just use the features and aggregated features of the sampled training nodes to fit two separate classifiers and get the predicted accuracy for the test nodes. Note that Gaussian Naïve Bayes is just a linear classifier.

\paragraph{Threshold Values} Typically, the threshold for homophily and heterophily graphs is set at 0.5 \cite{zhu2020beyond,zhu2020graph,yan2021two,luan2022revisiting} . For classifier-based performance metrics, we establish two benchmark thresholds as below,
\begin{itemize}
    \item Normal Threshold 0.5 (NT0.5): Although not indicating statistical significance, we are still comfortable to set 0.5 as a loose threshold. A value exceeding 0.5 suggests that the G-aware model is not very likely to underperform their coupled G-agnostic model on the tested graph and vice versa.
    \item Statistical Significant Threshold 0.05 (SST0.05): Instead of offering an ambiguous statistical interpretation, SST0.05 will offer a clear statistical meaning. A value smaller than 0.05 implies that the G-aware model significantly underperforms their coupled G-agnostic model and a value greater than 0.95 suggests a high likelihood of G-aware model outperforming their coupled G-agnostic model. Besides, a value ranging from 0.05 to 0.95 indicates no significant performance distinction between G-aware model and its G-agnostic model.
\end{itemize}

We show the results of $\text{KR}_\text{L},\text{KR}_\text{NL}$ and GNB in section \ref{appendix:results_small_large_datasets} . Cells marked by grey are errors according to NT0.5 and results marked by \red{red} are incorrect according to SST 0.05. The comparisons with the existing homophily metrics are shown in section \ref{appendix:statistics_comparisons}. We can see that, no matter on small- (table \ref{tab:detailed_comparison_small_datasets} \ref{tab:statistics_small_scale_datasets}) or large-scale (table \ref{tab:detaied_comparison_large_scale_datasets} \ref{tab:statistics_large_scale_datasets}) datasets, the classifier-based performance metrics (CPMs) are significantly better than the existing homophily metrics on revealing the advantages and disadvantages of GNNs, decreasing the overall error rate from at least 0.34 to 0.13 (table \ref{tab:statistics_overall}). The running time of CPM is short (table \ref{tab:running_time}), only taking several minutes \footnote{1 NVIDIA V100 GPU with 16G memory, 8-core CPU with 16G memory} even on large-scale datasets such as pokec and snap-patents, which contains millons of nodes and tens of millions of edges \cite{lim2021new}.

\subsection{Results on Small-scale and Large-scale Datasets}
\label{appendix:results_small_large_datasets}
\begin{table}[htbp]
  \centering
  \caption{Comparison on small datasets}
  \resizebox{1\hsize}{!}{
    \begin{tabular}{c|c|ccccccccc}
    \toprule
    \toprule
          &       & Cornell & Wisconsin & Texas & Film  & Chameleon & Squirrel & Cora  & CiteSeer & PubMed \\
    \midrule
          & $\text{H}_\text{edge}$ & \cellcolor[rgb]{ .647,  .647,  .647}0.5669 & 0.4480 & 0.4106 & 0.3750 & \cellcolor[rgb]{ .647,  .647,  .647}0.2795 & \cellcolor[rgb]{ .647,  .647,  .647}0.2416 & 0.8100 & 0.7362 & \cellcolor[rgb]{ .647,  .647,  .647}0.8024 \\
     & $\text{H}_\text{node}$ & 0.3855 & 0.1498 & 0.0968 & 0.2210 & \cellcolor[rgb]{ .647,  .647,  .647}0.2470 & \cellcolor[rgb]{ .647,  .647,  .647}0.2156 & 0.8252 & 0.7175 & \cellcolor[rgb]{ .647,  .647,  .647}0.7924 \\
    {Baseline} & $\text{H}_\text{class}$ & 0.0468 & 0.0941 & 0.0013 & 0.0110 & \cellcolor[rgb]{ .647,  .647,  .647}0.0620 & \cellcolor[rgb]{ .647,  .647,  .647}0.0254 & 0.7657 & 0.6270 & \cellcolor[rgb]{ .647,  .647,  .647}0.6641 \\
    {Homophily} & $\text{H}_\text{agg}$ & \cellcolor[rgb]{ .647,  .647,  .647}0.8032 & \cellcolor[rgb]{ .647,  .647,  .647}0.7768 & \cellcolor[rgb]{ .647,  .647,  .647}0.6940 & \cellcolor[rgb]{ .647,  .647,  .647}0.6822 & 0.61  & \cellcolor[rgb]{ .647,  .647,  .647}0.3566 & 0.9904 & 0.9826 & \cellcolor[rgb]{ .647,  .647,  .647}0.9432 \\
    {  Metrics} & $\text{H}_\text{GE}$ & 0.31  & 0.34  & 0.35  & 0.16  & \cellcolor[rgb]{ .647,  .647,  .647}0.0152 & \cellcolor[rgb]{ .647,  .647,  .647}0.0157 & \cellcolor[rgb]{ .647,  .647,  .647}0.1700 & \cellcolor[rgb]{ .647,  .647,  .647}0.1900 & 0.27 \\
     & $\text{H}_\text{adj}$ & 0.1889 & 0.0826 & 0.0258 & 0.1272 & \cellcolor[rgb]{ .647,  .647,  .647}0.0663 & \cellcolor[rgb]{ .647,  .647,  .647}0.0196 & 0.8178 & 0.7588 & \cellcolor[rgb]{ .647,  .647,  .647}0.7431 \\
     & $\text{LI}$    & 0.0169 & 0.1311 & 0.1923 & 0.0002 & \cellcolor[rgb]{ .647,  .647,  .647}0.048 & \cellcolor[rgb]{ .647,  .647,  .647}0.0015 & 0.5904 & \cellcolor[rgb]{ .647,  .647,  .647}0.4508 & 0.4093 \\
           \midrule
{Classifier-based } & $\text{KR}_\text{L}$ & 1.39  & 0.00  & 0.00  & \cellcolor[rgb]{ .647,  .647,  .647}\textcolor[rgb]{ 1,  0,  0}{0.7834} & 1.00  & 1.00  & 1.00  & 1.00  & \cellcolor[rgb]{ .647,  .647,  .647}0.8026 \\
    {Performance Metrics} & GNB & 0.00  & 0.00  & 0.00  & 0.00  & 1.00  & 1.00  & 1.00  & 1.00  & \cellcolor[rgb]{ .647,  .647,  .647}1.0000 \\
    \midrule
          & ACC SGC & 70.98 $\pm$ 8.39 & 70.38 $\pm$ 2.85 & 83.28 $\pm$ 5.43 & 25.26 $\pm$ 1.18 & 64.86 $\pm$ 1.81 & 47.62 $\pm$ 1.27 & 85.12 $\pm$ 1.64 & 79.66 $\pm$ 0.75 & 85.5 $\pm$ 0.76 \\
    {SGC v.s. MLP1} & ACC MLP-1 & 93.77 $\pm$ 3.34 & 93.87 $\pm$ 3.33 & 93.77 $\pm$ 3.34 & 34.53 $\pm$ 1.48 & 45.01 $\pm$ 1.58 & 29.17 $\pm$ 1.46 & 74.3 $\pm$ 1.27 & 75.51 $\pm$ 1.35 & 86.23 $\pm$ 0.54 \\
          & \textbf{Diff Acc} & -22.79 & -23.49 & -10.49 & -9.27 & 19.85 & 18.45 & 10.82 & 4.15  & -0.73 \\
    \midrule
    \multicolumn{1}{p{12.57em}|}{ } & $\text{H}_\text{edge}$ & \cellcolor[rgb]{ .647,  .647,  .647}0.5669 & 0.4480 & 0.4106 & 0.3750 & \cellcolor[rgb]{ .647,  .647,  .647}0.2795 & \cellcolor[rgb]{ .647,  .647,  .647}0.2416 & 0.8100 & 0.7362 & 0.8024 \\
     & $\text{H}_\text{node}$ & 0.3855 & 0.1498 & 0.0968 & 0.2210 & \cellcolor[rgb]{ .647,  .647,  .647}0.2470 & \cellcolor[rgb]{ .647,  .647,  .647}0.2156 & 0.8252 & 0.7175 & 0.7924 \\
    {Baseline} & $\text{H}_\text{class}$ & 0.0468 & 0.0941 & 0.0013 & 0.0110 & \cellcolor[rgb]{ .647,  .647,  .647}0.0620 & \cellcolor[rgb]{ .647,  .647,  .647}0.0254 & 0.7657 & 0.6270 & 0.6641 \\
    {Homophily} & $\text{H}_\text{agg}$ & \cellcolor[rgb]{ .647,  .647,  .647}0.8032 & \cellcolor[rgb]{ .647,  .647,  .647}0.7768 & \cellcolor[rgb]{ .647,  .647,  .647}0.6940 & \cellcolor[rgb]{ .647,  .647,  .647}0.6822 & 0.61  & \cellcolor[rgb]{ .647,  .647,  .647}0.3566 & 0.9904 & 0.9826 & 0.9432 \\
    { Metrics} & $\text{H}_\text{GE}$ & 0.31  & 0.34  & 0.35  & 0.16  & \cellcolor[rgb]{ .647,  .647,  .647}0.0152 & \cellcolor[rgb]{ .647,  .647,  .647}0.0157 & \cellcolor[rgb]{ .647,  .647,  .647}0.1700 & \cellcolor[rgb]{ .647,  .647,  .647}0.1900 & \cellcolor[rgb]{ .647,  .647,  .647}0.2700 \\
     & $\text{H}_\text{adj}$ & 0.1889 & 0.0826 & 0.0258 & 0.1272 & \cellcolor[rgb]{ .647,  .647,  .647}0.0663 & \cellcolor[rgb]{ .647,  .647,  .647}0.0196 & 0.8178 & 0.7588 & 0.7431 \\
    & $\text{LI}$    & 0.0169 & 0.1311 & 0.1923 & 0.0002 & \cellcolor[rgb]{ .647,  .647,  .647}0.048 & \cellcolor[rgb]{ .647,  .647,  .647}0.0015 & 0.5904 & \cellcolor[rgb]{ .647,  .647,  .647}0.4508 & \cellcolor[rgb]{ .647,  .647,  .647}0.4093 \\
    \midrule
    {Classifier-based } & $\text{KR}_\text{NL}$ & 0.00  & 0.00  & 0.00  & 0.00  & 1.00  & 1.00  & 1.00  & 1.00  & 1.00 \\
    {Performance Metrics} & GNB & 0.00  & 0.00  & 0.00  & 0.00  & 1.00  & 1.00  & 1.00  & 1.00  & 1.00 \\
    \midrule
          & ACC GCN & 82.46 $\pm$ 3.11 & 75.5 $\pm$ 2.92 & 83.11 $\pm$ 3.2 & 35.51 $\pm$ 0.99 & 64.18 $\pm$ 2.62 & 44.76 $\pm$ 1.39 & 87.78 $\pm$ 0.96 & 81.39 $\pm$ 1.23 & 88.9 $\pm$ 0.32 \\
    {GCN v.s. MLP2} & ACC MLP-2 & 91.30 $\pm$ 0.70 & 93.87 $\pm$ 3.33 & 92.26 $\pm$ 0.71 & 38.58 $\pm$ 0.25 & 46.72 $\pm$ 0.46 & 31.28 $\pm$ 0.27 & 76.44 $\pm$ 0.30 & 76.25 $\pm$ 0.28 & 86.43 $\pm$ 0.13 \\
          & \textbf{Diff Acc} & -8.84 & -18.37 & -9.15 & -3.07 & 17.46 & 13.48 & 11.34 & 5.14  & 2.47 \\
    \bottomrule
    \bottomrule
    \end{tabular}%
    }
  \label{tab:detailed_comparison_small_datasets}%
\end{table}%

\begin{table}[htbp]
  \centering
  \caption{Comparison on large-scale datasets}
  \resizebox{1\hsize}{!}{
    \begin{tabular}{c|c|ccccccc}
    \toprule
    \toprule
          &       & Penn94 &  pokec &  arXiv-year &  snap-patents &  genius &  twitch-gamers & Deezer-Europe \\
    \midrule
          & $\text{H}_\text{edge}$ & 0.4700 & 0.4450 & \cellcolor[rgb]{ .647,  .647,  .647}0.2220 & 0.0730 & \cellcolor[rgb]{ .647,  .647,  .647}0.6180 & \cellcolor[rgb]{ .647,  .647,  .647}0.5450 & \cellcolor[rgb]{ .647,  .647,  .647}0.5250 \\
     & $\text{H}_\text{node}$ & 0.4828 & 0.4283 & \cellcolor[rgb]{ .647,  .647,  .647}0.2893 & 0.2206 & \cellcolor[rgb]{ .647,  .647,  .647}0.5087 & \cellcolor[rgb]{ .647,  .647,  .647}0.5564 & \cellcolor[rgb]{ .647,  .647,  .647}0.5299 \\
    {Baseline} & $\text{H}_\text{class}$ & 0.0460 & 0.0000 & \cellcolor[rgb]{ .647,  .647,  .647}0.2720 & 0.1000 & 0.0800 & 0.0900 & 0.0300 \\
    {Homophily} & $\text{H}_\text{agg}$ & 0.2712 & 0.0807 & 0.7066 & \cellcolor[rgb]{ .647,  .647,  .647}0.6170 & \cellcolor[rgb]{ .647,  .647,  .647}0.7823 & 0.4172 & \cellcolor[rgb]{ .647,  .647,  .647}0.5580 \\
    {  Metrics} & $\text{H}_\text{GE}$ & 0.3734 & \cellcolor[rgb]{ .647,  .647,  .647}0.9222 & 0.8388 & \cellcolor[rgb]{ .647,  .647,  .647}0.6064 & \cellcolor[rgb]{ .647,  .647,  .647}0.6655 & 0.2865 & 0.0378 \\
    & $\text{H}_\text{adj}$ & 0.0366 & -0.1132 & \cellcolor[rgb]{ .647,  .647,  .647}0.0729 & 0.0907 & 0.1432 & 0.1010 & 0.1586 \\
    & $\text{LI}$    & 0.0851 & 0.0172 & \cellcolor[rgb]{ .647,  .647,  .647}0.0407 & 0.0243 & 0.0025 & 0.0058 & 0.0007 \\
    \midrule
    {Classifier-based } & $\text{KR}_\text{L}$ & 0.00  & 0.03  & 0.98  & \textcolor[rgb]{ 1,  0,  0}{0.19} & 0.00  & \textcolor[rgb]{ 1,  0,  0}{0.25} & 0.00 \\
    {Performance Metrics} & GBN & 0.00  & 0.00  & 1.00  & \cellcolor[rgb]{ .647,  .647,  .647}\textcolor[rgb]{ 1,  0,  0}{1.00} & 0.00  & \cellcolor[rgb]{ .647,  .647,  .647}\textcolor[rgb]{ 1,  0,  0}{1.00} & 0.00 \\
    \midrule
          & ACC SGC & 67.06 $\pm$ 0.19 & 52.88 $\pm$ 0.64 & 35.58 $\pm$ 0.22  & 29.65 $\pm$ 0.04 & 82.31 $\pm$ 0.45 & 57.9 $\pm$ 0.18 & 61.63 $\pm$ 0.25 \\
    {SGC vs MLP1} & ACC MLP-1 & 73.72 $\pm$ 0.5 & 59.89 $\pm$ 0.11 & 34.11 $\pm$ 0.17 & 30.59 $\pm$ 0.02 & 86.48 $\pm$ 0.11 & 59.45 $\pm$ 0.16 & 63.14 $\pm$ 0.41 \\
          & \textbf{Diff Acc} & -6.66 & -7.01 & 1.47  & -0.94 & -4.17 & -1.55 & -1.51 \\
    \midrule
    { } & $\text{H}_\text{edge}$ & \cellcolor[rgb]{ .647,  .647,  .647}0.4700 & \cellcolor[rgb]{ .647,  .647,  .647}0.4450 & \cellcolor[rgb]{ .647,  .647,  .647}0.2220 & \cellcolor[rgb]{ .647,  .647,  .647}0.0730 & \cellcolor[rgb]{ .647,  .647,  .647}0.6180 & 0.5450 & \cellcolor[rgb]{ .647,  .647,  .647}0.5250 \\
     & $\text{H}_\text{node}$ & \cellcolor[rgb]{ .647,  .647,  .647}0.4828 & \cellcolor[rgb]{ .647,  .647,  .647}0.4283 & \cellcolor[rgb]{ .647,  .647,  .647}0.2893 & \cellcolor[rgb]{ .647,  .647,  .647}0.2206 & \cellcolor[rgb]{ .647,  .647,  .647}0.5087 & 0.5564 & \cellcolor[rgb]{ .647,  .647,  .647}0.5299 \\
    {Baseline} & $\text{H}_\text{class}$ & \cellcolor[rgb]{ .647,  .647,  .647}0.0460 & \cellcolor[rgb]{ .647,  .647,  .647}0.0000 & \cellcolor[rgb]{ .647,  .647,  .647}0.2720 & \cellcolor[rgb]{ .647,  .647,  .647}0.1000 & 0.0800 & \cellcolor[rgb]{ .647,  .647,  .647}0.0900 & 0.0300 \\
    {Homophily} & $\text{H}_\text{agg}$ & \cellcolor[rgb]{ .647,  .647,  .647}0.2712 & \cellcolor[rgb]{ .647,  .647,  .647}0.0807 & 0.7066 & 0.6170 & \cellcolor[rgb]{ .647,  .647,  .647}0.7823 & \cellcolor[rgb]{ .647,  .647,  .647}0.4172 & \cellcolor[rgb]{ .647,  .647,  .647}0.5580 \\
    { Metrics} & $\text{H}_\text{GE}$ & \cellcolor[rgb]{ .647,  .647,  .647}0.3734 & 0.9222 & 0.8388 & 0.6064 & \cellcolor[rgb]{ .647,  .647,  .647}0.6655 & \cellcolor[rgb]{ .647,  .647,  .647}0.2865 & 0.0378 \\
    & $\text{H}_\text{adj}$ & \cellcolor[rgb]{ .647,  .647,  .647}0.0366 & \cellcolor[rgb]{ .647,  .647,  .647}-0.1132 & \cellcolor[rgb]{ .647,  .647,  .647}0.0729 & \cellcolor[rgb]{ .647,  .647,  .647}0.0907 & 0.1432 & \cellcolor[rgb]{ .647,  .647,  .647}0.1010 & 0.1586 \\
    & $\text{LI}$    & \cellcolor[rgb]{ .647,  .647,  .647}0.0851 & \cellcolor[rgb]{ .647,  .647,  .647}0.0172 & \cellcolor[rgb]{ .647,  .647,  .647}0.0407 & \cellcolor[rgb]{ .647,  .647,  .647}0.0243 & 0.0025 & \cellcolor[rgb]{ .647,  .647,  .647}0.0058 & 0.0007 \\
    \midrule
    {Classifier-based } & $\text{KR}_\text{NL}$ & \cellcolor[rgb]{ .647,  .647,  .647}\textcolor[rgb]{ 1,  0,  0}{0.00} & \textcolor[rgb]{ 1,  0,  0}{0.57} & 1.00  & \cellcolor[rgb]{ .647,  .647,  .647}\textcolor[rgb]{ 1,  0,  0}{0.4083} & 0.00  & 1.00  & 0.00 \\
    {Performance Metrics} & GNB & \cellcolor[rgb]{ .647,  .647,  .647}\textcolor[rgb]{ 1,  0,  0}{0.00} & \cellcolor[rgb]{ .647,  .647,  .647}\textcolor[rgb]{ 1,  0,  0}{0.00} & 1.00  & 1.00  & 0.00  & 1.00  & 0.00 \\
    \midrule
          & ACC GCN & 82.08 $\pm$ 0.31 & 70.3 $\pm$ 0.1 & 40 $\pm$ 0.26 & 35.8 $\pm$ 0.05 & 83.26 $\pm$ 0.14 & 62.33 $\pm$ 0.23 & 60.16 $\pm$ 0.51 \\
    {GCN vs MLP2} & ACC MLP-2 & 74.68 $\pm$ 0.28 & 62.13 $\pm$ 0.1 & 36.36 $\pm$ 0.23 & 31.43 $\pm$ 0.04 & 86.62 $\pm$ 0.08 & 60.9 $\pm$ 0.11 & 64.25 $\pm$ 0.41 \\
          & \textbf{Diff Acc} & 7.40  & 8.17  & 3.64  & 4.37  & -3.36 & 1.43  & -4.09 \\
    \bottomrule
    \bottomrule
    \end{tabular}%
    }
  \label{tab:detaied_comparison_large_scale_datasets}%
\end{table}%

\newpage

\subsection{Statistics and Comparisons}
\label{appendix:statistics_comparisons}
\paragraph{Discrepancy Between Linear and Non-linear Models} From the experimental results on large-scale datasets reported in Table \ref{tab:detaied_comparison_large_scale_datasets}, we observe that, for linear and non-linear G-aware models, there exists inconsistency between their comparison with their coupled G-agnostics models. For example, on $\textit{Penn94, pokec, snap-patents}$ and $\textit{twitch-gamers}$, SGC-1 underperforms MLP-1 but GCN outperforms MLP-2. In fact, $\text{PubMed}$ in Table \ref{tab:detailed_comparison_small_datasets} also belongs to this family of datasets. We do not have a proved theory to explain this phenomenon for now. But there is obviously a synergy between homophily/heterophily and non-linearity that cause this discrepancy together. And we think, on this special subset of heterophilic graphs, we should develop theoretical analysis to discuss the interplay between graph structure and feature non-linearity, and how they affect node distinguishability together.

The current homophily values (including the proposed metrics) are not able to explain the phenomenon associated with this group of datasets. We keep it as an open question and encourage people from the GNN community to study it in the future.

\begin{table}[htbp]
  \centering
  \caption{Statistics on small-scale datasets}
    \begin{tabular}{c|cc}
    \midrule
    \midrule
          & Total Error & Error Rate \\
    \midrule
    $\text{H}_\text{edge}$ & 7     & 0.39 \\
    $\text{H}_\text{node}$ & 5     & 0.28 \\
    $\text{H}_\text{class}$ & 5     & 0.28 \\
    $\text{H}_\text{agg}$ & 11    & 0.61 \\
    $\text{H}_\text{GE}$ & 9     & 0.50 \\
    $\text{H}_\text{adj}$ & 5     & 0.28 \\
    $\text{LI}$    & 7     & 0.39 \\
    \midrule
    $\text{KR}$ (NT0.5)  & 2     & 0.11 \\
    $\text{KR}$ (SST0.05) & 1     & 0.06 \\
    GNB (NT0.5) & 1     & 0.06 \\
    GNB (SST0.05) & 1     & 0.06 \\
    \bottomrule
    \bottomrule
    \end{tabular}%
  \label{tab:statistics_small_scale_datasets}%
\end{table}%

\begin{table}[htbp]
  \centering
  \caption{Statistics on large-scale datasets}
    \begin{tabular}{c|cc}
    \midrule
    \midrule
          & Total Error & Error Rate \\
    \midrule
    $\text{H}_\text{edge}$ & 9     & 0.64 \\
    $\text{H}_\text{node}$ & 9     & 0.64 \\
    $\text{H}_\text{class}$ & 6     & 0.43 \\
    $\text{H}_\text{agg}$ & 8     & 0.57 \\
    $\text{H}_\text{GE}$ & 6     & 0.43 \\
    $\text{H}_\text{adj}$ & 6     & 0.43 \\
    $\text{LI}$    & 6     & 0.43 \\
    \midrule
    $\text{KR}$ (NT0.5)  & 2     & 0.14 \\
    $\text{KR}$ (SST0.05) & 5     & 0.36 \\
    GNB (NT0.5) & 4     & 0.29 \\
    GNB (SST0.05) & 4     & 0.29 \\
    \bottomrule
    \bottomrule
    \end{tabular}%
  \label{tab:statistics_large_scale_datasets}%
\end{table}%

\begin{table}[htbp]
  \centering
  \caption{Overall statistics on small- and large-scale datasets}
    \begin{tabular}{c|cc}
    \midrule
    \midrule
          & Total Error & Error Rate \\
    \midrule
    $\text{H}_\text{edge}$ & 16    & 0.50 \\
    $\text{H}_\text{node}$ & 14    & 0.44 \\
    $\text{H}_\text{class}$ & 11    & 0.34 \\
    $\text{H}_\text{agg}$ & 19    & 0.59 \\
    $\text{H}_\text{GE}$ & 15    & 0.47 \\
    $\text{H}_\text{adj}$ & 11    & 0.34 \\
    $\text{LI}$    & 13    & 0.41 \\
    \midrule
    $\text{KR} $ (NT0.5)  & 4     & 0.13 \\
    $\text{KR} $ (SST0.05) & 6     & 0.19 \\
    GNB (NT0.5) & 5     & 0.16 \\
    GNB (SST0.05) & 5     & 0.16 \\
    \bottomrule
    \bottomrule
    \end{tabular}%
  \label{tab:statistics_overall}%
\end{table}%

\begin{table}[htbp]
  \centering
  \caption{Total running time (seconds/100 samples) of $\text{KR}_\text{L}$, $\text{KR}_\text{NL}$ and GNB}
    \begin{tabular}{c|ccc}
    \toprule
    \toprule
          & $\text{KR}_\text{L} $ & $\text{KR}_\text{NL}$ & GNB \\
    \midrule
    Cornell & 0.58  & 0.67  & 1.39 \\
    Wisconsin & 0.78  & 0.87  & 1.72 \\
    Texas & 0.59  & 0.67  & 1.41 \\
    Film  & 5.29  & 5.41  & 2.72 \\
    Chameleon & 3.97  & 3.95  & 3.81 \\
    Squirrel & 5.39  & 5.36  & 4.15 \\
    Cora  & 3.94  & 4.10  & 3.08 \\
    CiteSeer & 4.85  & 5.05  & 6.55 \\
    PubMed & 9.35  & 9.41  & 5.27 \\
    Penn94 & 18.57 & 18.68 & 12.43 \\
     pokec & 84.47 & 86.08 & 50.03 \\
     arXiv-year & 7.77  & 7.82  & 4.56 \\
     snap-patents & 304.06 & 296.21 & 163.84 \\
     genius & 8.20  & 8.12  & 5.30 \\
     twitch-gamers & 9.34  & 9.24  & 4.17 \\
    Deezer-Europe & 37.41 & 39.49 & 59.84 \\
    \bottomrule
    \bottomrule
    \end{tabular}%
  \label{tab:running_time}%
\end{table}%

\newpage
\subsection{Results for Symmetric Renormalized Affinity Matrix}
\label{appendix:results_symmetric_renormalized_affinity_matrix}

To evaluate if the benefits of classifier-based performance metrics can be maintained for different aggregation operators, we replace the random walk renormalized affinity matrix with synmmetric renormalized affinity matrix in SGC-1, GCN, $\text{KR}_\text{L},\text{KR}_\text{NL}$ and GNB and report the results and comparisons as belows.

It is observed that the superiority holds on both small- (table \ref{tab:detailed_comparison_symmetric_affinity_small_scale}, \ref{tab:statistics_symmetric_affinity_small_scale}) and large-scale datasets (table \ref{tab:detailed_comparison_symmetric_affinity_large_scale}, \ref{tab:statistics_symmetric_affinity_large_scale}), reducing the overall error rate from at least 0.31 to 0.13 (table \ref{tab:statistics_symmetric_affinity_overall}).

\begin{table}[htbp]
  \centering
  \caption{Results for symmetric renormalized affinity matrix on small-scale datasets}
   \resizebox{1\hsize}{!}{
    \begin{tabular}{c|c|ccccccccc}
    \toprule
    \toprule
          &       & Cornell & Wisconsin & Texas & Film  & Chameleon & Squirrel & Cora  & CiteSeer & PubMed \\
    \midrule
          & $\text{H}_\text{edge}$ & \cellcolor[rgb]{ .647,  .647,  .647}0.5669 & 0.4480 & 0.4106 & 0.3750 & \cellcolor[rgb]{ .647,  .647,  .647}0.2795 & \cellcolor[rgb]{ .647,  .647,  .647}0.2416 & 0.8100 & 0.7362 & 0.8024 \\
     & $\text{H}_\text{node}$ & 0.3855 & 0.1498 & 0.0968 & 0.2210 & \cellcolor[rgb]{ .647,  .647,  .647}0.2470 & \cellcolor[rgb]{ .647,  .647,  .647}0.2156 & 0.8252 & 0.7175 & 0.7924 \\
    {Baseline} & $\text{H}_\text{class}$ & 0.0468 & 0.0941 & 0.0013 & 0.0110 & \cellcolor[rgb]{ .647,  .647,  .647}0.0620 & \cellcolor[rgb]{ .647,  .647,  .647}0.0254 & 0.7657 & 0.6270 & 0.6641 \\
    {Homophily} & $\text{H}_\text{agg}$ & \cellcolor[rgb]{ .647,  .647,  .647}0.8032 & \cellcolor[rgb]{ .647,  .647,  .647}0.7768 & \cellcolor[rgb]{ .647,  .647,  .647}0.6940 & \cellcolor[rgb]{ .647,  .647,  .647}0.6822 & 0.61  & \cellcolor[rgb]{ .647,  .647,  .647}0.3566 & 0.9904 & 0.9826 & 0.9432 \\
    {  Metrics} & $\text{H}_\text{GE}$ & 0.31  & 0.34  & 0.35  & 0.16  & \cellcolor[rgb]{ .647,  .647,  .647}0.0152 & \cellcolor[rgb]{ .647,  .647,  .647}0.0157 & \cellcolor[rgb]{ .647,  .647,  .647}0.1700 & \cellcolor[rgb]{ .647,  .647,  .647}0.1900 & \cellcolor[rgb]{ .647,  .647,  .647}0.2700 \\
    & $\text{H}_\text{adj}$  & 0.1889 & 0.0826 & 0.0258 & 0.1272 & \cellcolor[rgb]{ .647,  .647,  .647}0.0663 & \cellcolor[rgb]{ .647,  .647,  .647}0.0196 & 0.8178 & 0.7588 & 0.7431 \\
    & $\text{LI}$    & 0.0169 & 0.1311 & 0.1923 & 0.0002 & \cellcolor[rgb]{ .647,  .647,  .647}0.048 & \cellcolor[rgb]{ .647,  .647,  .647}0.0015 & 0.5904 & \cellcolor[rgb]{ .647,  .647,  .647}0.4508 & \cellcolor[rgb]{ .647,  .647,  .647}0.4093 \\
    \midrule
    {Classifier-based } & $\text{KR}_\text{L}$ & 0.00  & 0.00  & 0.00  & \cellcolor[rgb]{ .647,  .647,  .647}\textcolor[rgb]{ 1,  0,  0}{0.9304} & 1.00  & 1.00  & 1.00  & 1.00  & \cellcolor[rgb]{ .647,  .647,  .647}\textcolor[rgb]{ 1,  0,  0}{0.0003} \\
    {Performance Metrics} & GNB & 0.00  & 0.00  & 0.00  & 0.00  & 1.00  & 1.00  & 1.00  & 1.00  & 1.00 \\
    \midrule
          & ACC SGC & 51.64 $\pm$ 12.27 & 39.63 $\pm$ 5.39 & 30.82 $\pm$ 4.96 & 27.02 $\pm$ 1 & 63.26 $\pm$ 1.98 & 46.03 $\pm$ 1.74 & 84.38 $\pm$ 1.5 & 79.51 $\pm$ 1.04 & 87.24 $\pm$ 0.44 \\
    {SGC vs MLP1} & ACC MLP-1 & 93.77 $\pm$ 3.34 & 93.87 $\pm$ 3.33 & 93.77 $\pm$ 3.34 & 34.53 $\pm$ 1.48 & 45.01 $\pm$ 1.58 & 29.17 $\pm$ 1.46 & 74.3 $\pm$ 1.27 & 75.51 $\pm$ 1.35 & 86.23 $\pm$ 0.54 \\
          & \textbf{Diff Acc} & -42.13 & -54.24 & -62.95 & -7.51 & 18.25 & 16.86 & 10.08 & 4.00  & 1.01 \\
    \midrule
    { } & $\text{H}_\text{edge}$ & \cellcolor[rgb]{ .647,  .647,  .647}0.5669 & 0.4480 & 0.4106 & 0.3750 & \cellcolor[rgb]{ .647,  .647,  .647}0.2795 & \cellcolor[rgb]{ .647,  .647,  .647}0.2416 & 0.8100 & 0.7362 & 0.8024 \\
     & $\text{H}_\text{node}$ & 0.3855 & 0.1498 & 0.0968 & 0.2210 & \cellcolor[rgb]{ .647,  .647,  .647}0.2470 & \cellcolor[rgb]{ .647,  .647,  .647}0.2156 & 0.8252 & 0.7175 & 0.7924 \\
    {Baseline} & $\text{H}_\text{class}$ & 0.0468 & 0.0941 & 0.0013 & 0.0110 & \cellcolor[rgb]{ .647,  .647,  .647}0.0620 & \cellcolor[rgb]{ .647,  .647,  .647}0.0254 & 0.7657 & 0.6270 & 0.6641 \\
    {Homophily} & $\text{H}_\text{agg}$ & \cellcolor[rgb]{ .647,  .647,  .647}0.8032 & \cellcolor[rgb]{ .647,  .647,  .647}0.7768 & \cellcolor[rgb]{ .647,  .647,  .647}0.6940 & \cellcolor[rgb]{ .647,  .647,  .647}0.6822 & 0.61  & \cellcolor[rgb]{ .647,  .647,  .647}0.3566 & 0.9904 & 0.9826 & 0.9432 \\
    { Metrics}& $\text{H}_\text{GE}$ & 0.31  & 0.34  & 0.35  & 0.16  & \cellcolor[rgb]{ .647,  .647,  .647}0.0152 & \cellcolor[rgb]{ .647,  .647,  .647}0.0157 & \cellcolor[rgb]{ .647,  .647,  .647}0.1700 & \cellcolor[rgb]{ .647,  .647,  .647}0.1900 & \cellcolor[rgb]{ .647,  .647,  .647}0.2700 \\
     & $\text{H}_\text{adj}$  & 0.1889 & 0.0826 & 0.0258 & 0.1272 & \cellcolor[rgb]{ .647,  .647,  .647}0.0663 & \cellcolor[rgb]{ .647,  .647,  .647}0.0196 & 0.8178 & 0.7588 & 0.7431 \\
    & $\text{LI}$    & 0.0169 & 0.1311 & 0.1923 & 0.0002 & \cellcolor[rgb]{ .647,  .647,  .647}0.048 & \cellcolor[rgb]{ .647,  .647,  .647}0.0015 & 0.5904 & \cellcolor[rgb]{ .647,  .647,  .647}0.4508 & \cellcolor[rgb]{ .647,  .647,  .647}0.4093 \\
    \midrule
    {Classifier-based } & $\text{KR}_\text{NL}$ & 0.00  & 0.00  & 0.00  & 0.00  & 1.00  & 1.00  & 1.00  & 1.00  & 0.98 \\
    {Performance Metrics} & GNB & 0.00  & 0.00  & 0.00  & 0.00  & 1.00  & 1.00  & 1.00  & 1.00  & 1.00 \\
    \midrule
          & ACC GCN & 82.62 $\pm$ 3.04 & 70.38 $\pm$ 3.16 & 82.46 $\pm$ 2.94 & 35.79 $\pm$ 1.09 & 68.95 $\pm$ 1.09 & 52.98 $\pm$ 0.85 & 87.87 $\pm$ 0.99 & 81.79 $\pm$ 1.09 & 89.47 $\pm$ 0.27 \\
    {GCN vs MLP2} & ACC MLP-2 & 91.30 $\pm$ 0.70 & 93.87 $\pm$ 3.33 & 92.26 $\pm$ 0.71 & 38.58 $\pm$ 0.25 & 46.72 $\pm$ 0.46 & 31.28 $\pm$ 0.27 & 76.44 $\pm$ 0.30 & 76.25 $\pm$ 0.28 & 86.43 $\pm$ 0.13 \\
          & \textbf{Diff Acc} & -8.68 & -23.49 & -9.80 & -2.79 & 22.23 & 21.70 & 11.43 & 5.54  & 3.04 \\
    \bottomrule
    \bottomrule
    \end{tabular}%
    }
  \label{tab:detailed_comparison_symmetric_affinity_small_scale}%
\end{table}%

\begin{table}[htbp]
  \centering
  \caption{Results for symmetric renormalized affinity matrix on large-scale datasets}
    \resizebox{1\hsize}{!}{
    \begin{tabular}{c|c|ccccccc}
    \toprule
    \toprule
          &       & Penn94 &  pokec &  arXiv-year &  snap-patents &  genius &  twitch-gamers & Deezer-Europe \\
    \midrule
          & $\text{H}_\text{edge}$ & 0.4700 & 0.4450 & \cellcolor[rgb]{ .647,  .647,  .647}0.2220 & 0.0730 & \cellcolor[rgb]{ .647,  .647,  .647}0.6180 & \cellcolor[rgb]{ .647,  .647,  .647}0.5450 & \cellcolor[rgb]{ .647,  .647,  .647}0.5250 \\
     & $\text{H}_\text{node}$ & 0.4828 & 0.4283 & \cellcolor[rgb]{ .647,  .647,  .647}0.2893 & 0.2206 & \cellcolor[rgb]{ .647,  .647,  .647}0.5087 & \cellcolor[rgb]{ .647,  .647,  .647}0.5564 & \cellcolor[rgb]{ .647,  .647,  .647}0.5299 \\
    {Baseline} & $\text{H}_\text{class}$ & 0.0460 & 0.0000 & \cellcolor[rgb]{ .647,  .647,  .647}0.2720 & 0.1000 & 0.0800 & 0.0900 & 0.0300 \\
    {Homophily} & $\text{H}_\text{agg}$ & 0.2712 & 0.0807 & 0.7066 & \cellcolor[rgb]{ .647,  .647,  .647}0.6170 & \cellcolor[rgb]{ .647,  .647,  .647}0.7823 & 0.4172 & \cellcolor[rgb]{ .647,  .647,  .647}0.5580 \\
    {  Metrics}& $\text{H}_\text{GE}$ & 0.3734 & \cellcolor[rgb]{ .647,  .647,  .647}0.9222 & 0.8388 & \cellcolor[rgb]{ .647,  .647,  .647}0.6064 & \cellcolor[rgb]{ .647,  .647,  .647}0.6655 & 0.2865 & 0.0378 \\
    & $\text{H}_\text{adj}$ & 0.0366 & -0.1132 & \cellcolor[rgb]{ .647,  .647,  .647}0.0729 & 0.0907 & 0.1432 & 0.1010 & 0.1586 \\
    & $\text{LI}$    & 0.0851 & 0.0172 & \cellcolor[rgb]{ .647,  .647,  .647}0.0407 & 0.0243 & 0.0025 & 0.0058 & 0.0007 \\
    \midrule
    {Classifier-based } & $\text{KR}_\text{L}$ & 0.00  & 0.02  & \cellcolor[rgb]{ .647,  .647,  .647}\textcolor[rgb]{ 1,  0,  0}{0.32} & \textcolor[rgb]{ 1,  0,  0}{0.46} & 0.00  & 0.01  & 0.00 \\
    {Performance Metrics} & GNB & 0.00  & 0.03  & 1.00  & \cellcolor[rgb]{ .647,  .647,  .647}\textcolor[rgb]{ 1,  0,  0}{1.00} & 0.00  & \cellcolor[rgb]{ .647,  .647,  .647}\textcolor[rgb]{ 1,  0,  0}{0.97} & 0.00 \\
          & ACC SGC & 64.63 $\pm$ 0.15 & 51.97 $\pm$ 0.38 & 35.24 $\pm$ 0.14 & 30.32 $\pm$ 0.05 & 81.66 $\pm$ 0.58 & 58.77 $\pm$ 0.18 & 60.2 $\pm$ 0.47 \\
    {SGC vs MLP1} & ACC MLP-1 & 73.72 $\pm$ 0.5 & 59.89 $\pm$ 0.11 & 34.11 $\pm$ 0.17 & 30.59 $\pm$ 0.02 & 86.48 $\pm$ 0.11 & 59.45 $\pm$ 0.16 & 63.14 $\pm$ 0.41 \\
          & \textbf{Diff Acc} & -9.09 & -7.92 & 1.13  & -0.27 & -4.82 & -0.68 & -2.94 \\
    \midrule
    { } & $\text{H}_\text{edge}$ & \cellcolor[rgb]{ .647,  .647,  .647}0.4700 & \cellcolor[rgb]{ .647,  .647,  .647}0.4450 & \cellcolor[rgb]{ .647,  .647,  .647}0.2220 & \cellcolor[rgb]{ .647,  .647,  .647}0.0730 & \cellcolor[rgb]{ .647,  .647,  .647}0.6180 & 0.5450 & \cellcolor[rgb]{ .647,  .647,  .647}0.5250 \\
     & $\text{H}_\text{node}$ & \cellcolor[rgb]{ .647,  .647,  .647}0.4828 & \cellcolor[rgb]{ .647,  .647,  .647}0.4283 & \cellcolor[rgb]{ .647,  .647,  .647}0.2893 & \cellcolor[rgb]{ .647,  .647,  .647}0.2206 & \cellcolor[rgb]{ .647,  .647,  .647}0.5087 & 0.5564 & \cellcolor[rgb]{ .647,  .647,  .647}0.5299 \\
    {Baseline} & $\text{H}_\text{class}$ & \cellcolor[rgb]{ .647,  .647,  .647}0.0460 & \cellcolor[rgb]{ .647,  .647,  .647}0.0000 & \cellcolor[rgb]{ .647,  .647,  .647}0.2720 & \cellcolor[rgb]{ .647,  .647,  .647}0.1000 & 0.0800 & \cellcolor[rgb]{ .647,  .647,  .647}0.0900 & 0.0300 \\
    {Homophily} & $\text{H}_\text{agg}$ & \cellcolor[rgb]{ .647,  .647,  .647}0.2712 & \cellcolor[rgb]{ .647,  .647,  .647}0.0807 & 0.7066 & 0.6170 & \cellcolor[rgb]{ .647,  .647,  .647}0.7823 & \cellcolor[rgb]{ .647,  .647,  .647}0.4172 & \cellcolor[rgb]{ .647,  .647,  .647}0.5580 \\
    { Metrics} & $\text{H}_\text{GE}$ & \cellcolor[rgb]{ .647,  .647,  .647}0.3734 & 0.9222 & 0.8388 & 0.6064 & \cellcolor[rgb]{ .647,  .647,  .647}0.6655 & \cellcolor[rgb]{ .647,  .647,  .647}0.2865 & 0.0378 \\
    & $\text{H}_\text{adj}$ & \cellcolor[rgb]{ .647,  .647,  .647}0.0366 & \cellcolor[rgb]{ .647,  .647,  .647}-0.1132 & \cellcolor[rgb]{ .647,  .647,  .647}0.0729 & \cellcolor[rgb]{ .647,  .647,  .647}0.0907 & 0.1432 & \cellcolor[rgb]{ .647,  .647,  .647}0.1010 & 0.1586 \\
    & $\text{LI}$    & \cellcolor[rgb]{ .647,  .647,  .647}0.0851 & \cellcolor[rgb]{ .647,  .647,  .647}0.0172 & \cellcolor[rgb]{ .647,  .647,  .647}0.0407 & \cellcolor[rgb]{ .647,  .647,  .647}0.0243 & 0.0025 & \cellcolor[rgb]{ .647,  .647,  .647}0.0058 & 0.0007 \\
    \midrule
    {Classifier-based } & $\text{KR}_\text{NL}$ & \cellcolor[rgb]{ .647,  .647,  .647}\textcolor[rgb]{ 1,  0,  0}{0.00} & \cellcolor[rgb]{ .647,  .647,  .647}\textcolor[rgb]{ 1,  0,  0}{0.32} & 0.99  & 0.99  & 0.00  & 1.00  & 0.00 \\
    {Performance Metrics} & GNB & \cellcolor[rgb]{ .647,  .647,  .647}\textcolor[rgb]{ 1,  0,  0}{0.00} & \cellcolor[rgb]{ .647,  .647,  .647}\textcolor[rgb]{ 1,  0,  0}{0.29} & 1.00  & 1.00  & 0.00  & 0.97  & 0.00 \\
    \midrule
          & ACC GCN & 81.45 $\pm$ 0.29 & 69.55 $\pm$ 0.1 & 40.02 $\pm$ 0.19 & 35.4 $\pm$ 0.04 & 83.02 $\pm$ 0.14 & 62.59 $\pm$ 0.14 & 62.32 $\pm$ 0.44 \\
    {GCN vs MLP2} & ACC MLP-2 & 74.68 $\pm$ 0.28 & 62.13 $\pm$ 0.1 & 36.36 $\pm$ 0.23 & 31.43 $\pm$ 0.04 & 86.62 $\pm$ 0.08 & 60.9 $\pm$ 0.11 & 64.25 $\pm$ 0.41 \\
          & \textbf{Diff Acc} & 6.77  & 7.42  & 3.66  & 3.97  & -3.60 & 1.69  & -1.93 \\
    \bottomrule
    \bottomrule
    \end{tabular}%
    }
  \label{tab:detailed_comparison_symmetric_affinity_large_scale}%
\end{table}%

\begin{table}[htbp]
  \centering
  \caption{Statistics for symmetric renormalized affinity matrix on small-scale datasets}
    \begin{tabular}{c|cc}
    \midrule
    \midrule
          & Total Error & Error Rate \\
    \midrule
    $\text{H}_\text{edge}$ & 6     & 0.33 \\
    $\text{H}_\text{node}$ & 4     & 0.22 \\
    $\text{H}_\text{class}$ & 4     & 0.22 \\
    $\text{H}_\text{agg}$ & 10    & 0.56 \\
    $\text{H}_\text{GE}$ & 10    & 0.56 \\
    $\text{H}_\text{adj}$ & 4     & 0.22 \\
    $\text{LI}$ & 8     & 0.44 \\
    \midrule
    $\text{KR} $ (NT0.5)  & 2     & 0.11 \\
    $\text{KR} $ (SST0.05) & 2     & 0.11 \\
    GNB (NT0.5) & 0     & 0.00 \\
    GNB (SST0.05) & 0     & 0.00 \\
    \bottomrule
    \bottomrule
    \end{tabular}%
  \label{tab:statistics_symmetric_affinity_small_scale}%
\end{table}%

\begin{table}[htbp]
  \centering
  \caption{Statistics for symmetric renormalized affinity matrix on large-scale datasets}
    \begin{tabular}{c|cc}
    \midrule
    \midrule
          & Total Error & Error Rate \\
    \midrule
    $\text{H}_\text{edge}$ & 10    & 0.71 \\
    $\text{H}_\text{node}$ & 10    & 0.71 \\
    $\text{H}_\text{class}$ & 6     & 0.43 \\
    $\text{H}_\text{agg}$ & 8     & 0.57 \\
    $\text{H}_\text{GE}$ & 6     & 0.43 \\
    $\text{H}_\text{adj}$ & 6     & 0.43 \\
    $\text{LI}$    & 6     & 0.43 \\
    \midrule
    $\text{KR} $ (NT0.5)  & 3     & 0.21 \\
    $\text{KR} $ (SST0.05) & 4     & 0.29 \\
    GNB (NT0.5) & 4     & 0.29 \\
    GNB (SST0.05) & 4     & 0.29 \\
    \bottomrule
    \bottomrule
    \end{tabular}%
  \label{tab:statistics_symmetric_affinity_large_scale}%
\end{table}%

\begin{table}[htbp]
  \centering
  \caption{Overall statistics for symmetric renormalized affinity matrix on small- and large-scale datasets}
    \begin{tabular}{c|cc}
    \midrule
    \midrule
          & Total Error & Error Rate \\
    \midrule
    $\text{H}_\text{edge}$ & 16    & 0.50 \\
    $\text{H}_\text{node}$ & 14    & 0.44 \\
    $\text{H}_\text{class}$ & 10    & 0.31 \\
    $\text{H}_\text{agg}$ & 18    & 0.56 \\
    $\text{H}_\text{GE}$ & 16    & 0.50 \\
    $\text{H}_\text{adj}$ & 10    & 0.31 \\
    $\text{LI}$    & 14    & 0.44 \\
    \midrule
    $\text{KR}_\text{NNGP}$ (NT0.5)  & 5     & 0.16 \\
    $\text{KR}_\text{NNGP}$ (SST0.05) & 6     & 0.19 \\
    GNB (NT0.5) & 4     & 0.13 \\
    GNB (SST0.05) & 4     & 0.13 \\
    \bottomrule
    \bottomrule
    \end{tabular}%
  \label{tab:statistics_symmetric_affinity_overall}%
\end{table}%

\newpage
\subsection{Experiments on Synthetic Graphs}
\label{appendix:synthetic_graphs}

To comprehensively investigate and corroborate the correlation between CPMs and the performance of G-aware models versus their corresponding G-agnostic models across the entire spectrum of homophily levels, we conduct experiments with the synthetic graphs. The data generation process is similar to \cite{luan2022revisiting}.

\paragraph{Data Generation \& Experimental Setup}
We generated a total of $280$ graphs with 28 different levels of edge homophily, ranging from 0.005 to 0.95, and generated 10 graphs for each homophily level. Each graph consisted of 5 classes, with 400 nodes in each class. For nodes in each class, we randomly generated 4000 intra-class edges and [$\frac{4000}{H_\text{edge}(\mathcal{G})} -4000$] inter-class edges, and assigned features to the nodes using the \textit{Cora, CiteSeer, PubMed, Chameleon, Squirrel, Film} datasets. We then randomly split the nodes into train/validation/test sets in a 60\%/20\%/20\% ratio. We trained GCN, SGC-1, MLP-2, and MLP-1 models on the synthetic graphs with fine-tuned hyperparameters as \cite{luan2022revisiting}. For each edge homophily level $H_\text{edge}(\mathcal{G})$, we computed the average test accuracy of the 4 models, as well as  $\text{KR}_\text{L},\text{KR}_\text{NL}$ and other homophily metrics. The comparisons of $\text{KR}_\text{L}$, $\text{KR}_\text{NL}$ and the performance of GCN vs. MLP-2, SGC-1 vs. MLP-1 according to edge homophily were shown in Figure \ref{fig:synthetic_graphs_cpms}.

It can be observed in Figure \ref{fig:synthetic_graphs_cpms} that the points where $\text{KR}_\text{L}$ intersects NT0.5 or SST0.05 (green) and the intersections of SGC-1 and MLP-1 performance (red) are perfectly matched and the curve of  $\text{KR}_\text{L}$ share the similar U-shape as SGC-1, so do $\text{KR}_\text{NL}$ curve (blue) and GCN and MLP-2 performance curves (black) \footnote{We only draw the vertical dot lines for the intersection of $\text{KR}_\text{L}$ and NT0.5 in order to keep the figures clear. The corresponding x-values for other intersections can be observed from the figures.}. This indicates that the advantages and disadvantages of G-aware models over G-agnostic models can be better revealed by CPMs at different homophily levels than the baseline homophily metrics shown in Figure \ref{fig:synthetic_graphs_baseline_homo}.

In Figure \ref{fig:synthetic_graphs_baseline_homo}, we can see that the curves of node homophily (orange), class homophily (pink), generalized edge homophily (yellow) and adjusted homophily (blue) are almost linear increasing, which does not reflect the U-shaped performance curve of GNNs' performance. Although the curves for aggregation homophily (purple) and label informativeness (grey) have a rebound in low homophily area, they are unable to provide a suitable threshold value and fails to capture the intersection points.

Since the values of CPMs are either (very close to) 0 or (very close to) 1 and there do not exist enough intermediate values between 0 and 1, we do not plot the relationship between GNNs performance and CPMs as Figure 2 in \cite{luan2022revisiting}. 

\begin{figure}[htbp!]
    \centering
     {
     \subfloat[Cora]{
     \captionsetup{justification = centering}
     \includegraphics[width=0.5\textwidth]{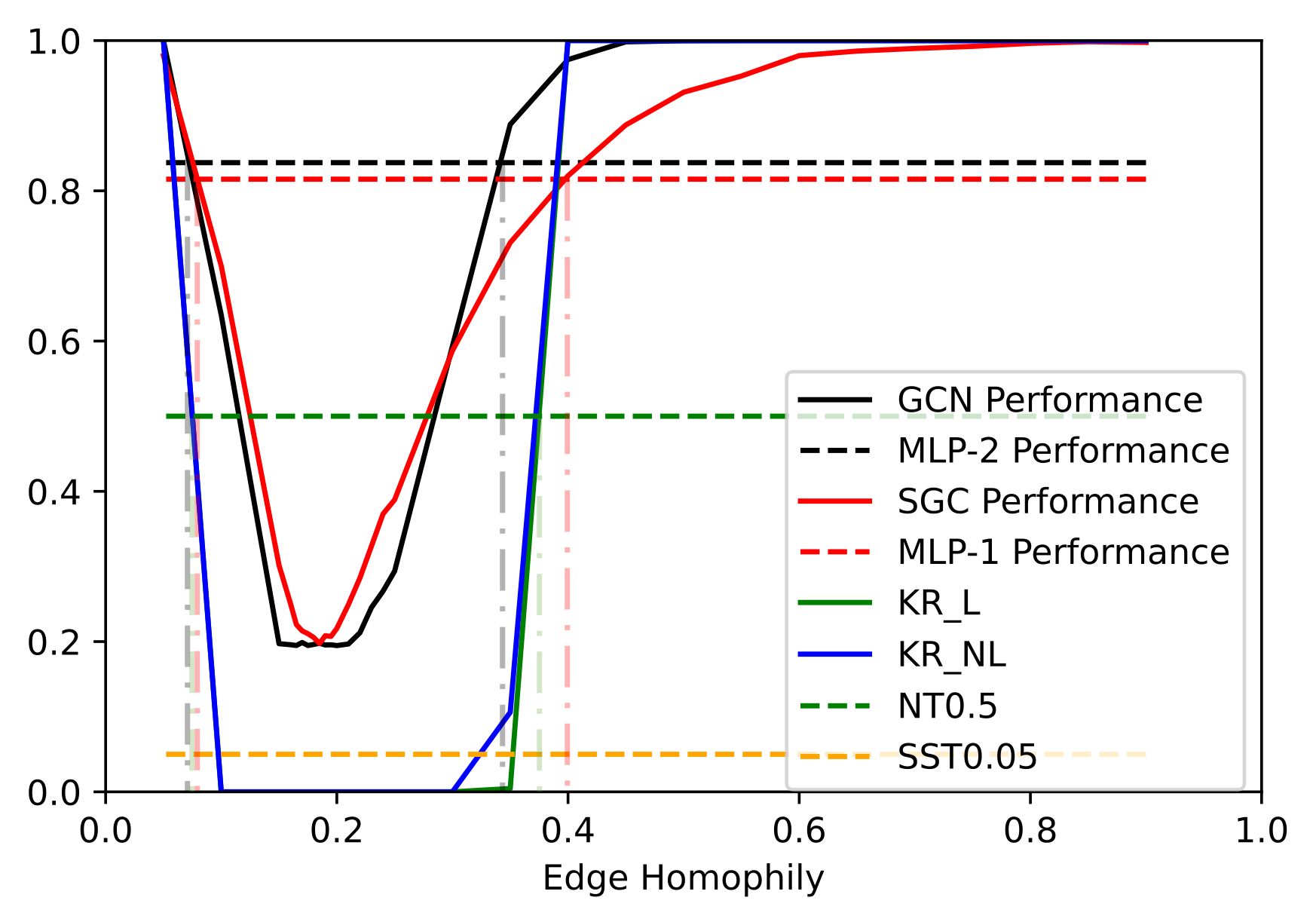}
     }
     \subfloat[CiteSeer]{
     \captionsetup{justification = centering}
     \includegraphics[width=0.5\textwidth]{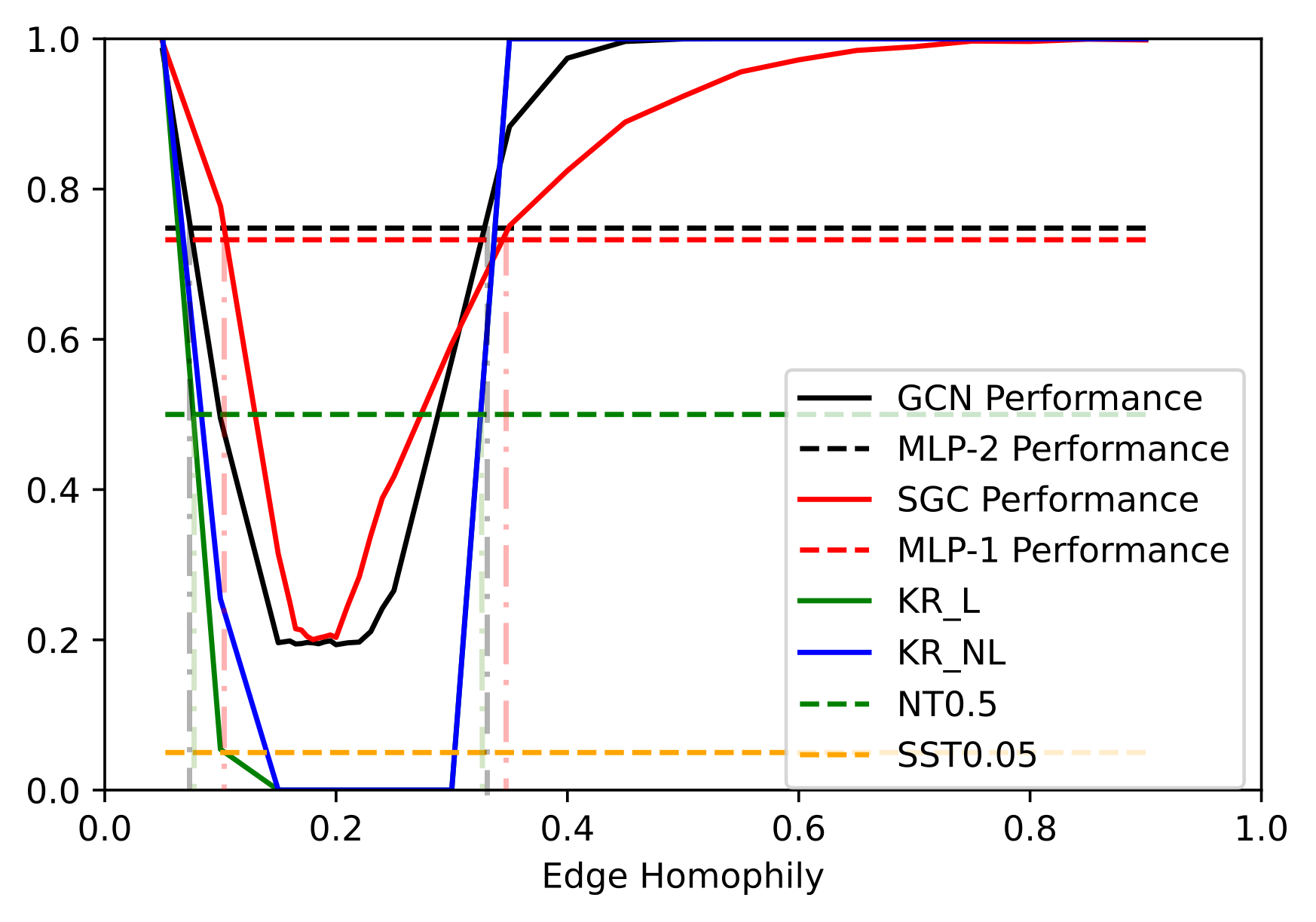}
     }\\
     \subfloat[PubMed]{
     \captionsetup{justification = centering}
     \includegraphics[width=0.5\textwidth]{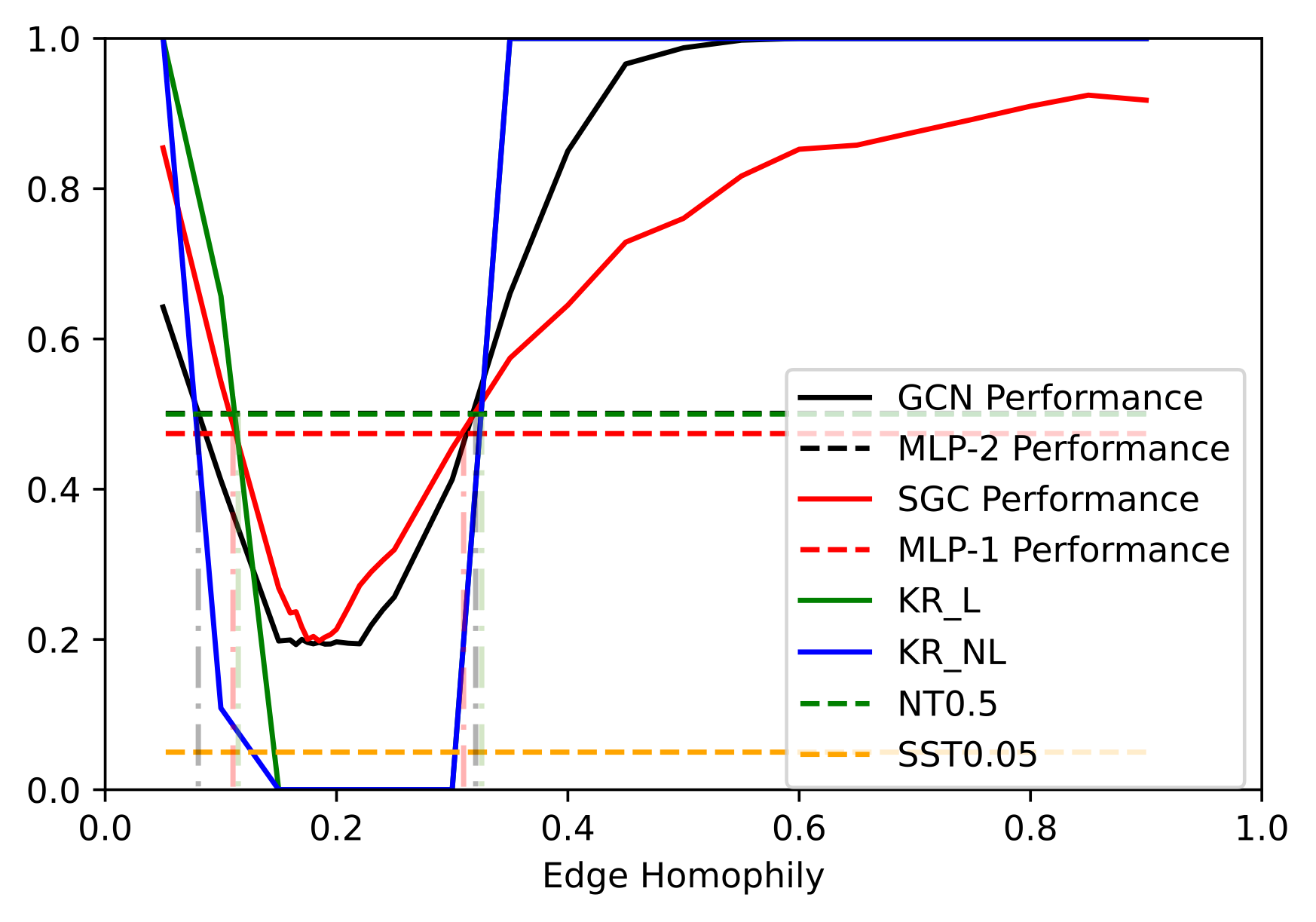}
     } 
     \subfloat[Chameleon]{
     \captionsetup{justification = centering}
     \includegraphics[width=0.5\textwidth]{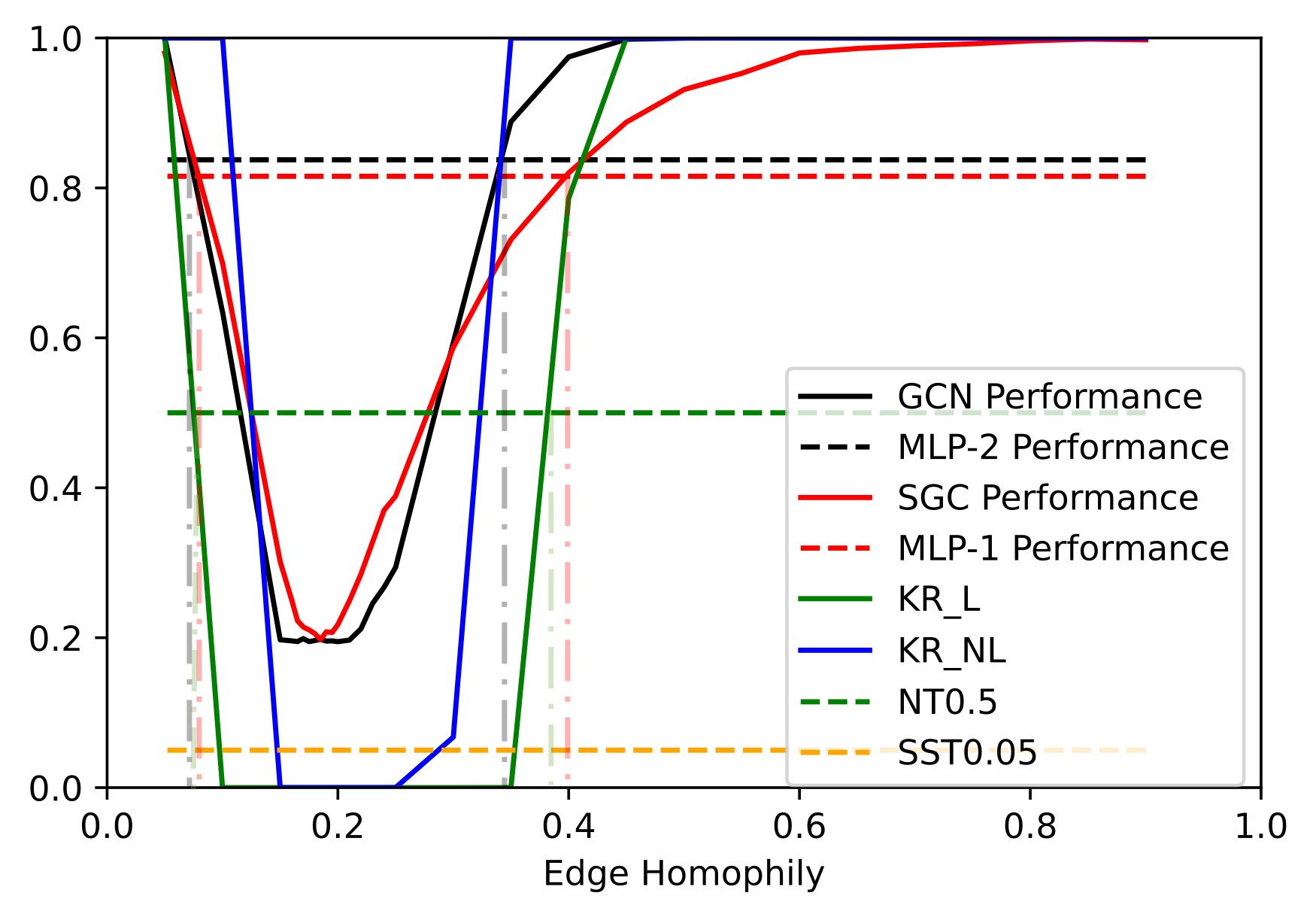}
     } 
     \\
     \subfloat[Squirrel]{
     \captionsetup{justification = centering}
     \includegraphics[width=0.5\textwidth]{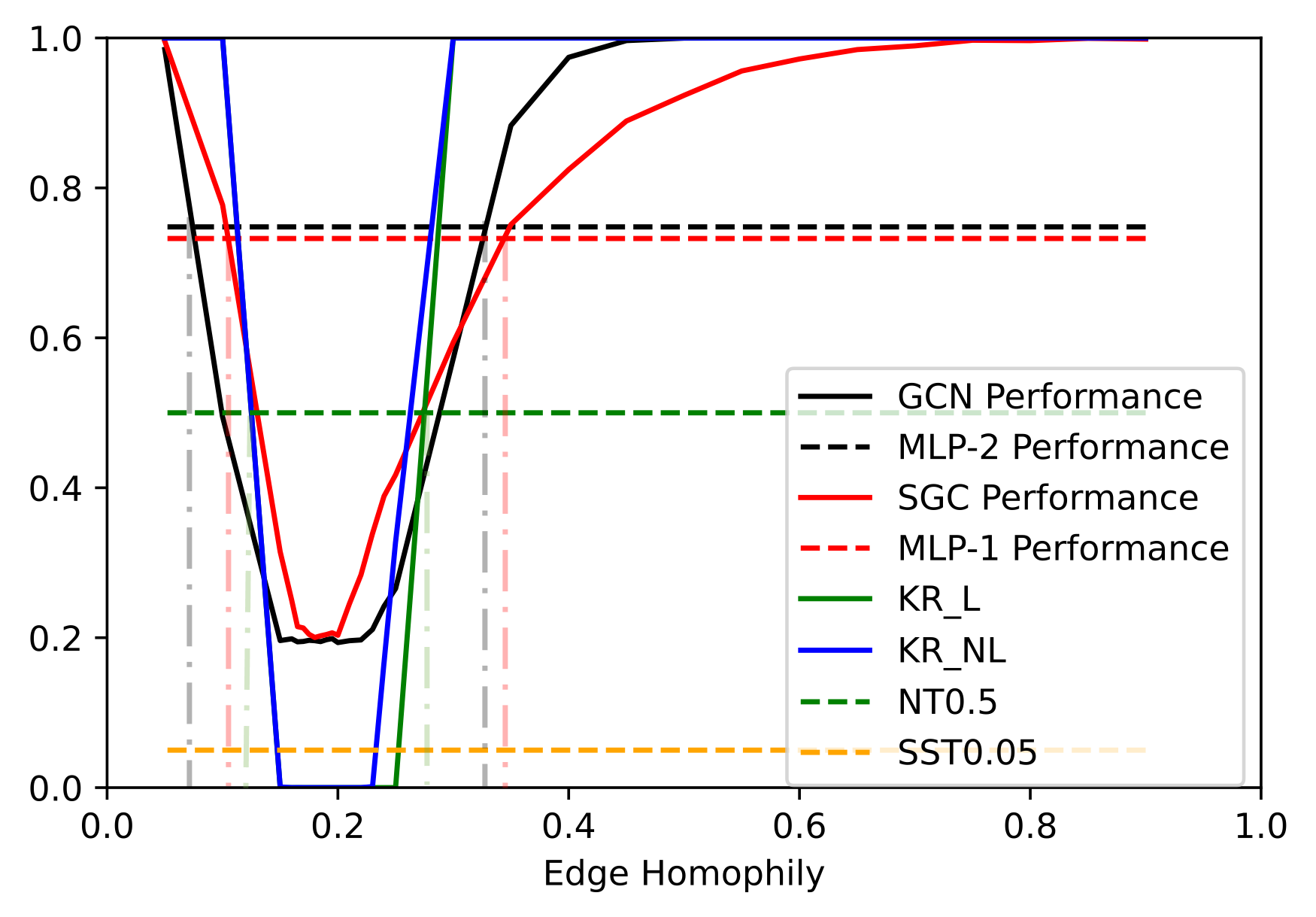}
     } 
     \subfloat[Film]{
     \captionsetup{justification = centering}
     \includegraphics[width=0.5\textwidth]{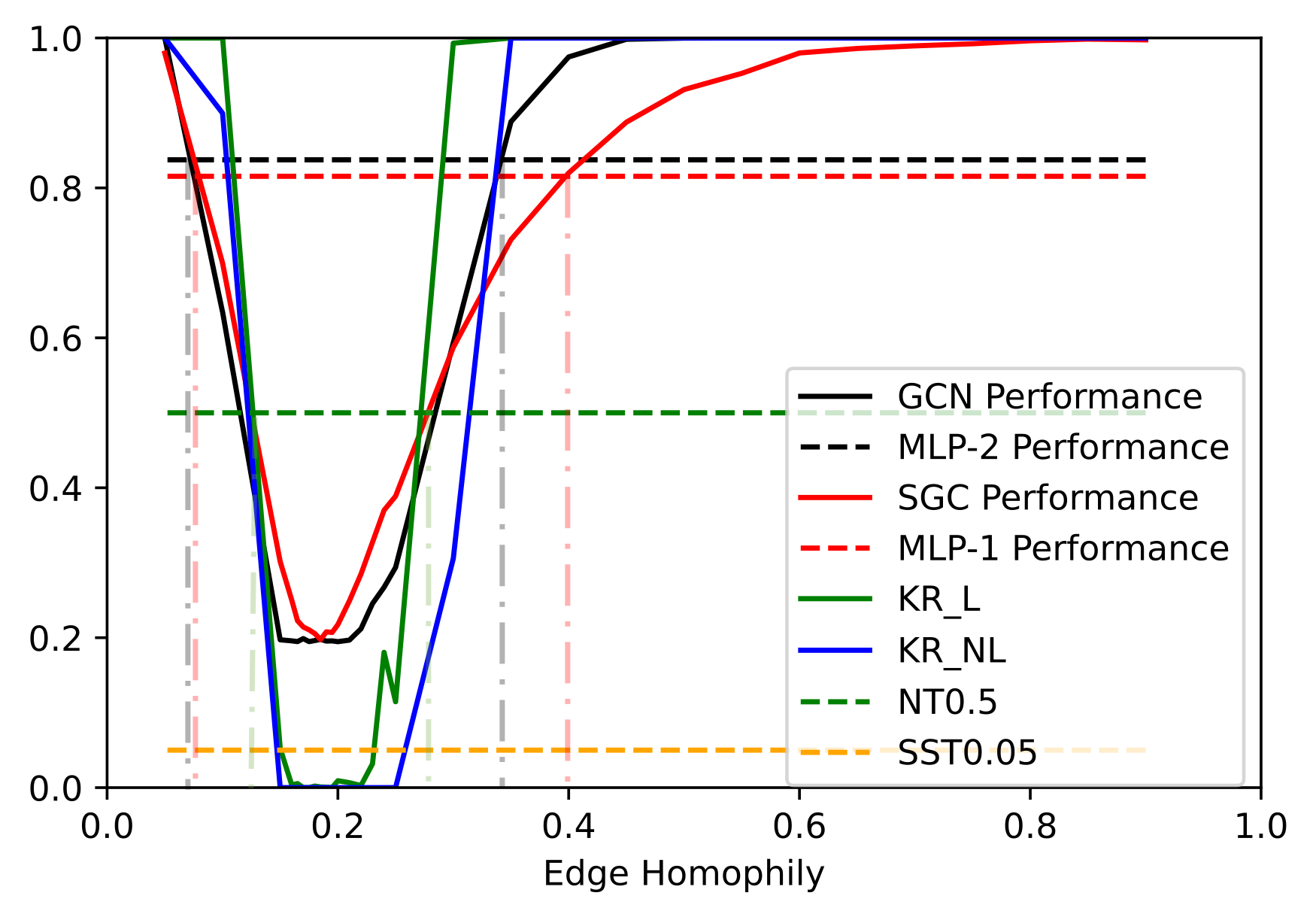}
     } 
     }
     \caption{Results and comparisons of $\text{KR}_\text{L}$ and $\text{KR}_\text{NL}$ on synthetic graphs}
     \label{fig:synthetic_graphs_cpms}
\end{figure}

\begin{figure}[htbp!]
    \centering
     {
     \subfloat[Cora]{
     \captionsetup{justification = centering}
     \includegraphics[width=0.5\textwidth]{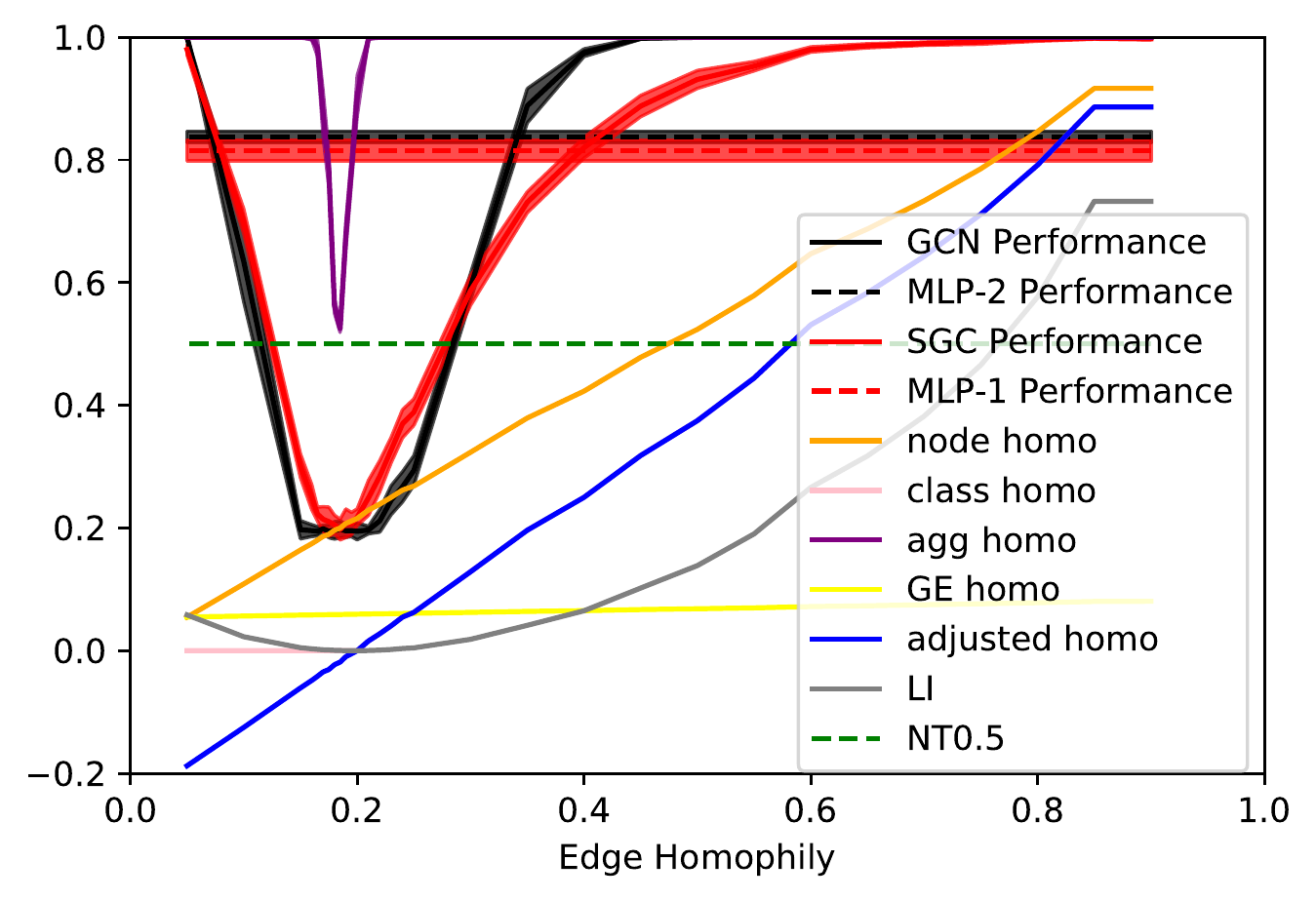}
     }
     \subfloat[CiteSeer]{
     \captionsetup{justification = centering}
     \includegraphics[width=0.5\textwidth]{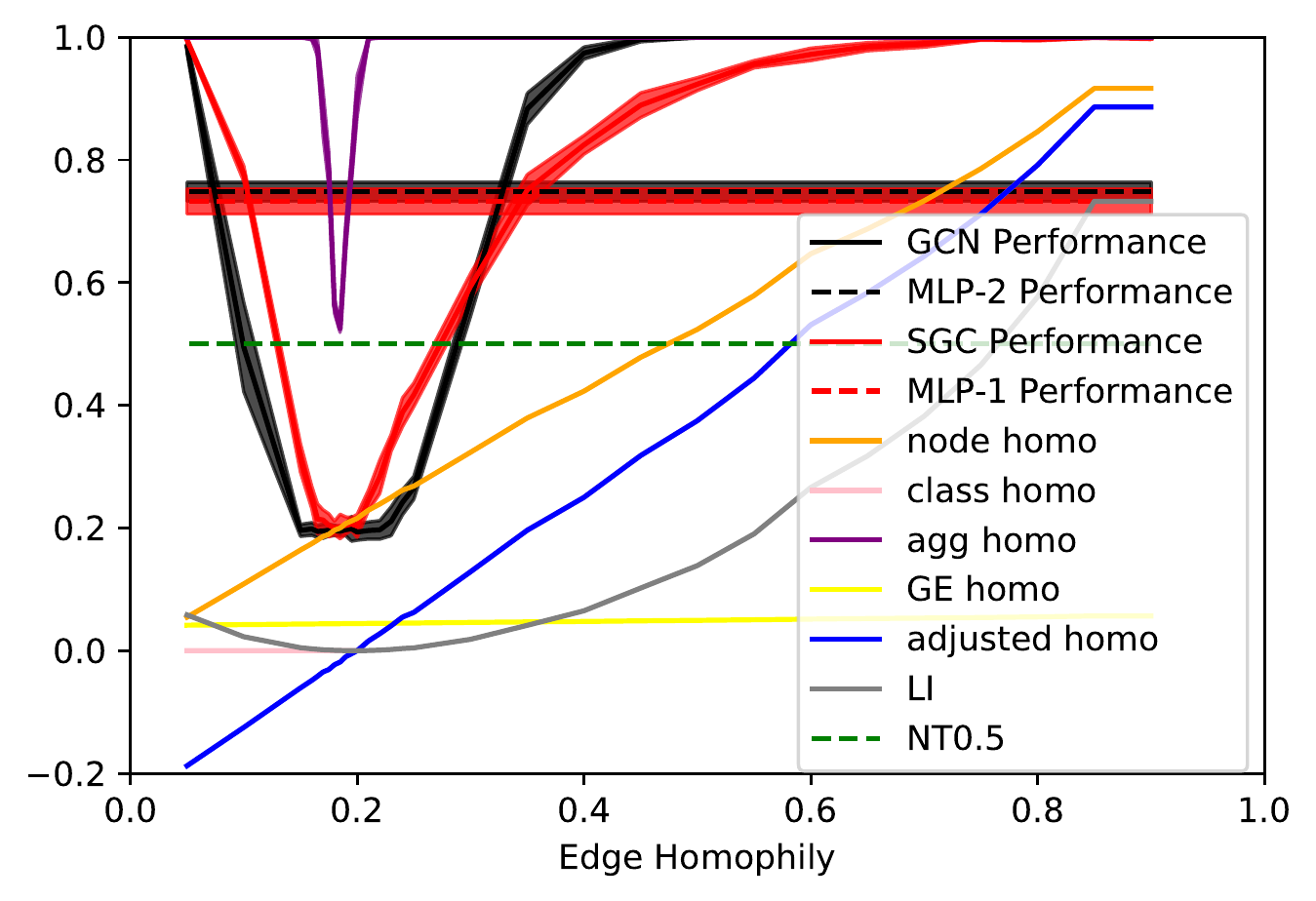}
     }\\
     \subfloat[PubMed]{
     \captionsetup{justification = centering}
     \includegraphics[width=0.5\textwidth]{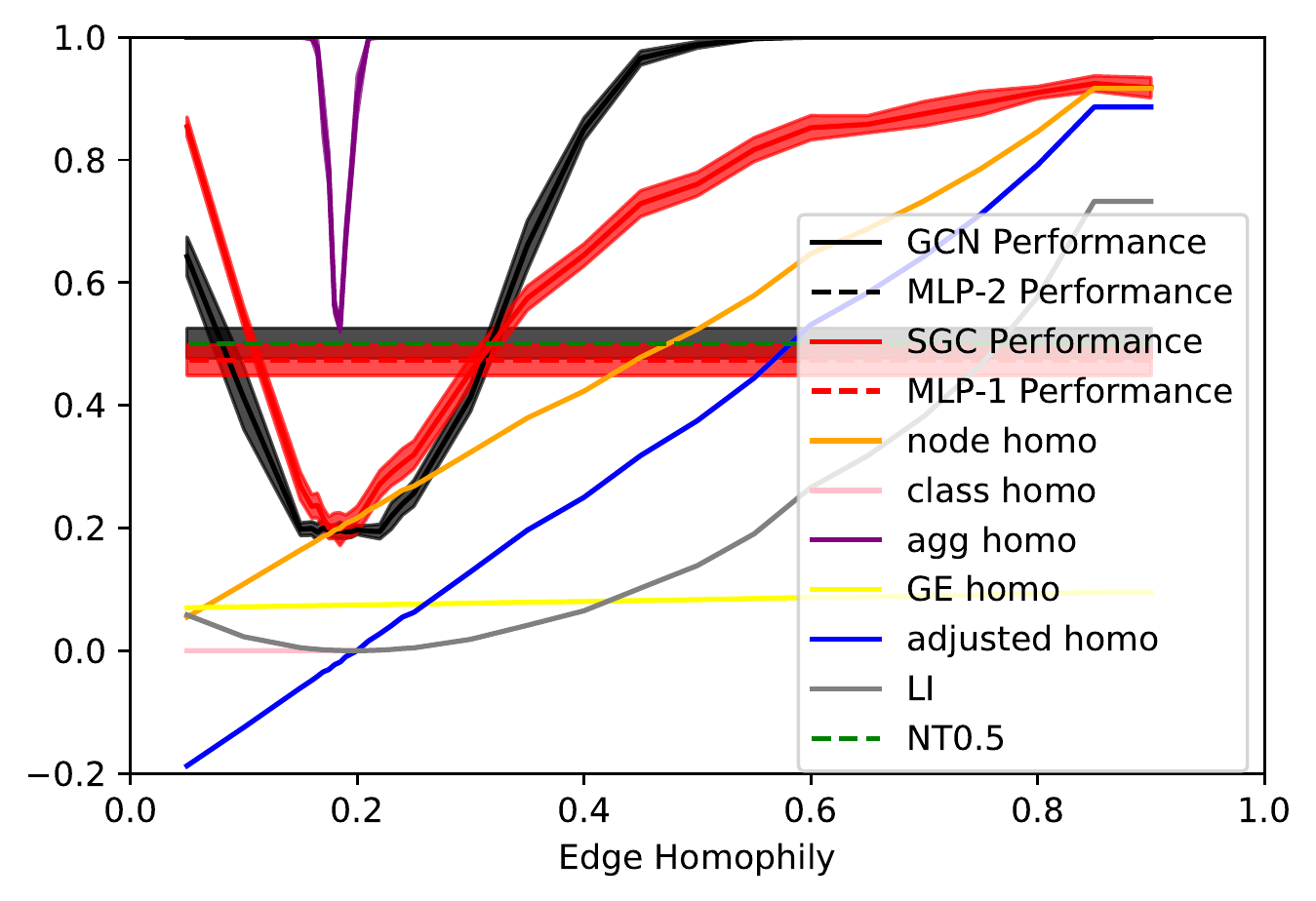}
     } 
     \subfloat[Chameleon]{
     \captionsetup{justification = centering}
     \includegraphics[width=0.5\textwidth]{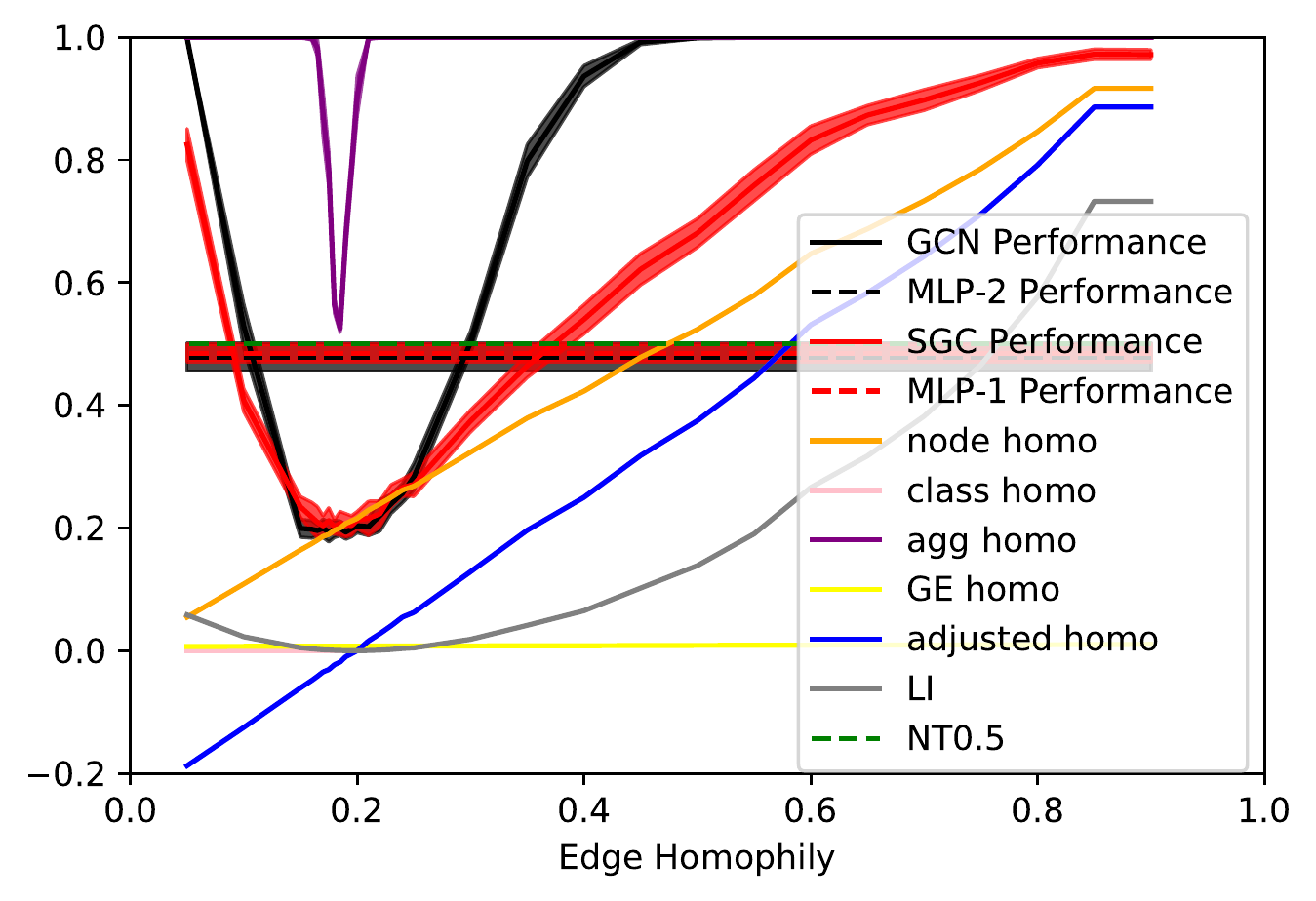}
     } 
     \\
     \subfloat[Squirrel]{
     \captionsetup{justification = centering}
     \includegraphics[width=0.5\textwidth]{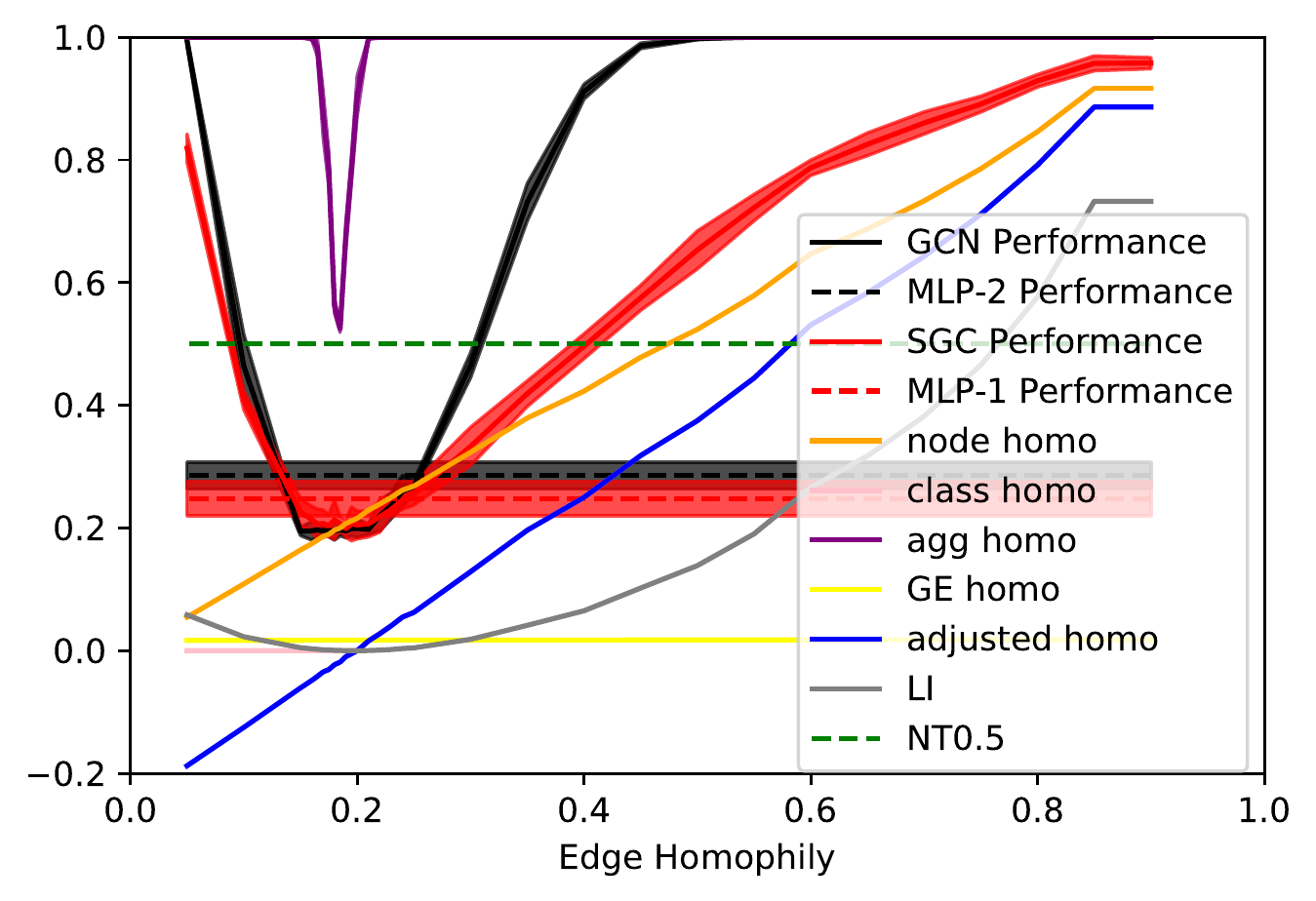}
     } 
     \subfloat[Film]{
     \captionsetup{justification = centering}
     \includegraphics[width=0.5\textwidth]{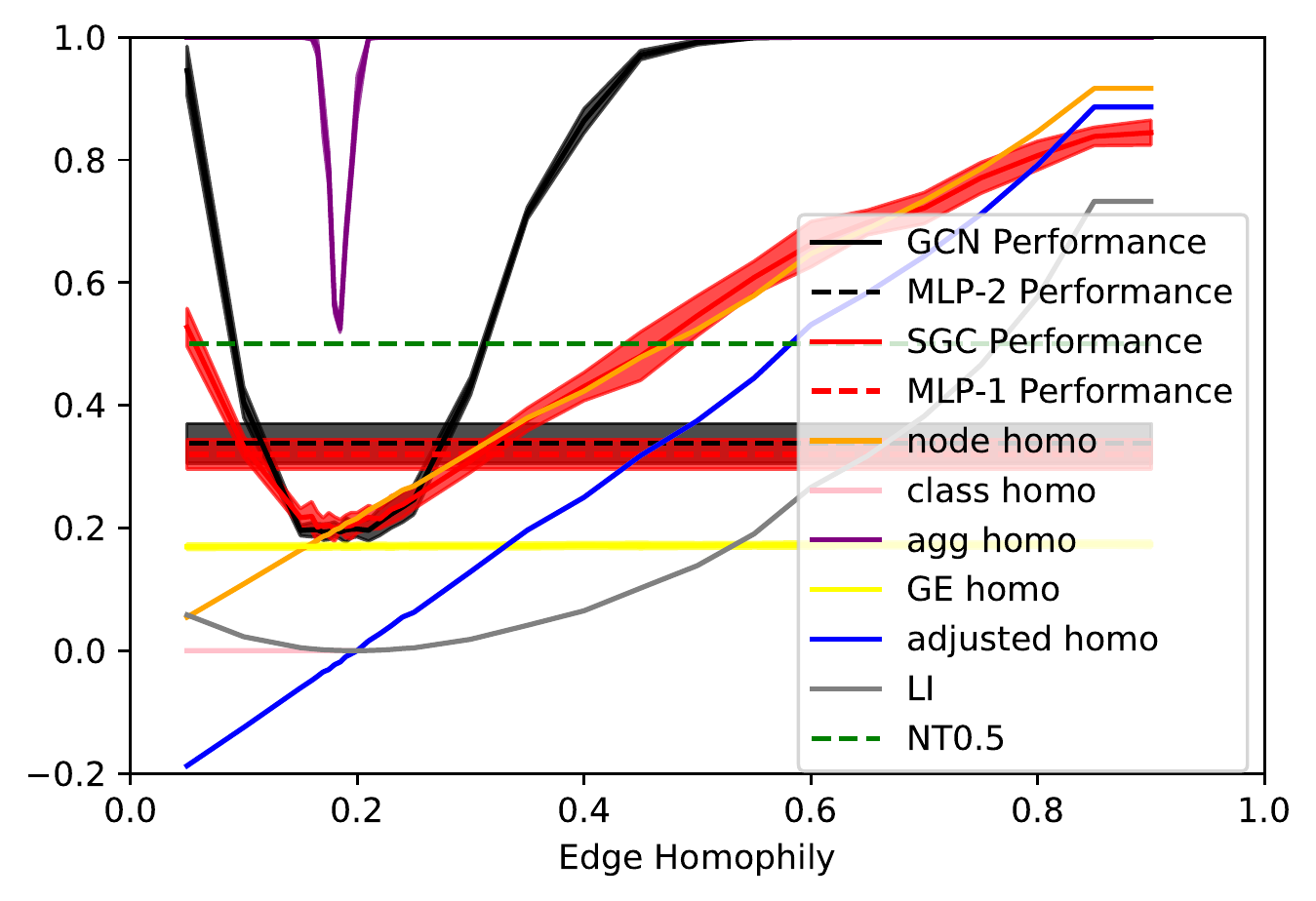}
     } 
     }
     \caption{Results on Synthetic Graphs}
     \label{fig:synthetic_graphs_baseline_homo}
\end{figure}

\end{document}